\newtheorem{theor}{Theorem}[section]
\newtheorem{predl}[theor]{Proposition}
\newtheorem{lem}[theor]{Lemma}
\newtheorem{cor}[theor]{Corollary}
{\theoremstyle{definition}
\newtheorem{defi}[theor]{Definition}
\newtheorem{ex}[theor]{Example} }
\numberwithin{equation}{section}
\begin{document}

\allowdisplaybreaks

\newcommand{\arXivNumber}{1412.8116}

\renewcommand{\PaperNumber}{084}

\FirstPageHeading

\ShortArticleName{Bruhat Order in the Full Symmetric $\mathfrak{sl}_n$ Toda Lattice on Partial Flag Space}

\ArticleName{Bruhat Order in the Full Symmetric $\boldsymbol{\mathfrak{sl}_n}$ Toda Lattice\\ on Partial Flag Space}

\Author{Yury B.~CHERNYAKOV~$^{\dag^1\dag^2}$, Georgy I.~SHARYGIN~$^{\dag^1\dag^2\dag^3}$ and Alexander S.~SORIN~$^{\dag^2\dag^4\dag^5}$}

\AuthorNameForHeading{Yu.B.~Chernyakov, G.I.~Sharygin and A.S.~Sorin}

\Address{$^{\dag^1}$~Institute for Theoretical and Experimental Physics, 25 Bolshaya Cheremushkinskaya,\\
\hphantom{$^{\dag^1}$}~117218, Moscow, Russia}
\EmailDD{\href{mailto:chernyakov@itep.ru}{chernyakov@itep.ru}, \href{mailto:sharygin@itep.ru}{sharygin@itep.ru}}

\Address{$^{\dag^2}$~Joint Institute for Nuclear Research, Bogoliubov Laboratory of Theoretical Physics,\\
\hphantom{$^{\dag^2}$}~141980, Dubna, Moscow region, Russia}
\EmailDD{\href{mailto:sorin@theor.jinr.ru}{sorin@theor.jinr.ru}}

\Address{$^{\dag^3}$~Lomonosov Moscow State University, Faculty of Mechanics and Mathematics,\\
\hphantom{$^{\dag^3}$}~GSP-1, 1~Leninskiye Gory, Main Building, 119991, Moscow, Russia}

\Address{$^{\dag^4}$~National Research Nuclear University MEPhI (Moscow Engineering Physics Institute),\\
\hphantom{$^{\dag^4}$}~31 Kashirskoye Shosse, 115409 Moscow, Russia}

\Address{$^{\dag^5}$~Dubna International University, 141980, Dubna, Moscow region, Russia}

\ArticleDates{Received February 15, 2016, in f\/inal form August 10, 2016; Published online August 20, 2016}

\Abstract{In our previous paper [\textit{Comm. Math. Phys.} \textbf{330} (2014), 367--399] we described the asymptotic behaviour of trajectories of the full symmetric $\mathfrak{sl}_n$ Toda lattice in the case of distinct eigenvalues of the Lax matrix. It turned out that it is completely determined by the Bruhat order on the permutation group. In the present paper we extend this result to the case when some eigenvalues of the Lax matrix coincide. In that case the trajectories are described in terms of the projection to a partial f\/lag space where the induced dynamical system verif\/ies the same properties as before: we show that when $t\to\pm\infty$ the trajectories of the induced dynamical system converge to a f\/inite set of points in the partial f\/lag space indexed by the Schubert cells so that any two points of this set are connected by a trajectory if and only if the corresponding cells are adjacent. This relation can be explained in terms of the Bruhat order on multiset permutations.}

\Keywords{full symmetric Toda lattice; Bruhat order; integrals and semi-invariants; partial f\/lag space; Morse function; multiset permutation}

\Classification{06A06; 37D15; 37J35}

\section{Introduction}
\subsection{Toda system}

The present paper is devoted to the study of the full symmetric $\mathfrak{sl}_n$ Toda lattice which can be considered as a straightforward generalization of the non-periodic Toda lattice. Let us brief\/ly remind that the non-periodic Toda lattice (Toda chain) is the dynamical system of $n$ particles on a straight line with interactions between neighbours. This system was f\/irst considered in \cite{T1, T2}; in paper \cite{H} $n$ functionally independent integrals of the motion were found. The involution of the integrals was proved in \cite{F1, F2}.

\newpage

The non-periodic Toda lattice has the Lax representation and the Lax operator matrix has the following form in Flaschka's variables:
\begin{gather*} 
 L = \left(
\begin{matrix}
 b_{1} & a_{1} & 0 & \cdots &0\\
 a_{1} & b_{2} & a_{2} &\cdots & 0\\
 0 & \cdots & \cdots & \cdots& 0\\
 0 & \cdots & a_{n-2} & b_{n-1} &a_{n-1}\\
 0 & 0 & \cdots & a_{n-1} & b_{n}
\end{matrix}
\right)
\end{gather*}
One can show that the Hamilton equations are equivalent to the following matrix equation
\begin{gather}\label{LM}
\dot{L} = [B,L],
\end{gather}
where $B$ is
\begin{gather*} 
B = \left(
\begin{matrix}
 0 & -a_{1} & 0 & \cdots &0\\
 a_{1} & 0 & -a_{2} & \cdots & 0\\
 0 & \cdots & \cdots & \cdots & 0\\
 0 & \cdots & a_{n-2} & 0 & -a_{n-1}\\
 0 & 0 & \cdots & a_{n-1} & 0
\end{matrix}
\right).
\end{gather*}
Equation (\ref{LM}) is the compatibility condition of the system
\begin{gather*} 
L \Psi = \Psi \Lambda,\qquad
\frac{\partial}{\partial t} \Psi = B \Psi,
\end{gather*}
where $\Psi \in {\rm SO}(n,\mathbb{R})$ and $\Lambda$ is the eigenvalue matrix of the Lax operator.

One can show that the non-periodic Toda lattice can be treated as a dynamical system on the orbits of the coadjoint action of the Borel subgroup $B^+_n$ of ${\rm SL}(n,\mathbb R)$ (equal to the group of upper triangular matrices with the determinant equal to~$1$) see \cite{Ad,A, K1, S}. It is also possible to give an alternative description of the phase space of this dynamical system if we identify the algebra $\mathfrak{sl}_n$ with its dual using the Killing form on $\mathfrak{sl}_n$. In this way one obtains generalizations of the classical Toda lattice (tri-diagonal Toda chain). Namely, use the following identif\/ications:
\begin{gather*}
\mathfrak{sl}_n=\mathfrak{so}_n \oplus \mathfrak{b}^+_n,\qquad
\mathfrak{sl}^{\ast}_{n} = (\mathfrak{b}^{+}_n)^{\ast} \oplus (\mathfrak{so}_n)^{\ast} \cong \operatorname{Symm}_n \oplus \mathfrak{n}_n^{+},\\
(\mathfrak{b}^{+}_n)^{\ast} \cong (\mathfrak{so}_n)^{\perp} = \operatorname{Symm}_n, \qquad (\mathfrak{so}_n)^{\ast} \cong (\mathfrak{b}^{+}_n)^{\perp} = \mathfrak{n}_n^{+},
\end{gather*}
where $\mathfrak{b}^+_n$ is the algebra of the upper triangular matrices and $\mathfrak{n}_n^{+}$ is the algebra of the strictly upper triangular matrices. As one can see this identif\/ication maps the space of symmetric matrices into the dual space of the Lie algebra of the Borel subgroup: $\operatorname{Symm}_n\cong(\mathfrak{b}_n^+)^*$ and hence we can introduce Kirillov--Kostant symplectic structure on $\operatorname{Symm}_n$ pulling it back from $(\mathfrak{b}_n^+)^*$; it is by restriction of this pullback that one obtains the symplectic structure on the tridiagonal symmetric matrices used in the Toda system.

Based on this approach one can get further generalizations of the non-periodic Toda lattice just by plugging in other Cartan pairs in this construction. In particular in this way one obtains the full symmetric $\mathfrak{sl}_n$ Toda lattice (elsewhere FS Toda lattice). In contrast to the Lax matrix of the non-periodic Toda lattice the Lax matrix of the FS Toda lattice is not a tri-diagonal symmetric matrix but an arbitrary symmetric matrix:
\begin{gather*} 
L = \left(
\begin{matrix}
 a_{11} & a_{12} & a_{13} & \cdots & a_{1n}\\
 a_{12} & a_{22} & a_{23} & \cdots & a_{2n}\\
 \cdots & \cdots & \cdots & \cdots & \cdots\\
 a_{1n} & a_{2n} & a_{3n} & \cdots & a_{nn}
\end{matrix}
\right).
\end{gather*}
As one knows every symmetric matrix can be diagonalized in an orthogonal basis, i.e., the Lax matrix $L$ can be decomposed as follows
\begin{gather*} 
L = \Psi \Lambda \Psi^{-1},
\end{gather*}
where $\Psi$ is an orthogonal matrix. It turns out, that the FS Toda lattice is also integrable (we refer the reader to \cite{A, CS,DLNT, EFS, FS} for details). Moser in~\cite{Mos} showed that as $t \rightarrow -\infty$ the Lax operator of the usual tri-diagonal Toda lattice converges to the diagonal matrix with eigenvalues put in increasing order, and when $t \rightarrow +\infty$ it converges to the diagonal matrix with decreasing eigenvalues. This property was further studied in~\cite{CSS,DNT, FS,KM,KW}.

\subsection{Outline of the paper}

The present paper deals with the FS Toda lattice in an arbitrary dimension~$n$. It is a natural continuation of the previous one~\cite{CSS} in which we study the behavior of the Lax operator $L$ as $t \rightarrow \pm\infty$; this question is again in the center of our attention here. The dif\/ference is that in \cite{CSS} we treated the Lax matrices $L$ with $n$ distinct eigenvalues and now we let the eigenvalues of~$L$ coincide. In the previous case the Bruhat order on the symmetric group $S_n$ played a crucial r\^ole. On the other hand, as one knows there exists a Bruhat order on the permutations of multisets induced in a natural way from the Bruhat order on the symmetric group (one can regard this phenomenon as a combinatoric manifestation of the fact that the Schubert cells in the full f\/lag space project into the Schubert cells in the partial f\/lag manifolds). So it is natural to ask if this (induced) order has something to do with the asymptotic behavior of the Lax matrix with coinciding eigenvalues.

We show that one can regard the restriction of the FS Toda lattice on the space of symmetric matrices with multiple eigenvalues as a gradient system on a partial f\/lag space so that the set of singular points of the gradient vector f\/ield is naturally identif\/ied with the permutations of multisets (see Section~\ref{s:tod}, Proposition~\ref{morsepartflag}) and we show that the gradient system at hand verif\/ies the same properties as the system with distinct eigenvalues. That is we prove that for $t\to\pm\infty$ the trajectories of the FS Toda lattice converge to a f\/inite set of singular points in the partial f\/lag space indexed by the Schubert cells so that any two points of this set are connected by a~trajectory if and only if the corresponding cells are adjacent (see Theorem~\ref{prop:traject1}).

The paper is organized as follows: in Section~\ref{s:bru} we give a list of facts from the geometry of partial f\/lag spaces, describe the Bruhat order and introduce Schubert cells; all this is used in the rest of the paper. In Section~\ref{s:tod} we consider the FS Toda lattice and describe how it induces a~gradient f\/low on partial f\/lag manifolds so that some elementary facts from the Morse theory are applicable. Finally, in Section~\ref{ss:fin:ex} we give two simple explicit examples that illustrate our theorem and prove the main results of this paper (see Section~\ref{ss:fin:gc}).

\subsection{Notation and assumptions}\label{ss:intro:not}
In what follows, unless otherwise stated, all manifolds will be assumed smooth and compact (without boundary), all vector spaces are assumed to be real and f\/inite-dimensional.

We shall also consider the full symmetric Toda system in a generic dimension~$n$, so we let~$L$ denote the real symmetric $n\times n$ Lax matrix of the system and $\Lambda$ the diagonal matrix of eigenvalues of~$L$. As $L$ is symmetric, its eigenvalues are real; so we assume that they are ordered naturally in $\Lambda$.

We shall use the notation ${\rm O}(n, \mathbb R)$ (respectively ${\rm SO}(n, \mathbb R)$) for the group of $n$-dimensional orthogonal matrices (respectively the group of orthogonal matrices with positive determinant), and $\mathfrak{so}(n)$ will denote its Lie algebra that is the space of real antisymmetric $n\times n$-matrices. Similarly ${\rm SL}(n, \mathbb R)$ will denote the $n$-dimensional special linear group over real numbers and $B_n^+\subset {\rm SL}(n, \mathbb R)$ (respectively $B_n^-$) the Borel group of upper (respectively lower) triangular matrices with unit determinant.

\section{Bruhat order and Schubert cells}\label{s:bru}
The notion of the Bruhat order, Schubert cells and their generalizations have long been crucial instruments in the research of geometry of Lie groups and homogenous spaces. In what follows we give a brief introduction to the subject. In most part of this section we draw on the exposition from classical books, see~\cite{F} and references therein.

\subsection{Flag spaces, Grassmanians and their generalizations}\label{ss:bru:fs&gr}
First we recall some def\/initions:
\begin{defi}\label{df:flags}
Let $I=(i_1,i_2,\dots,i_k)$ be a set of positive integers, $i_1+i_2+\dots+i_k=n$. Then the real partial f\/lag space ${\rm Fl}_{i_1,i_2,\dots,i_k}(\mathbb R)$ is the set of all sequences of vector subspaces
\begin{gather*}
\big\{0=V_0\subset V_1\subset V_2\subset V_3\subset \dots\subset V_k=\mathbb R^n\big\}
\end{gather*}
such that $\dim V_l=i_1+i_2+\dots+i_l$.
\end{defi}

Let us make a few remarks before we proceed. First of all, similar def\/initions work literally for vector spaces over arbitrary f\/ield $\Bbbk$ giving us the f\/lag spaces ${\rm Fl}_{i_1,\dots,i_k}(\Bbbk)$. These spaces are easy to def\/ine on the level of sets; however, their topology depends a lot on $\Bbbk$. Below we shall work only with $\Bbbk=\mathbb{R}$ or $\Bbbk=\mathbb{C}$ with a usual topology. Second, an important particular case of this construction is $k=n$, $i_1=i_2=\dots=i_n=1$, which is called the \textit{full flag manifold} and is denoted by ${\rm Fl}_n(\mathbb R)$ (or in a general case ${\rm Fl}_n(\Bbbk)$). Another important particular case of the f\/lag space is the case $k=1$, i.e., when there is only one subspace chosen inside $\mathbb R^n$. In this case the f\/lag manifold is called \textit{the Grassman space} or just \textit{Grassmannian} and is denoted by ${\rm Gr}_{d,n}(\mathbb R)$, where $d=i_1$ is the dimension of the subspaces we choose.

An important property of f\/lag spaces is that they are homogeneous spaces, the quotient spaces of the groups of linear transformations of $\mathbb R^n$ by their subgroups. The isomorphism is induced by the choice of a compatible basis in the subspaces $V_i$. In particular,
\begin{gather*}
{\rm Fl}_n(\mathbb R)={\rm SL}(n,\mathbb R)/B_n^+,\qquad {\rm Gr}_{d,n}(\mathbb R)={\rm SL}(n,\mathbb R)/P,
\end{gather*}
for a \textit{maximal parabolic subgroup} $P\subset {\rm SL}(n,\mathbb R)$. Similarly, the general partial f\/lag spaces are isomorphic to the quotient spaces of ${\rm SL}(n,\mathbb R)$ by a suitable parabolic subgroup. On the other hand, if we choose an orthogonal basis of the corresponding subspaces so that the orientation on~$\mathbb R^n$ would match with the given one, we obtain homeomorphisms with the quotients of a~special orthogonal group. For instance
\begin{gather*}
{\rm Fl}_n(\mathbb R)={\rm O}(n;\mathbb R)/T_n={\rm SO}(n, \mathbb R)/T^+_n,\\
{\rm Gr}_{d,n}(\mathbb R)={\rm O}(n;\mathbb R)/{\rm O}(d;\mathbb R)\times {\rm O}(n-d;\mathbb R)\\
\hphantom{{\rm Gr}_{d,n}(\mathbb R)}{}
={\rm SO}(n;\mathbb R)/{\rm SO}(n,\mathbb R)\bigcap\left({\rm O}(d;\mathbb R)\times {\rm O}(n-d;\mathbb R)\right).
\end{gather*}
Here we have denoted by $T_n$ the group of diagonal matrices with eigenavalues equal to $\pm1$, and~$T^+_n$ the intersection ${\rm SO}(n,\mathbb R)\cap T_n$. In both cases we see that the groups ${\rm SL}(n,\mathbb R)$ and ${\rm SO}(n,\mathbb R)$ act transitively on the f\/lag spaces.

The important r\^ole of the f\/lag spaces in our investigation follows from the next observation:
\begin{predl}
{The space of real symmetric $n \times n$ matrices with a fixed set of eigenvalues $\lambda_1,\dots,\lambda_k$ with multiplicities $i_1,\dots,i_k$ can be identified with the partial flag manifold ${\rm Fl}_{i_1,\dots,i_k}$.}
\end{predl}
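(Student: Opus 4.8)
The plan is to realise both spaces as the same homogeneous space of the orthogonal group and to write down an explicit equivariant bijection between them. The starting point is the spectral theorem: any real symmetric matrix $L$ with the prescribed spectrum decomposes $\mathbb R^n$ into an orthogonal direct sum $\mathbb R^n=E_1\oplus\dots\oplus E_k$ of its eigenspaces, where $E_l=\ker(L-\lambda_l I)$ has dimension $i_l$. Since the eigenvalues are ordered once and for all (as assumed in Section~\ref{ss:intro:not}), I would associate to $L$ the flag whose terms are the partial sums $V_l=E_1\oplus\dots\oplus E_l$; then $\dim V_l=i_1+\dots+i_l$ and $0=V_0\subset V_1\subset\dots\subset V_k=\mathbb R^n$, so this is a well-defined point of ${\rm Fl}_{i_1,\dots,i_k}(\mathbb R)$.

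For the inverse map I would use the standard inner product on $\mathbb R^n$ to split each successive quotient off as an honest subspace: given a flag $0=V_0\subset V_1\subset\dots\subset V_k=\mathbb R^n$ with $\dim V_l=i_1+\dots+i_l$, set $E_l=V_l\cap V_{l-1}^{\perp}$, so that the $E_l$ are mutually orthogonal, $\dim E_l=i_l$, and $\bigoplus_l E_l=\mathbb R^n$. Then $L:=\sum_{l=1}^k\lambda_l P_{E_l}$, with $P_{E_l}$ the orthogonal projector onto $E_l$, is symmetric and has exactly the required eigenvalues and multiplicities. A short telescoping check shows the two constructions are mutually inverse: passing from $L$ to its flag and back recovers $L$ because the eigenspaces of a symmetric matrix are automatically orthogonal, while passing from a flag to $L$ and back recovers the flag because $V_l=E_1\oplus\dots\oplus E_l$ by construction.

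It remains to upgrade this set-theoretic bijection to a homeomorphism, and this is the step that needs the most care. Here I would invoke the homogeneous-space descriptions recalled just above. The group ${\rm O}(n,\mathbb R)$ acts by conjugation $g\cdot L=gLg^{-1}$ on the matrix side and by $g\cdot(V_\bullet)=(gV_\bullet)$ on the flag side, and the map $L\mapsto V_\bullet$ is manifestly equivariant because $E_l(gLg^{-1})=gE_l(L)$; it is moreover continuous, since the spectral projectors $P_{E_l}$ depend continuously on $L$ as long as the distinct eigenvalues stay separated, which they do throughout this space. Both actions are transitive (on the matrix side this is again the spectral theorem, on the flag side it is standard), and the stabiliser of the base point is in each case the group of orthogonal matrices preserving the eigenspace, respectively the standard-flag, decomposition, namely the block-diagonal subgroup ${\rm O}(i_1)\times\dots\times{\rm O}(i_k)$. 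Thus both spaces are identified with ${\rm O}(n)/\left({\rm O}(i_1)\times\dots\times{\rm O}(i_k)\right)$, and since ${\rm O}(n)$ is compact an equivariant continuous bijection between two such orbits is automatically a homeomorphism. The one genuine subtlety to keep track of is that the flag manifold carries no a priori metric structure, so the inverse map is defined only after choosing the inner product; I would emphasise that it is precisely this choice that cuts the parabolic stabiliser down to its orthogonal Levi-type part ${\rm O}(i_1)\times\dots\times{\rm O}(i_k)$ and makes the correspondence canonical.
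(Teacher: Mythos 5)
Your proof is correct and takes essentially the same route as the paper: the paper's own argument consists of exactly the two ingredients you combine, namely the identification of $L$ with the flag of partial sums of its eigenspaces, $V_l=L_1\oplus\dots\oplus L_l$, and the realisation of both sides as the homogeneous space ${\rm SO}(n,\mathbb R)$ modulo the block subgroup ${\rm SO}(n,\mathbb R)\cap\left({\rm O}(i_1)\times\dots\times{\rm O}(i_k)\right)$, which is your ${\rm O}(n)/\left({\rm O}(i_1)\times\dots\times{\rm O}(i_k)\right)$. You merely make explicit what the paper leaves as a sketch --- the inverse map $E_l=V_l\cap V_{l-1}^{\perp}$, the continuity of the spectral projectors, and the compactness argument upgrading the equivariant continuous bijection to a homeomorphism --- so there is no substantive difference in approach.
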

Before we \textit{prove} this in full generality, let us consider the simplest case: $n=3$. Let $\alpha<\beta$ be real numbers. Then the set of all symmetric $3 \times 3$ matrices with eigenvalues $\alpha$, $\alpha$, $\beta$ is equal to the orbit of the diagonal matrix $\operatorname{diag}(\alpha,\alpha,\beta)$ under conjugations by the elements of ${\rm SO}(3, \mathbb{R})$. This action is not free, the stabilizer of $\operatorname{diag}(\alpha,\alpha,\beta)$ being equal to the subgroup $\widetilde{{\rm SO}}(2, \mathbb{R}) \subset {\rm SO}(3, \mathbb{R})$ of matrices $\Psi$ that have one of the following forms:
\begin{gather*}
\Psi=\begin{pmatrix}\cos{t} & \sin{t} & 0\\ -\sin{t} & \cos{t} & 0\\ 0 & 0 & 1\end{pmatrix}\qquad \text{or}\qquad
\Psi=\begin{pmatrix}\cos t & -\sin t & 0\\ -\sin t & -\cos t & 0\\ 0 & 0 & -1\end{pmatrix},
\end{gather*}
So the orbit is equal to the quotient space of the sphere ${\rm SO}(3, \mathbb{R})/{\rm SO}(2, \mathbb{R})=S^2$ by the antipodal action of $\mathbb Z/2\mathbb Z$; to see this observe that the action is induced from the conjugation by the matrix
\begin{gather*}
A=\begin{pmatrix}
1 & 0 & 0\\
0 &-1 & 0\\
0 & 0 &-1
\end{pmatrix}.
\end{gather*}
In fact, the subgroup $\widetilde{{\rm SO}}(2, \mathbb{R}) \subset {\rm SO}(3, \mathbb{R})$ is generated by $A$ and the subgroup of rotations around the $Oz$ axis. Since the homeomorphism ${\rm SO}(3, \mathbb{R})/{\rm SO}(2, \mathbb{R})=S^2$ is given by the image of the point $(0,0,1)$ under the action of ${\rm SO}(3, \mathbb{R})$, we see that the action of $A$ on $S^2$ is given by the antipodal map. So the quotient space that we need is $S^2/(\mathbb Z/2\mathbb Z)=\mathbb RP^2$. As one can see this construction amounts to sending each matrix $L$ from this set to the straight line spanned by the eigenvectors corresponding to $\beta$. Alternatively, one can consider~$\mathbb RP^2$ in the dual sense as the space of all 2-dimensional subspaces in $\mathbb{R}^{3}$, and identify~$L$ with the eigenspace corresponding to~$\alpha$.

More generally, assume that the diagonal matrix $\Lambda$ has $k<n$ coinciding eigenvalues, and all the other eigenvalues of $\Lambda$ are distinct. Without loss of generality we can think that $\lambda_1=\lambda_2=\dots=\lambda_k$. Then reasoning just as before we can identify the set of symmetric matrices with such eigenvalues with the quotient space of ${\rm SO}(n, \mathbb{R})$ modulo the subgroup $\widetilde{{\rm SO}}(k, \mathbb{R})$ generated by ${\rm SO}(k, \mathbb{R})$ (orthogonal transformations of the subspace $\mathbb{R}^{k} \subset \mathbb{R}^{n}$ spanned by the f\/irst $k$-axes) and the subgroup of diagonal matrices in ${\rm SO}(n, \mathbb{R})$. This subgroup is equal to the intersection of ${\rm SO}(n, \mathbb{R})$ and the cartesian product
\begin{gather*}
{\rm O}(k)\times\underbrace{{\rm O}(1)\times\dots \times {\rm O}(1)}_{n-k\ \text{times}} ,
\end{gather*}
and the quotient space ${\rm SO}(n, \mathbb{R})/\widetilde{{\rm SO}}(k, \mathbb{R})$ is equal to the partial f\/lag space ${\rm Fl}_{k,1,\dots,1}$ (with $n-k$ units):
\begin{gather*}
{\rm Fl}_{k,1,\dots,1}=\big\{0\subset W\subset V_1\subset V_2\subset \dots\subset V_{n-k}=\mathbb R^n\big\},
\end{gather*}
where $\dim W=k$, $\dim V_i=k+i$.

Finally, consider the most general case. Assume that the eigenvalues of $\Lambda$ are divided into several ``clusters'':
$\lambda_1=\dots=\lambda_{i_1},\lambda_{i_1+1}=\dots=\lambda_{i_1+i_2},\dots,\lambda_{i_1+\dots+i_{k-1}}=\dots=\lambda_{i_1+\dots+i_k}$, where $i_1+i_2+\dots+i_k=n$. Then the orbit $\Psi\Lambda\Psi^{-1}$ will be equal to the partial f\/lag space ${\rm Fl}_{i_1,\dots,i_k}$:
\begin{gather*}
{\rm Fl}_{i_1,i_2,\dots,i_k}=\big\{0=V_0\subset V_1\subset V_2\subset V_3\subset \dots\subset V_k=\mathbb R^n\big\},
\end{gather*}
where $\dim V_l=i_1+i_2+\dots+i_l$. This can be proved either by the considerations of the symmetry (i.e., by f\/inding the subgroup of matrices commuting with $\Lambda$) as before or one can use the following observation: every symmetric matrix $L\in \operatorname{Symm}_n$ with eigenvalues $\lambda_1,\dots,\lambda_k$ of multiplicities $i_1,\dots,i_k$ is uniquely determined by the collection of eigenspaces:
\begin{gather*}
L_{j}=\big\{0\neq v\in\mathbb R^n\,|\, Lv=\lambda_jv\big\},\qquad j=1,\dots,k.
\end{gather*}
Clearly $\dim L_j=i_j$. So one naturally identif\/ies this matrix with a point in ${\rm Fl}_{i_1,\dots,i_k}$ by putting $V_l=L_1\oplus L_2\oplus\dots\oplus L_l$.

An important property of the partial f\/lag spaces is the existence of surjective projections
\begin{gather}\label{eq:projflag}
\pi\colon \ {\rm Fl}_n(\mathbb R)\to {\rm Fl}_{i_1,i_2,\dots,i_n}(\mathbb R)
\end{gather}
given by ``forgetting'' the unnecessary subspaces. Alternatively, these projections can be regarded as the additional factorization
\begin{gather*}
{\rm Fl}_n(\mathbb R)={\rm SO}(n, \mathbb R)/{\rm SO}(n, \mathbb R)\bigcap(\underbrace{{\rm O}(1)\times\dots\times {\rm O}(1)}_{n\ \text{times}}),\\
{\rm Fl}_{i_1,\dots,i_k}(\mathbb R)={\rm SO}(n, \mathbb R)/{\rm SO}(n, \mathbb R)\bigcap({\rm O}_{i_1}(\mathbb R)\times\dots\times {\rm O}_{i_k}(\mathbb R)).
\end{gather*}
It follows from this description that $\pi$ is a locally trivial f\/ibre bundle with the f\/ibre equal to
\begin{gather*}
X={\rm SO}(n, \mathbb R)\bigcap({\rm O}_{i_1}(\mathbb R)\times\dots\times {\rm O}_{i_k}(\mathbb R))/{\rm SO}(n, \mathbb R)\bigcap(\underbrace{{\rm O}(1)\times\dots\times {\rm O}(1)}_{n\ \text{times}}).
\end{gather*}
Or using the notation we introduced earlier
\begin{gather*}
X={\rm SO}(n, \mathbb R)\bigcap({\rm O}_{i_1}(\mathbb R)\times\dots\times {\rm O}_{i_k}(\mathbb R))/T_n^+.
\end{gather*}

\subsection[Bruhat order in $S_n$ and full f\/lag spaces]{Bruhat order in $\boldsymbol{S_n}$ and full f\/lag spaces}
\label{ss:bru:s_n}
In this paragraph we will closely follow the exposition in \cite{BB,F}, with only a few notations changed. Let $\omega$ be a permutation,
\begin{gather*}
\omega\in S_n,\qquad \omega\colon \ \{1,2,\dots,n\}\to\{1,2,\dots,n\}.
\end{gather*}
This permutation can be abbreviated to $(\omega(1),\dots,\omega(n))$. One def\/ines the \textit{length} of the permutation $\omega$ as the total number of involutions in the sequence $(\omega(1),\dots,\omega(n))$; that is,
\begin{gather*}
l_{\omega} = \# \{ j_{1} < j_{2} \,|\, \omega(j_{2}) < \omega(j_{1}) \}.
\end{gather*}
Let the numbers $r_\omega[p,q]$ be equal to the ``number of involutions with respect to $p$, $q$'':
\begin{gather}\label{Inv:pq}
r_{\omega}[p,q] = \# \{ j \leq p \,|\, \omega(j) \geq q \},\qquad 1 \leq p,q \leq n.
\end{gather}
There are many equivalent def\/initions of the Bruhat order on permutations; we give the following (see~\cite{BB}):
\begin{defi}\label{Bruhat}
The \textit{Bruhat order} on $S_n$ is the partial order determined by the following relation: for any two permutations $u$ and $v$ in $S_n$, one says that $u$ precedes $v$ ($u\prec v$) if and only if
\begin{gather*}
r_{u}[p,q] \leq r_{v}[p,q]\qquad \text{for all} \ p,\; q.
\end{gather*}
\end{defi}

A simple way to compute $r_{\omega}[p,q]$ is to consider the matrix $A_\omega$ representing the permuta\-tion~$\omega$:
\begin{gather*}
(A_\omega)_{ij}=\begin{cases}
 1,& \omega(j)=n-i+1,\\
 0,& \text{otherwise}.
 \end{cases}
\end{gather*}
It is clear that for all $p$, $q$ the number $r_{\omega}[p,q]$ is equal to the rank of the submatrix $A^{pq}_\omega=\left((A_\omega)_{ij}\right)_{i\le q,\,j\le p,}$ of $A_\omega$ (informally one can say that $A^{pq}_\omega$ is the submatrix in the upper left corner of $A_\omega$ determined by the element $a_{pq}$). Since there is only one nonzero entry in every row and column of $A_\omega$, it is enough to count the number of such nonzero elements in $A^{pq}_\omega$. E.g., for $\omega=(4213)$ we obtain $r_{\omega}[3,2]=2$ from the following table (we replace $1$ by $\bullet$):
\begin{gather*} 
\begin{tabular}{c|c|c|c|c|}
\hline\cline{1-0}
$4$ & $\bullet$ & $$ & $$ & $$\\
\hline\cline{1-0}
$3$ & $$ & $$ & $$ & $\bullet$\\
\hline\cline{1-0}
$2$ & $$ & $\bullet$ & $\leftarrow \uparrow$ & $$\\
\hline\cline{1-0}
$1$ & $$ & $$ & $\bullet$ & $$\\
\hline\cline{1-0}
$$ & $1$ & $2$ & $3$ & $4$\\
\end{tabular}
\end{gather*}
The following lemma is proved in~\cite{F}:
\begin{lem}\label{Bruhatlem}
Let $u \prec v$, $u \neq v$. Let $j$, $1\le j\le n$ be the smallest integer, for which $u(j) \neq v(j)$ $($and hence $u(j) < v(j))$. Let $n\ge k>j$ be the least integer for which $u(j) \leq v(k) < v(j)$ and let $v'=v\cdot(j,k)$ denote the composition of~$v$ with the swap of~$j$ and~$k$. Then $u \prec v' \prec v$.
\end{lem}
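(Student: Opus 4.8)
The plan is to work entirely with the numbers $r_\omega[p,q]$ of \eqref{Inv:pq} and the characterization of the Bruhat order from Definition~\ref{Bruhat}, reducing everything to integer inequalities between $r_u[p,q]$, $r_v[p,q]$ and $r_{v'}[p,q]$. First I would record two preliminary facts. Since $u$ and $v$ agree on $1,\dots,j-1$, for every $q$ one has $r_v[j,q]-r_u[j,q]=[v(j)\ge q]-[u(j)\ge q]$, and the hypothesis $r_u[j,q]\le r_v[j,q]$ for all $q$ forces $v(j)\ge u(j)$, hence $u(j)<v(j)$. Next, the value $u(j)$ occupies some position $m$ in $v$; it cannot be $m\le j$ (for $m<j$ one would get $u(m)=v(m)=u(j)$, and $m=j$ is excluded since $v(j)\ne u(j)$), so $m>j$ and $v(m)=u(j)$ satisfies $u(j)\le v(m)<v(j)$. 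Thus the set of indices defining $k$ is nonempty and $k$ is well defined.

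The second step is to compare $v'$ with $v$. Swapping positions $j$ and $k$ changes the initial segment $\{i\le p\}$ only when $j\le p<k$, in which case position $j$ now carries the value $v(k)$ instead of $v(j)$; a direct count gives
\[
r_{v'}[p,q]=r_v[p,q]+[v(k)\ge q]-[v(j)\ge q],
\]
while $r_{v'}[p,q]=r_v[p,q]$ for $p<j$ and for $p\ge k$. Since $v(k)<v(j)$, the correction term is $-1$ exactly on the ``window'' $\{\,j\le p<k,\ v(k)<q\le v(j)\,\}$ and $0$ elsewhere, so $r_{v'}[p,q]\le r_v[p,q]$ for all $p,q$; as $v'\ne v$ this already yields $v'\prec v$.

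The third step, and the only nontrivial one, is $u\prec v'$. Outside the window $r_{v'}[p,q]=r_v[p,q]\ge r_u[p,q]$ is immediate, so it remains to prove the strict inequality $r_u[p,q]<r_v[p,q]$ at every window point $(p,q)$. Here the minimality of $k$ enters decisively: for $j<i<k$ the value $v(i)$ cannot lie in $[u(j),v(j))$, so each such $v(i)$ is either ``large'' ($>v(j)$) or ``small'' ($<u(j)$). The hard part will be extracting the strict inequality from this, and I plan to argue by contradiction. Suppose $r_u[p,q]=r_v[p,q]$ at some window point, and count, among the first $p$ positions, the values lying in the subinterval $[u(j),q)\subseteq[u(j),v(j))$; for either permutation $w$ this count is $r_w[p,u(j)]-r_w[p,q]$. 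Minimality of $k$ forces every $v(i)$ with $j<i\le p$ to avoid $[u(j),q)$, and $v(j)\ge q$ also avoids it, so the $v$-count equals $\#\{i<j:u(i)\in[u(j),q)\}$; the $u$-count equals the same quantity, plus an extra $1$ coming from $u(j)\in[u(j),q)$ (note $u(j)<q$ since $u(j)\le v(k)<q$), plus a nonnegative contribution from $j<i\le p$.

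Thus the $u$-count exceeds the $v$-count by at least $1$, i.e.\ $(r_u[p,u(j)]-r_u[p,q])-(r_v[p,u(j)]-r_v[p,q])\ge 1$; cancelling the equal terms $r_u[p,q]=r_v[p,q]$ gives $r_u[p,u(j)]>r_v[p,u(j)]$, contradicting $u\prec v$. This contradiction shows $r_u<r_v$ throughout the window, whence $r_u[p,q]\le r_{v'}[p,q]$ for all $p,q$ and $u\prec v'$. I expect the bookkeeping in this final counting step --- carefully separating the contributions of the positions $i<j$, $i=j$ and $j<i\le p$, and tracking which intervals the ``large'' and ``small'' values of $v$ fall into --- to be the main obstacle; the remaining manipulations of the inequalities in Definition~\ref{Bruhat} are routine.
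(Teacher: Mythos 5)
Your proof is correct. There is, however, nothing in the paper to compare it with: the paper does not prove this lemma, it merely quotes it from Fulton's book (``The following lemma is proved in~\cite{F}''), so your write-up in effect supplies the argument the paper delegates to the literature. Your route is the natural one given Definition~\ref{Bruhat}, reducing everything to the rank numbers $r_\omega[p,q]$ of~\eqref{Inv:pq}, and all three key steps check out: (i) agreement of $u$ and $v$ on $\{1,\dots,j-1\}$ gives $r_v[j,q]-r_u[j,q]=[v(j)\ge q]-[u(j)\ge q]$, forcing $u(j)<v(j)$, and locating the value $u(j)$ inside $v$ shows the defining set for $k$ is nonempty; (ii) the identity $r_{v'}[p,q]=r_v[p,q]+[v(k)\ge q]-[v(j)\ge q]$ for $j\le p<k$, with $r_{v'}=r_v$ otherwise, correctly isolates the correction $-1$ on the window $\{\,j\le p<k,\ v(k)<q\le v(j)\,\}$ and yields $v'\prec v$; (iii) for strictness of $r_u<r_v$ on the window, your count $\#\{i\le p\,:\,u(j)\le w(i)<q\}=r_w[p,u(j)]-r_w[p,q]$, combined with the minimality of $k$ (which keeps every $v(i)$, $j<i<k$, out of $[u(j),v(j))\supseteq[u(j),q)$), with $v(j)\ge q$, and with $u(j)\le v(k)<q$, does produce the contradiction $r_u[p,u(j)]>r_v[p,u(j)]$ against $u\prec v$. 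Two remarks: the paper's convention in~\eqref{Inv:pq} counts $\omega(i)\ge q$, the reverse of Fulton's $\omega(i)\le q$, so all inequalities are flipped relative to~\cite{F} --- you used the paper's convention consistently, which is exactly right; and since the paper's $\prec$ is defined non-strictly, your pointwise conclusion $r_u\le r_{v'}\le r_v$ is literally the assertion $u\prec v'\prec v$, so nothing further is needed.
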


The relation between this order and the geometry of the full f\/lag space is determined by the structure of the Schubert cell decomposition of this space. There are many ways to def\/ine the Schubert cells, we shall use the following one (more def\/initions can be found in the literature, see \cite{BB, Br, F} and cf.\ Section~\ref{ss:bru:p&pfl}).

\begin{defi}\label{def:Schu1}
Embed the group $S_n$ into ${\rm SL}(n,\mathbb R)$ as it is explained above (one should only replace $1$ in the def\/inition of $A_\omega$ to $\pm 1$ so as to make sure that the determinant is equal to $1$ and not $-1$). Using the projection
\begin{gather*}
p\colon \ {\rm SL}(n,\mathbb R)\to {\rm Fl}_n(\mathbb R)= {\rm SL}(n,\mathbb R)/B_n^+,
\end{gather*}
we obtain the points $p(A_\omega)\in {\rm Fl}_n(\mathbb R)$, which we shall denote by $[A_\omega]$. The subgroup $B_n^+$ of~${\rm SL}(n,\mathbb R)$ acts on ${\rm Fl}_n(\mathbb R)$ and \textit{the Schubert cell $X_\omega\subset {\rm Fl}_n(\mathbb R)$ for $\omega\in S_n$} is the orbit of $[A_\omega]$ with respect to this action:
\begin{gather*}
X_\omega=B_n^+\cdot[A_\omega]\subseteq {\rm Fl}_n(\mathbb R).
\end{gather*}
\end{defi}

One can show (cf.~\cite{F} and Section~\ref{ss:bru:p&pfl}) that $X_\omega$ is indeed a cell. In ef\/fect,
\begin{gather*}
X_\omega\cong\mathbb R^{l(\omega)}.
\end{gather*}
The closure $\overline{X}_\omega$ of $X_\omega$ in the f\/lag space is a singular algebraic variety. It is called \textit{the Schubert variety}. The following statement explains the relation of the Schubert cells and the Bruhat order (see \cite[Proposition~7, p.~175]{F}):
\begin{predl}\label{prop:schubru}
An element $w\in S_n$ precedes $v\in S_n$ with respect to the Bruhat order, $w\prec v$, if and only if $\overline{X}_w\subseteq\overline{X}_v$.
\end{predl}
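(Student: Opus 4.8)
The plan is to translate both sides of the claimed equivalence into the \emph{same} system of numerical inequalities on the flag, namely the inequalities $r_u[p,q]\le r_v[p,q]$ appearing in Definition~\ref{Bruhat}. The starting point is the rank interpretation of $r_\omega[p,q]$ recorded above: for the flag $[A_\omega]$ whose $p$-th space $V_p$ is spanned by the first $p$ columns of $A_\omega$, the number $r_\omega[p,q]$ equals $\dim(V_p\cap E_q)$, where $E_\bullet$ is the fixed coordinate flag preserved by $B_n^+$. First I would record that, because $B_n^+$ stabilises $E_\bullet$, each of the functions $V_\bullet\mapsto\dim(V_p\cap E_q)$ is constant along every $B_n^+$-orbit; hence these dimensions are genuine invariants of the cell, taking the value $r_\omega[p,q]$ on all of $X_\omega$. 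This identifies $X_\omega$ with the locus of flags whose intersection-dimension array equals $(r_\omega[p,q])_{p,q}$.

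Next I would pass to closures using semicontinuity. The map $V_\bullet\mapsto\dim(V_p\cap E_q)$ is semicontinuous, so along a degenerating family these dimensions can only move monotonically; consequently $\overline{X}_v$ is contained in the locus cut out by the corresponding inequalities, which are precisely the Bruhat inequalities of Definition~\ref{Bruhat}. This already yields one implication: if $\overline{X}_w\subseteq\overline{X}_v$, then evaluating these inequalities at any point of the open cell $X_w$ --- where the array equals $(r_w[p,q])$ --- gives the required comparison between $(r_w[p,q])$ and $(r_v[p,q])$, i.e.\ $w\prec v$.

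The converse is the substantive part. Assuming $w\prec v$, I must show $X_w\subseteq\overline{X}_v$ (equivalently $\overline{X}_w\subseteq\overline{X}_v$, since $\overline{X}_v$ is closed). Here Lemma~\ref{Bruhatlem} does the combinatorial work: it lets me induct on the length, reducing to the case of a single covering step $w=v\cdot(j,k)$. For such a covering pair I would exhibit an explicit one-parameter unipotent subgroup $u(s)\subset B_n^+$ --- the root subgroup matched to the transposition $(j,k)$ --- and check that the curve $u(s)\cdot[A_v]$ lies in $X_v$ and degenerates, in the limit of the parameter, to the point $[A_w]$. This places $[A_w]$ in $\overline{X}_v$, and then $B_n^+$-invariance upgrades this to $X_w=B_n^+\cdot[A_w]\subseteq\overline{X}_v$; chaining the covering relations supplied by Lemma~\ref{Bruhatlem} finishes the induction.

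I expect the main obstacle to be exactly this converse, or \emph{sufficiency}, step. Semicontinuity only shows that $\overline{X}_v$ sits inside the inequality-locus; it does not by itself show that every lower cell is actually reached in the boundary. Closing that gap is what forces the use of Lemma~\ref{Bruhatlem} together with the explicit degenerations, and the delicate bookkeeping --- selecting the correct root subgroup for $(j,k)$ and verifying that the limiting flag is precisely $[A_w]$ rather than some other cell --- is where the genuine effort lies.
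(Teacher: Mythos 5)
The paper does not actually prove Proposition~\ref{prop:schubru}: it quotes the statement from Fulton's book (the citation [Proposition~7, p.~175] given just before it), so the only available comparison is with that classical source --- and your argument is, in substance, the proof given there. Your two halves are exactly the standard ones: necessity via upper semicontinuity of $V_\bullet\mapsto\dim(V_p\cap E_q)$, so that $\overline{X}_v$ sits inside the locus cut out by the rank inequalities and evaluation at a point of $X_w$ yields the comparison of the arrays $r_w$, $r_v$; and sufficiency by induction on length using Lemma~\ref{Bruhatlem} together with explicit one-parameter degenerations $u(s)\cdot[A_v]\to[A_{v'}]$ inside $X_v$, then upgrading $[A_{v'}]\in\overline{X}_v$ to $X_{v'}\subseteq\overline{X}_v$ by $B_n^+$-invariance of the closure. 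Two points of bookkeeping deserve attention. First, Lemma~\ref{Bruhatlem} does not produce a covering step: the transposition $v'=v(j,k)$ it supplies is only guaranteed to decrease length, not to decrease it by exactly~$1$. This is harmless --- your root-subgroup degeneration works verbatim for any transposition swapping an inversion of $v$: the limit of $(I+s e_{ab})\cdot[A_v]$ as $s\to\infty$ replaces $e_b$ by $e_a$ in precisely the spaces $V_p$ with $j\le p<k$, whether or not intermediate columns carry intermediate values, so the limit flag is $[A_{v'}]$ and the induction on $l(v)$ still terminates --- but you should phrase the reduction as a length-decreasing step rather than a covering step. Second, be careful matching conventions: with the paper's literal definition $(A_\omega)_{ij}=1$ iff $\omega(j)=n-i+1$, the point $[A_\omega]$ is the standard flag of $w_0\omega$ rather than of $\omega$, so whether $E_q$ is spanned by the first or the last $q$ basis vectors decides the direction of the semicontinuity inequalities; with the wrong pairing the first half of your argument would come out proving $v\prec w$ instead of $w\prec v$. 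Fixing one consistent normalization (as in Fulton) removes the issue; it is a convention hazard, not a gap in the mathematics.
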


This proposition is sometimes used to give an alternative def\/inition of the Bruhat order: \textit{the Bruhat order is the partial order induced by the contiguity of the cells in the Schubert cell decomposition of the flag manifold}.

There is another important observation relating the Schubert cells and Bruhat order. Namely, consider the $B_n^-$-orbits of the same points $[A_\omega]$:
\begin{gather*}
X^\vee_\omega=B^-_n\cdot[A_\omega].
\end{gather*}
They are called \textit{the dual Bruhat cells}. These sets are as well homeomorphic to the Euclidean spaces, their closures are called \textit{the dual Schubert varieties}. Then the following is true (see again~\cite{F}):
\begin{predl}\label{prop:dualschu}
Let $v,w\in S_n$ be two elements. Then $X_v\bigcap X^\vee_w\neq\varnothing$ if and only if $v\prec w$ in Bruhat order. In the latter case the intersection of cells is transversal.
\end{predl}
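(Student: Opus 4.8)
The plan is to prove Proposition~\ref{prop:dualschu} by reducing the intersection question to a computation in a single well-chosen coordinate chart on the flag manifold. First I would recall the standard parametrization of the two families of cells coming from Definition~\ref{def:Schu1}: the Schubert cell $X_v=B_n^+\cdot[A_v]$ consists of those flags whose stabilizing matrix, written relative to the standard basis, can be brought to the permutation matrix $A_v$ by upper-triangular operations, while the dual cell $X^\vee_w=B_n^-\cdot[A_w]$ is described by lower-triangular operations on $A_w$. Concretely, a flag lies in $X_v$ precisely when the ranks of the upper-left submatrices of a representing matrix equal the combinatorial invariants $r_v[p,q]$ of \eqref{Inv:pq}, and analogously a flag lies in $X^\vee_w$ when the ranks of its lower-left (or upper-right) submatrices match those of $A_w$. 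Thus membership in each cell is governed by rank conditions on complementary corners of the same matrix, and the whole statement becomes a statement about when these two families of rank conditions can be simultaneously satisfied.

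The key step is to exploit the fact that $X_v$ and $X^\vee_v$ for the \emph{same} permutation $v$ always intersect in exactly one point, namely the flag $[A_v]$ itself; this is the unique flag invariant under the torus $T_n^+$ that is fixed by the combined upper and lower rank data of $A_v$. I would establish this base case first, checking that $[A_v]$ lies in both $X_v$ and $X^\vee_v$ and that no other flag can satisfy both sets of rank equalities. With this in hand, the strategy for the general intersection $X_v\cap X^\vee_w$ is to compare the rank conditions defining $X_v$ against those defining $X^\vee_w$. The nonemptiness of the intersection forces a family of rank inequalities relating the $r_v[p,q]$ to the $r_w[p,q]$ across all corners, and I would show that these inequalities are exactly equivalent to $r_v[p,q]\le r_w[p,q]$ for all $p,q$, which by Definition~\ref{Bruhat} is precisely the relation $v\prec w$. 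The forward direction (intersection nonempty implies $v\prec w$) comes from the constraint that a flag satisfying both sets of conditions cannot have smaller dual-corner rank than permitted; the reverse direction (if $v\prec w$ then the intersection is nonempty) I would handle by explicitly producing a point in the intersection, most naturally by moving along a path in $B_n^+$ applied to $[A_w]$, or by invoking Lemma~\ref{Bruhatlem} to reduce $w$ to $v$ one transposition at a time and tracking how the cell intersections behave under each elementary step.

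For the transversality claim I would argue dimensionally. Since $X_v\cong\mathbb R^{l(v)}$ has dimension equal to the length $l(v)$, and the dual cell $X^\vee_w$ has complementary codimension, the expected dimension of a transversal intersection is $l(v)-l(w)$ when $v\prec w$; I would compute the tangent spaces to the two cells at a point of intersection, expressed as subspaces of the tangent space to the flag manifold via the $B_n^+$ and $B_n^-$ actions, and show that they span the whole tangent space. The essential input here is that the Lie algebras $\mathfrak b_n^+$ and $\mathfrak b_n^-$ together generate $\mathfrak{sl}_n$, with overlap only in the Cartan subalgebra, which matches the fiber structure and the stabilizer $T_n^+$ appearing in Section~\ref{ss:bru:fs&gr}; a careful bookkeeping of which root directions are tangent to $X_v$ versus $X^\vee_w$ at the intersection point yields the direct-sum decomposition that transversality requires.

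The hard part will be pinning down the reverse implication cleanly, that is, showing that $v\prec w$ genuinely guarantees a nonempty intersection rather than merely being consistent with the rank inequalities. The danger is that the pointwise rank conditions, while necessary, might not obviously be jointly realizable by an actual flag; the rank data of the upper corners and the lower corners interact, and one must ensure no hidden incompatibility arises. I expect to resolve this either by the inductive reduction via Lemma~\ref{Bruhatlem}, peeling off one transposition $(j,k)$ at a time and verifying that the intersection persists under each step, or by writing down a specific representing matrix in a coordinate chart whose upper-left ranks realize $r_v$ and whose lower-left ranks realize $r_w$, which is possible exactly when the inequalities $r_v[p,q]\le r_w[p,q]$ hold. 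Establishing that this explicit construction always succeeds, and that the resulting point is unique so that the intersection is transversal and zero-dimensional in the complementary-length case, is where the real work lies.
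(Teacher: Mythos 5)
The paper gives no proof of this proposition at all: it is stated with a pointer to Fulton's book~\cite{F}, so your attempt can only be measured against the standard literature argument --- which your outline in fact reproduces. The rank-condition description of the two families of cells, the forward direction via a linear-algebra inequality (for a flag $E_\bullet$ lying in both cells one has $\dim(E_p\cap V_q)+\dim\big(E_p\cap V^-_{n-q}\big)\le \dim E_p=p$, where $V_q$ and $V^-_{n-q}$ are coordinate subspaces of the standard and opposite reference flags meeting trivially, and this system of inequalities is exactly $r_v[p,q]\le r_w[p,q]$), the single-point base case $X_v\cap X^\vee_v=\{[A_v]\}$, and the explicit echelon-form representative for the reverse direction are all essentially Fulton's route. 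Your instinct that the reverse implication is the real work is correct, and of your two proposed methods the explicit-matrix construction is the one that succeeds; the induction via Lemma~\ref{Bruhatlem} is shakier than you suggest, because the transpositions $(j,k)$ it produces are not in general simple reflections, and there is no straightforward statement about how $B_n^\pm$-orbit intersections behave under multiplication by a non-simple transposition (the usual inductions run over simple reflections, through the minimal parabolics $B_n^+\cup B_n^+s_iB_n^+$).

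Two concrete corrections are needed. First, your dimension bookkeeping is wrong and, as written, contradicts the statement being proved: if $\dim X_v=l(v)$ and $X^\vee_w$ has codimension $l(w)$, then for $v\prec w$, $v\neq w$, the ``expected dimension'' you assert, $l(v)-l(w)$, is negative, so a transversal intersection would be forced to be \emph{empty} --- the opposite of the claim. The consistent count makes the intersection smooth of dimension $l(w)-l(v)\ge 0$. The confusion is partly inherited from the paper itself, whose definition of $A_\omega$ contains the flip $\omega(j)=n-i+1$ (so that its $A_\omega$ is the standard permutation matrix of $w_0\omega$, with $w_0$ the order-reversing permutation, and its stated cell dimensions and orbit pictures do not all fit one convention); a careful proof must fix one convention and should have flagged that your numbers cannot both hold. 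Second, your transversality argument is more labored than necessary and understates its own strength: since $X_v$ is a single $B_n^+$-orbit and $X^\vee_w$ a single $B_n^-$-orbit, at any intersection point $x$ the tangent spaces are precisely the images of $\mathfrak{b}^+_n$ and $\mathfrak{b}^-_n$ under the infinitesimal action, and because $\mathfrak{b}^+_n+\mathfrak{b}^-_n=\mathfrak{sl}_n$ as vector spaces (they span, not merely generate) and the action map $\mathfrak{sl}_n\to T_x{\rm Fl}_n(\mathbb R)$ is onto by transitivity, transversality at \emph{every} point of the intersection is immediate --- no bookkeeping of root directions is required. With the dimension count repaired and the reverse implication carried through by the explicit construction, your plan becomes a complete and standard proof.
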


This property of the Schubert cells was crucial in our description of the asymptotic behavior of the FS Toda lattice, see \cite{CSS}. Below we shall make use of analogous properties of the partial f\/lag spaces.

\subsection{Bruhat order on multiset permutations}\label{ss:bru:p&pfl}
In this section we give a description of the Schubert cell decomposition of the partial f\/lag space and the Bruhat order associated with it. Since we were not able to f\/ind a combinatoric description of this order in literature, we do not conf\/ine ourselves here to mere def\/initions and statement of results, and give proofs of a few facts (in particular, see Proposition~\ref{prop:Brurep1}). Besides this one can refer to~\cite{Br} and book~\cite{BB} for more details.

Let $n>k$ be two natural numbers and let $I=(i_1,i_2,\dots,i_k)$ be a partition of $n$ into $k$ parts, i.e., for each $j=1,\dots,k$ we let $i_j$ be a positive integer so that
\begin{gather*}
i_1+i_2+\dots+i_k=n.
\end{gather*}
Then we give the following def\/inition:
\begin{defi}\label{def:repet}
An \textit{$I$-permutation of multiset} (or permutation with multiplicities, or permutation with repetitions) is any surjective map
\begin{gather*}
\tau\colon \ \{1,2,\dots,n\}\to\{1,2,\dots,k\},
\end{gather*}
such that $|\tau^{-1}(j)|=i_j$ for all $j=1,\dots,k$. We shall denote the set of all such permutations by~$S_n^I$.
\end{defi}
One can regard an element $\tau\in S_n^I$ as a string of elements of the form
\begin{gather*}
\tau=(\tau(1),\tau(2),\dots,\tau(n)),
\end{gather*}
where $1\le \tau(j)\le k$ and every element $j$, $1\le j\le k$ appears exactly $i_j$ times (which explains the name of these objects that we use).

Below we shall make an extensive use of the following map $\tau_1^*\colon S_n\to S_n^I$, where $\tau_1\in S_n^I$ is the following multiset permutation which we regard as a map
\begin{gather*}
\tau_1\colon \ \{1,2,\dots,n\}\to\{1,2,\dots,k\},\\
\tau_1 (l)=\begin{cases}
 1,& 1\le l\le i_1,\\
 2,& i_1+1\le l\le i_1+i_2,\\
 &\dots\dots\dots\dots\dots\dots\dots\\
 j,& i_1+\dots+i_{j-1}+1\le l\le i_1+\dots+i_{j-1}+i_j,\\
 &\dots\dots\dots\dots\dots\dots\dots\dots\dots\dots\dots\dots\dots\dots\dots\\
 k,& i_1+\dots+i_{k-1}+1\le l\le n.
 \end{cases}
\end{gather*}
Then for every permutation $w\in S_n$ we def\/ine the multiset permutation $\tau^{\ast}_1(w)=\tau_w$ by the formula
\begin{gather*}
\tau_w=\tau_1\circ w ,
\end{gather*}
or in the ``linear form'':
\begin{gather*}
\tau_w=(\tau_1(w(1)),\tau_1(w(2)),\dots,\tau_1(w(n))).
\end{gather*}
The map $\tau_1^*$ is evidently an epimorphism and
\begin{gather*}
(\tau_1^*)^{-1}(\tau_1)=S_{i_1}\times S_{i_2}\times\dots\times S_{i_k}\subseteq S_n,
\end{gather*}
where we embed $S_{i_j}$ into $S_n$ by letting it permute the elements $l$, $i_1+\dots+i_{j-1}<l\le i_1+\dots+i_j$. For all other elements $\tau\in S_n^I$ we see that $(\tau_1^*)^{-1}(\tau)$ is a right coset of $S_n$ by the subgroup $S_{i_1}\times S_{i_2}\times\dots\times S_{i_k}$, i.e., the set of all permutations of the form $x_\tau\cdot w$ for any $w\in S_{i_1}\times S_{i_2}\times\dots\times S_{i_k}$ and some $x_\tau\in S_n$ such that $\tau_1^*(x_\tau)=\tau$. We shall sometimes call these cosets \textit{clusters in $S_n$, corresponding to $\tau\in S_n^I$}; clearly every such cluster contains $N=i_{1}!i_{2}!\cdots i_{k}!$ elements. One can describe elements in $S_n^I$ in terms of clusters; for instance, the string $(1,2,2,1)\in S_4^{(2,2)}$ corresponds to the cluster $\{(1,3,4,2),\,(2,3,4,1),\,(1,4,3,2),\,(2,4,3,1)\}$.

Now one uses these constructions to introduce the Bruhat order on $S_n^I$. Loosely speaking it is obtained by ``pulling back'' from $S_n$. However, this procedure is not always well def\/ined, so we give some details.

First of all, just as in Section~\ref{ss:bru:s_n} we introduce the notion of \textit{length} of a permutation with repeating elements: let $\tau\colon \{1,2,\dots,n\}\to \{1,2,\dots,k\}$ be an element in $S_n^I$, then
\begin{gather*}
l_\tau=\#\{j_1<j_2\,|\, \tau(j_2)>\tau(j_1)\},
\end{gather*}
i.e., $l_\tau$ is the number of involutions in the sequence $(\tau(1),\dots,\tau(n))$. Observe that we do not count the pairs $j_1<j_2$ for which $\tau(j_1)=\tau(j_2)$.

Second, similarly to the Def\/inition~\ref{Bruhat} we def\/ine the Bruhat order on $S_n^I$ with the help of ranks of some matrices: consider the $n\times k$ matrices $A_\tau$ corresponding to the elements of~$S_n^I$. Just as before we def\/ine the numbers $r_\tau[p,q]$ (see formula~\eqref{Inv:pq}) for all $1\le p\le n$ and $1\le q\le k$. Then:
\begin{defi}\label{Bruhat2}
We say that $\tau\prec\upsilon$ (where both $\tau,\upsilon\in S_n^I$) if and only if $r_\tau[p,q]\leq r_\upsilon[p,q]$ for all~$p$,~$q$.
Observe that we use $\leq$ rather than the strict inequality $<$ here. Of course at least one inequality must be strict if $\tau\neq\upsilon$
\end{defi}

The following example illustrates this def\/inition very well: let $I=(2,2)$ as before and $\tau=(2,1,1,2)$. Then $A_\tau$ has this shape:
\begin{gather*} 
\begin{tabular}{c|c|c|c|c|}
\hline\cline{1-0}
$2$ & $\bullet$ & $$ & $$ & $\bullet$\\
\hline\cline{1-0}
$1$ & $$ & $\bullet$ & $\bullet$ & $$\\
\hline\cline{1-0}
$$ & $1$ & $2$ & 3 & $4$\\
\end{tabular}
\end{gather*}
Similarly, the lengths of the elements of $S_4^{(2,2)}$ are equal to
\begin{gather*}
(1,1,2,2), \quad l_{(1,1,2,2)}=0,\qquad
(1,2,1,2), \quad l_{(1,2,1,2)}=1,\qquad
(1,2,2,1), \quad l_{(1,2,2,1)}=2,\\
(2,1,1,2), \quad l_{(2,1,1,2)}=2,\qquad
(2,1,2,1), \quad l_{(2,1,2,1)}=3,\qquad
(2,2,1,1), \quad l_{(2,2,1,1)}=4.
\end{gather*}
These data are suf\/f\/icient to draw the Hasse diagram of the Bruhat order in this case (cf.\ Fig.~\ref{bundle-2})\footnote{Recall that the Hasse diagram of a partially ordered set is the oriented graph whose vertices are given by the elements of the set and edges connect elements $a$ and $b$ if and only if $a \prec b$ and there is no $c$ such that $a \prec c \prec b$.}.

The following propositions (Propositions~\ref{prop:Brurep1} and~\ref{prop:Brurep2}, and Lemma~\ref{Bruhatlem2}) explain, in what sense one can ``pull back'' the order from~$S_n$ to~$S_n^I$.
\begin{predl}\label{prop:Brurep1}
Let $u,v\in S_n$ be two permutations such that $u \prec v$ in the Bruhat order on~$S_n$ and $\tau_1^*(u)\neq \tau_1^*(v)$. Then $\tau_1^*(u)\prec\tau_1^*(v)$ in the Bruhat order in~$S_n^I$.
\end{predl}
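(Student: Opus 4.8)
The plan is to reduce the statement to a direct comparison of the rank functions that define the two Bruhat orders (Definitions~\ref{Bruhat} and~\ref{Bruhat2}). The crucial observation is that the numbers $r_{\tau_w}[p,q]$ attached to a multiset permutation of the special form $\tau_w=\tau_1\circ w$ can be expressed through the numbers $r_w[p,q']$ of the underlying permutation $w\in S_n$, evaluated at particular values of the second argument dictated by the block structure $I=(i_1,\dots,i_k)$. Intuitively this reflects the fact that the matrix $A_{\tau_w}$ is obtained from the permutation matrix $A_w$ by merging the rows belonging to each block into one, but I would carry out the argument by direct counting rather than through the matrix picture.

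Concretely, I would first unwind the definition~\eqref{Inv:pq} for $\tau_w$:
\[
r_{\tau_w}[p,q]=\#\{\,j\le p \mid \tau_1(w(j))\ge q\,\}.
\]
Using the explicit form of $\tau_1$, the condition $\tau_1(x)\ge q$ is equivalent to $x\ge i_1+\dots+i_{q-1}+1$ (with the convention that the empty sum vanishes, so that the bound is $1$ when $q=1$). Setting $Q(q):=i_1+\dots+i_{q-1}+1$, which satisfies $1\le Q(q)\le n$, this yields the identity
\[
r_{\tau_w}[p,q]=\#\{\,j\le p \mid w(j)\ge Q(q)\,\}=r_w[p,Q(q)],
\]
valid for every $w\in S_n$ and all $1\le p\le n$, $1\le q\le k$.

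With this identity in hand the conclusion is immediate. By hypothesis $u\prec v$ in $S_n$, so $r_u[p,q']\le r_v[p,q']$ for all $p,q'$ by Definition~\ref{Bruhat}; specializing $q'=Q(q)$ gives
\[
r_{\tau_u}[p,q]=r_u[p,Q(q)]\le r_v[p,Q(q)]=r_{\tau_v}[p,q]
\]
for all $p,q$. By Definition~\ref{Bruhat2} these inequalities say precisely that $\tau_1^*(u)\prec\tau_1^*(v)$, and since $\tau_1^*(u)\neq\tau_1^*(v)$ is assumed, at least one inequality is automatically strict, so the relation is non-trivial.

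I do not expect any serious obstacle here: the entire content of the proposition is the translation identity $r_{\tau_w}[p,q]=r_w[p,Q(q)]$, and the only point that genuinely requires care is the bookkeeping of the block boundaries --- verifying that $Q(q)=i_1+\dots+i_{q-1}+1$ is exactly the least integer that $\tau_1$ sends to a value $\ge q$, including the degenerate cases $q=1$ and $q=k$. Everything else is a formal consequence of the definitions, and in particular the hypothesis $\tau_1^*(u)\neq\tau_1^*(v)$ is used only to upgrade the weak inequalities to a strict order relation.
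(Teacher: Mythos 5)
Your proof is correct, but it takes a genuinely different and more direct route than the paper. The paper's own proof first reduces, via Lemma~\ref{Bruhatlem}, to the case of a covering relation $v=u\cdot(i,j)$ with $l(v)=l(u)+1$, and then compares the matrices $A_U$ and $A_V$ by a case analysis (the empty rectangle $P_{abij}$, and where new nonzero entries can appear after the rows are merged). Your argument bypasses the chain decomposition entirely: the observation that $\tau_1(x)\ge q$ if and only if $x\ge Q(q):=i_1+\dots+i_{q-1}+1$ yields the exact identity $r_{\tau_1^*(w)}[p,q]=r_w[p,Q(q)]$, so the defining inequalities of Definition~\ref{Bruhat2} for $\tau_1^*(u)$, $\tau_1^*(v)$ are literally a subset --- the specialization $q'=Q(q)$ --- of the defining inequalities of Definition~\ref{Bruhat} for $u\prec v$, and monotonicity of $\tau_1^*$ is immediate. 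The hypothesis $\tau_1^*(u)\neq\tau_1^*(v)$ then upgrades the weak relation to a strict one exactly as the remark in Definition~\ref{Bruhat2} anticipates; your claim that some inequality must then be strict is justified because the rank function determines the multiset permutation (e.g., $\#\{j\le p \,|\, \tau(j)=q\}=r_\tau[p,q]-r_\tau[p,q+1]$). Your bookkeeping at the block boundaries is also sound: $Q(1)=1$ and $Q(k)=i_1+\dots+i_{k-1}+1\le n$, so $r_w[p,Q(q)]$ is always within the range where \eqref{Inv:pq} is defined. As for what each approach buys: yours is shorter, needs no induction along a Bruhat chain, and isolates the single identity on which the proposition rests; the paper's covering-step analysis, while heavier, extracts local information about how a single cover in $S_n$ behaves under projection (it either collapses or descends to a short step in $S_n^I$), which is the kind of structure that feeds into the neighbouring Lemma~\ref{Bruhatlem2} and Proposition~\ref{prop:Brurep2}.
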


\begin{proof}
Let us denote $\tau_1^*(u)=U$ and $\tau_1^*(v)=V$; we will show that $U\prec V$. First of all we observe that it is enough to show this in a particular case
\begin{gather*}
v=u\cdot(i,j),\qquad l(v)=l(u)+1
\end{gather*}
(here $(i,j)$ denotes the transposition of $i$ and $j$ for $i<j$). Indeed, it follows directly from Lemma~\ref{Bruhatlem} that every pair of Bruhat-comparable elements in $S_n$ can be connected by a ``path'' of intermediate elements verifying this condition at every stage.

So let $u$ and $v$ be as above. Since $u\prec v$ we have $r_{u}[p,q] \leq r_{v}[p,q]$. We shall show that $r_{U}[p,q] \leq r_{V}[p,q]$ for all $1\le p\le n$, $1\le q\le k$. It is impossible that all the numbers $r_U[p,q]$ and $r_V[p,q]$ are equal (we assumed that $U\neq V$) and so we have that $U\prec V$. Recall now that we assumed that $v=u(i,j)$, $i<j$ and $l(v)=l(u)+1$. This means, in particular, that $u(i) < u(j)$ (otherwise the number of inversions would decrease from multiplication by $(i,j)$). Let $u(i)=a$, $u(j)=b$ then $v(i)=b$, $v(j)=a$; let $i<l<j$ and for $c=u(l)$ there are three possibilities: $c < a < b$, $a < b < c$ or $a < c < b$. Counting the number of inversions in all three cases, we see that only the f\/irst two options are possible (in the third case the number of inversions will change at least by~$3$ when we swap~$i$ and~$j$).

In terms of the rectangular matrices associated with the elements of $S_n$ we can say that the rectangle $P_{abij}$ cut from $A_u$ by the $a$-th and $b$-th rows and $i$-th and $j$-th columns contains no nonzero elements, see~\eqref{Tabuv}.
\begin{gather}\label{Tabuv}
\begin{tabular}{c|c|c|c|c|c|c|c|c|c|c|c|}
\hline\cline{1-0}
$n$ & $$ & $$ & $$ & $$ & $$ & $$ & $$ & $$ & $$ & $$ & $$\\
\hline\cline{1-0}
$\dots $ & $$ & $$ & $$ & $$ & $$ & $$ & $$ & $$ & $$ & $$ & $$\\
\hline\cline{1-0}
$b$ & $$ & $$ & $$ & $$ & $$ & $$ & $$ & $$ & $1$ & $$ & $$\\
\hline\cline{1-0}
$$ & $$ & $$ & $$ & $$ & $$ & $$ & $$ & $$ & $$ & $$ & $$\\
\hline\cline{1-0}
$$ & $$ & $$ & $$ & $$ & $$ & $$ & $$ & $$ & $$ & $$ & $$\\
\hline\cline{1-0}
$$ & $$ & $$ & $$ & $$ & $$ & $$ & $$ & $$ & $$ & $$ & $$\\
\hline\cline{1-0}
$a$ & $$ & $$ & $$ & $1$ & $$ & $$ & $$ & $$ & $$ & $$ & $$\\
\hline\cline{1-0}
$\dots $ & $$ & $$ & $$ & $$ & $$ & $$ & $$ & $$ & $$ & $$ & $$\\
\hline\cline{1-0}
$2$ & $$ & $$ & $$ & $$ & $$ & $$ & $$ & $$ & $$ & $$ & $$\\
\hline\cline{1-0}
$1$ & $$ & $$ & $$ & $$ & $$ & $$ & $$ & $$ & $$ & $$ & $$\\
\hline\cline{1-0}
$$ & $1$ & $2$ & $\dots $ & $i$ & $$ & $$ & $$ & $$ & $j$ & $\dots $ & $n$
\end{tabular}
\end{gather}
Clearly, no nonzero elements will appear in this rectangle when we swap~$i$ and~$j$.

When we pass from $u$ and $v$ to $U$ and $V$, this matrix ``shrinks'' vertically and the interior of the corresponding rectangle $P_{\tau_1(a)\tau_1(b)ij}$ in the matrices $A_U$ and $A_V$ does not contain nonzero elements: new nonzero elements can appear only in the top and bottom rows of the rectangle (i.e., in the rows number $\tau_1(a)$ and $\tau_1(b)$) and this can happen only simultaneously for~$U$ and~$V$. Hence the matrices $A_U$ and $A_V$ dif\/fer only by positions of~$1$'s in the corners of $P_{k_{a}k_{b}ij}$: in $A_U$ they stand at the entries $(k_{a},i)$ and $(k_{b},j)$, while in $A_V$ they stand at $(k_{a},j)$ and $(k_{b},i)$. Now a~simple inspection shows that $r_{U}[p,q]<r_{V}[p,q]$ for all $p$ and $q$, see~\eqref{TabUV2}.
\begin{gather} \label{TabUV2}
A_U=
\begin{tabular}{c|c|c|c|c|c|c|c|}
\hline\cline{1-0}
$\dots $ & $\dots $ & $\dots $ & $\dots $ & $\dots $ & $\dots $ & $\dots $ & $\dots $\\
\hline\cline{1-0}
$k_{b}$ & $$ & $$ & $$ & $\dots $ & $$ & $1$ & $$\\
\hline\cline{1-0}
$$ & $$ & $$ & $$ & $\dots $ & $$ & $$ & $$\\
\hline\cline{1-0}
$\dots $ & $\dots $ & $\dots $ & $\dots $ & $\dots $ & $\dots $ & $\dots $ & $\dots $\\
\hline\cline{1-0}
$$ & $$ & $$ & $$ & $\dots $ & $$ & $$ & $$\\
\hline\cline{1-0}
$k_{a}$ & $$ & $1$ & $$ & $\dots $ & $$ & $$ & $$\\
\hline\cline{1-0}
$\dots $ & $\dots $ & $\dots $ & $\dots $ & $\dots $ & $\dots $ & $\dots $ & $\dots $\\
\hline\cline{1-0}
$$ & $\dots $ & $i$ & $$ & $\dots $ & $$ & $j$ & $\dots $
\end{tabular}
\\
A_V=
\begin{tabular}{c|c|c|c|c|c|c|c|}
\hline\cline{1-0}
$\dots $ & $\dots $ & $\dots $ & $\dots $ & $\dots $ & $\dots $ & $\dots $ & $\dots $\\
\hline\cline{1-0}
$k_{b}$ & $$ & $1$ & $$ & $\dots $ & $$ & $$ & $$\\
\hline\cline{1-0}
$$ & $$ & $$ & $$ & $\dots $ & $$ & $$ & $$\\
\hline\cline{1-0}
$\dots $ & $\dots $ & $\dots $ & $\dots $ & $\dots $ & $\dots $ & $\dots $ & $\dots $\\
\hline\cline{1-0}
$$ & $$ & $$ & $$ & $\dots $ & $$ & $$ & $$\\
\hline\cline{1-0}
$k_{a}$ & $$ & $$ & $$ & $\dots $ & $$ & $1$ & $$\\
\hline\cline{1-0}
$\dots $ & $\dots $ & $\dots $ & $\dots $ & $\dots $ & $\dots $ & $\dots $ & $\dots $\\
\hline\cline{1-0}
$$ & $\dots $ & $i$ & $$ & $\dots $ & $$ & $j$ & $\dots $
\end{tabular}\tag*{\qed}
\end{gather}
\renewcommand{\qed}{}
\end{proof}

The following statement is similar to Lemma~\ref{Bruhatlem}; it gives a procedure by which we can move between comparable elements of~$S_n^I$.

\begin{lem}\label{Bruhatlem2}
Let $U,V\in S_n^I$, $U \prec V$, $U \neq V$. Let $j_{0}$ be the least $j$, $1\le j\le n$, for which $U(j_{0}) \neq V(j_{0})$; then $U(j_{0}) < V(j_{0})$. Let $l\ge j_0$ be the maximal number for which $V(j_{0})=V(j_{0}+1)=\dots =V(l)$ so that $U(j_{0}) < V(l)$. Let $m$ denote the least integer $l\le m\le n$ verifying the inequality $U(j) \leq V(m) < V(l)$. Put $V' = V(l,m)$, i.e., in terms of the corresponding strings $V'$ is $V$ in which the values of $V(l)$ and $V(m)$ are swapped. Then $U \prec V' \prec V$.
\end{lem}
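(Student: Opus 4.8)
The plan is to verify the lemma entirely through the rank criterion of Definition~\ref{Bruhat2}, i.e.\ to check the two chains of inequalities $r_U[p,q]\le r_{V'}[p,q]\le r_V[p,q]$ for all $p$, $q$. Throughout I write $b=V(j_0)=V(l)$, $c=U(j_0)$ and $a=V(m)$; by the choice of $m$ one has $c\le a<b$, and since $V(l)=b\neq a=V(m)$ the indices satisfy $l<m$. The inequality $U(j_0)<V(j_0)$ asserted in the statement is immediate: were $U(j_0)>V(j_0)$, then taking $p=j_0$ and $q=U(j_0)$ would give $r_U[j_0,q]=r_V[j_0,q]+1$, contradicting $U\prec V$. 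Existence of $m$ is not a problem either: comparing the multisets of $U$ and $V$, which agree on $[1,j_0)$, shows that $V$ carries the value $c$ at some position $>j_0$, and such a position is $>l$ because $V\equiv b$ on $[j_0,l]$; hence some index $\ge l$ carries a $V$-value in $[c,b)$.

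First I would dispose of the easy inequality $V'\prec V$. Since $V'=V\cdot(l,m)$ merely exchanges the entries $V(l)=b$ and $V(m)=a$ with $a<b$ and $l<m$, a one-line computation of $r_{V'}[p,q]-r_V[p,q]$ shows this difference vanishes unless $l\le p<m$ and $a<q\le b$, where it equals $-1$. Thus $r_{V'}[p,q]\le r_V[p,q]$ everywhere, with strict inequality exactly on the nonempty ``box'' $B=\{l\le p<m\}\times\{a<q\le b\}$, so $V'\prec V$.

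It remains to prove $U\prec V'$. Outside $B$ we have $r_{V'}=r_V\ge r_U$, so the only content is to show that on $B$ the hypothesis $U\prec V$ holds \emph{strictly}, i.e.\ $r_U[p,q]<r_V[p,q]$ for $l\le p<m$ and $a<q\le b$. This strict inequality on the box is the heart of the matter and the step I expect to be the main obstacle: the naive gap $g_q(p)=r_V[p,q]-r_U[p,q]$ equals $1$ at $p=j_0$ and cannot decrease while $V(p)=b$, but on $(l,m)$ the entries of $V$ that are $<c$ could a priori let $g_q$ fall back to $0$.

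The device I would use to rule this out is to play the threshold $q$ off against the auxiliary threshold $c=U(j_0)$. Using $r_\tau[p,c]-r_\tau[p,q]=\#\{j\le p:\ c\le\tau(j)<q\}$, set $\phi(p)=\big(r_V[p,c]-r_U[p,c]\big)-g_q(p)$, the difference of the two gaps. At $p=j_0$ one gets $\phi(j_0)=-1$, since $U(j_0)=c$ lies in $[c,q)$ while $V(j_0)=b\ge q$ does not, and the prefix $[1,j_0)$ cancels. For $j_0<j<m$ the entry $V(j)$ never lies in $[c,q)$: on $(j_0,l]$ it equals $b\ge q$, and on $(l,m)$ minimality of $m$ forces $V(j)<c$ or $V(j)\ge b$, and $[c,q)\subseteq[c,b)$. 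Hence each increment $\phi(p)-\phi(p-1)$ is $0$ or $-1$ (namely $-1$ exactly when $c\le U(p)<q$), so $\phi$ is non-increasing and $\phi(p)\le-1$ throughout $[j_0,m)$. Since $r_V[p,c]-r_U[p,c]\ge0$ by $U\prec V$, it follows that $g_q(p)=\big(r_V[p,c]-r_U[p,c]\big)-\phi(p)\ge1$ on all of $[j_0,m)\supseteq[l,m)$, which is exactly the strict inequality needed on $B$. Combining the two parts gives $r_U\le r_{V'}\le r_V$, that is $U\prec V'\prec V$.
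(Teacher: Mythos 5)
Your proposal is correct, and in fact it supplies details the paper itself never gives: for Lemma~\ref{Bruhatlem2} the authors state only that ``the proof is by direct inspection of the def\/initions'' and omit it. Your argument is exactly such an inspection of the rank criterion of Definition~\ref{Bruhat2}, carried out in full. All the pieces check out: the contradiction at $(p,q)=(j_0,U(j_0))$ forces $U(j_0)<V(j_0)$; the multiset comparison on $[1,j_0)$ shows $V$ carries the value $c=U(j_0)$ at some position $>l$, so $m$ exists; the swap computation correctly locates the set where $r_{V'}<r_V$ as the box $l\le p<m$, $a<q\le b$, giving $V'\prec V$; and the reduction of $U\prec V'$ to the strict inequality $r_U[p,q]<r_V[p,q]$ on that box is the genuinely delicate point, which your two-threshold device handles cleanly. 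The mechanism behind your function $\phi$ is that $g_q(p)\ge g_c(p)+1$ on $[j_0,m)$: whenever an entry $U(p)\ge q$ with $V(p)<c$ threatens to pull $g_q$ down to zero, it pulls $g_c=r_V[\cdot,c]-r_U[\cdot,c]$ down by the same amount, and the nonnegativity of $g_c$ (from $U\prec V$ at threshold $c$) is what keeps $g_q\ge 1$; the hypotheses on $l$ and the minimality of $m$ enter precisely to guarantee that $V$ contributes nothing to $[c,q)$ on $(j_0,m)$, so $\phi$ is non-increasing after its initial value $-1$. This parallels the standard proofs of the lifting/exchange lemma for the Bruhat order (compare Lemma~\ref{Bruhatlem} from~\cite{F} and the treatment in~\cite{BB}), adapted to the rectangular rank matrices of multiset permutations, and it is a legitimate self-contained replacement for the omitted proof; one could alternatively deduce the lemma by descending to the least coset representatives in $S_n$ via Proposition~\ref{prop:Brurep2} and invoking Lemma~\ref{Bruhatlem} there, but your direct route avoids the circularity worry that Proposition~\ref{prop:Brurep2} is itself proved in the paper using Lemma~\ref{Bruhatlem2}.
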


The proof is by direct inspection of the def\/initions and we omit it. Finally we use this procedure to move ``in the opposite direction'': from $S_n^I$ to~$S_n$.
\begin{predl}\label{prop:Brurep2}
Every coset $(\tau_1^*)^{-1}(\upsilon)\subset S_n$, $\upsilon\in S_n^I$ contains the unique least element $u\in S_n$ (with respect to the Bruhat order in $S_n$) so that if $\xi\prec \zeta$ in $S_n^I$, then a similar inequality holds for the corresponding least elements in~$S_n$: $x\prec z$.
\end{predl}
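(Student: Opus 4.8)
The plan is to carry out everything at the level of the rank matrices $r[p,q]$ that define the two orders (Definitions \ref{Bruhat} and \ref{Bruhat2}), which lets me sidestep any case analysis on cosets. Write $s_q=i_1+\dots+i_{q-1}+1$ for the smallest element of the $q$-th block $B_q=\{s_q,\dots,s_{q+1}-1\}$ (with $s_{k+1}=n+1$), so that $B_q=\tau_1^{-1}(q)$. The fibre $(\tau_1^*)^{-1}(\upsilon)$ is exactly the set of $w\in S_n$ with $w(l)\in B_{\upsilon(l)}$ for every $l$; for any such $w$ the equivalence $\upsilon(l)\ge q\Leftrightarrow w(l)\ge s_q$ gives the boundary identity $r_\upsilon[p,q]=r_w[p,s_q]$ for all $p$ and all $q\in\{1,\dots,k\}$. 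Thus every element of the fibre has the same rank matrix along the block-start columns $s_1,\dots,s_k$, and the entire problem is to control the interior columns. I single out the candidate least element $u_\upsilon$: among the positions $l$ with a fixed value $\upsilon(l)=j$, listed in increasing order, assign the elements of $B_j$ in increasing order.

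First I would establish the explicit interior formula, valid for every $q'\in\{1,\dots,n\}$ with $j$ determined by $s_j\le q'<s_{j+1}$:
\[
r_{u_\upsilon}[p,q']=\max\bigl(r_\upsilon[p,j+1],\,r_\upsilon[p,j]-(q'-s_j)\bigr),
\]
which reduces to the boundary identity when $q'=s_j$. To see this, split the count $r_{u_\upsilon}[p,q']=\#\{l\le p:u_\upsilon(l)\ge q'\}$ by the value of $\upsilon(l)$: positions with $\upsilon(l)>j$ all contribute, giving $r_\upsilon[p,j+1]$; positions with $\upsilon(l)<j$ never contribute; and among the $a:=r_\upsilon[p,j]-r_\upsilon[p,j+1]$ positions with $\upsilon(l)=j$ lying in $\{1,\dots,p\}$, which under $u_\upsilon$ receive the $a$ smallest elements of $B_j$, exactly $\max(0,a-(q'-s_j))$ have value $\ge q'$. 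The same splitting applied to an arbitrary $w$ in the fibre shows $r_w[p,q']\ge r_{u_\upsilon}[p,q']$, because whatever $a$-element subset of $B_j$ the relevant positions receive, at least $\max(0,a-(q'-s_j))$ of its elements are $\ge q'$, the minimum being attained precisely by the $a$ smallest ones. Hence $r_{u_\upsilon}[p,q']\le r_w[p,q']$ for all $p,q'$, i.e.\ $u_\upsilon\preceq w$; so $u_\upsilon$ is a least element of the fibre, and least elements in a poset are unique by antisymmetry. This settles the first assertion.

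For the order-preservation statement, let $x=u_\xi$ and $z=u_\zeta$ be the least elements of the fibres over $\xi\prec\zeta$. By hypothesis $r_\xi[p,q]\le r_\zeta[p,q]$ for all $p$ and all $q\in\{1,\dots,k\}$. Feeding these into the interior formula and using that $\max$ is nondecreasing in each argument, for every column $q'$ lying in block $j$ I obtain
\[
r_x[p,q']=\max\bigl(r_\xi[p,j+1],r_\xi[p,j]-(q'-s_j)\bigr)\le\max\bigl(r_\zeta[p,j+1],r_\zeta[p,j]-(q'-s_j)\bigr)=r_z[p,q'].
\]
Therefore $r_x[p,q']\le r_z[p,q']$ for all $p,q'\in\{1,\dots,n\}$, so $x\preceq z$ by Definition \ref{Bruhat}; since $\xi\neq\zeta$ forces $\tau_1^*(x)\neq\tau_1^*(z)$ and hence $x\neq z$, we conclude $x\prec z$.

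The only genuine work is the interior formula of the second paragraph, which is where the special shape of the representative $u_\upsilon$ enters; once it is available, both minimality/uniqueness and monotonicity collapse to elementary manipulations of $\max$. I expect the step requiring the most care to be the verification that the $a$ smallest elements of $B_j$ simultaneously minimise the tail-count for every $p$, so that the single permutation $u_\upsilon$ realises the column-wise minimum over the whole fibre at once. An alternative route to the monotonicity, closer in spirit to the proof of Proposition \ref{prop:Brurep1}, would be to use Lemma \ref{Bruhatlem2} to connect $\xi$ to $\zeta$ by a chain of single transpositions of entries and to compare the rank tables of $u_\xi$ and $u_\zeta$ one step at a time; but the $\max$-formula makes this chain argument unnecessary.
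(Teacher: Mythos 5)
Your proof is correct, but it takes a genuinely different route from the paper's. The paper first identifies the least element abstractly as the ``deshuf\/f\/le'' $s_\upsilon$ (the unique representative of the coset whose inverse is an $I$-shuffle), argues its minimality by observing that the Bruhat order restricted to a cluster depends only on the permutations inside the separate blocks (so it is the product of the Bruhat orders on $S_{i_1}\times\cdots\times S_{i_k}$), and then proves the monotonicity statement by a chain argument: Lemma~\ref{Bruhatlem2} reduces everything to a single covering swap $V=U(l,m)$, for which one checks $s_V=s_U(l,m)$ and $l_{s_V}=l_{s_U}+1$. You instead stay entirely inside Definitions~\ref{Bruhat} and~\ref{Bruhat2}: you compute the full rank table of your candidate $u_\upsilon$ (which is the same element as the paper's deshuffle) in closed form, $r_{u_\upsilon}[p,q']=\max\bigl(r_\upsilon[p,j+1],\,r_\upsilon[p,j]-(q'-s_j)\bigr)$ with the implicit convention $r_\upsilon[p,k+1]=0$, show it is the pointwise minimum of $r_w[p,q']$ over the fibre, and then obtain minimality, uniqueness (antisymmetry) and monotonicity all from the monotonicity of $\max$; your verifications, including the boundary identity $r_w[p,s_q]=r_\upsilon[p,q]$, are sound. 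What your approach buys: it is self-contained, it bypasses Lemma~\ref{Bruhatlem2} (stated without proof in the paper) and the paper's rather terse coset-order argument, and it proves slightly more, namely an explicit formula for the rank matrix of the least representative together with $u_\upsilon\preceq w$ for \emph{every} $w$ in the fibre. What the paper's route buys: structural by-products that are reused later, namely the identification of the within-cluster order with the lexicographic/product order and, in particular, the existence of a unique \emph{greatest} element in each cluster, which is invoked for the dual cells in Proposition~\ref{prop:dualcells}; your method would recover that greatest element by the mirror assignment (elements of each $B_j$ placed in decreasing order), though you do not state it. One cosmetic slip: for a single fixed $q'$ the minimum $\max(0,a-(q'-s_j))$ is attained by many $a$-subsets, not ``precisely'' by the $a$ smallest (that uniqueness holds only simultaneously over all $q'$), but since your argument uses only the lower bound, nothing is affected.
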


Recall that the \textit{least element} in a partial order is the element, which is less than all others, in particular, it is comparable with all others. It is clear that it is unique if it exists.

\begin{proof} Observe that every coset $(\tau_1^*)^{-1}(\upsilon)$ contains a~unique element $s_\upsilon$ such that~$s_\upsilon^{-1}$ is a~\textit{$I$-shuffle}, i.e., a~permutation which preserves the order of the elements inside the ``blocks'' of the partition~$I$. We shall call such $s_\upsilon$ the \textit{$I$-deshuffle}; clearly, this deshuf\/f\/le is the least element that we need. This follows easily from the fact that for any permutation $u\in(\tau_1^*)^{-1}(\upsilon)$ and any $i<k<j$ such that $\upsilon(i)=\upsilon(j)\neq\upsilon(k)$ we shall have either $u(k)<u(i)$, $u(k)<u(j)$, or $u(i)<u(k)$, $u(j)<u(k)$ (i.e., $u(k)$ cannot lie between~$u(i)$ and~$u(j)$). Hence, the order in $(\tau_1^*)^{-1}(\upsilon)$ (induced from $S_n$) depends only on the permutation of the elements inside separate blocks of the partition, and not on~$\upsilon$. In particular, the shuf\/f\/le corresponds to the ``unit'' with respect to the permutations inside the blocks; hence, it is least.

Consider now arbitrary $U,V\in S_n^I$ and let $s_U$, $s_V$ be the corresponding deshuf\/f\/les. By Lemma~\ref{Bruhatlem2} we can assume that $V$ is obtained from $U$ by swapping two elements chosen as it is explained in its condition: $V=U(l,m)$. Then, as one can see $s_V=s_U(l,m)$: clearly $s_U(l,m)\in(\tau_1^*)^{-1}(V)$ and the fact that it coincides with~$s_V$ follows from the choice of~$l$ and~$m$, see Lemma~\ref{Bruhatlem2}. Similarly, from the same choice it follows that all the elements in the string corresponding to~$s_U$ that lie between $l$ and $m$, are either greater than both~$s_U(l)$ and~$s_U(m)$, or less than both. Hence,
\begin{gather*}
l_{s_V}=l_{s_U(l,m)}=l_{s_U}+1.\tag*{\qed}
\end{gather*}
\renewcommand{\qed}{}
\end{proof}

Observe that we have in fact established that the Bruhat order on $S_n$, when restricted to \textit{clusters} $(\tau_1^*)^{-1}(\upsilon)$, $\upsilon\in S_n^I$, coincides with the \textit{lexicographic order} on the direct product $S_{i_1}\times S_{i_2}\times\dots\times S_{i_k}$ where each factor is equipped with the usual Bruhat order on~$S_l$. In particular, there is a unique \textit{greatest} element in it too.

\subsection{Schubert cells in partial f\/lags and Bruhat order}\label{ss:bru:schucells}
The def\/inition of the Schubert cells given in Section~\ref{ss:bru:s_n} can be extended to a more general situation of the partial f\/lag spaces, in particular to Grassmannians, so that the canonical projection~\eqref{eq:projflag} maps the Schubert cells in ${\rm Fl}_n(\mathbb R)$ to the cells in the partial f\/lags. We shall use this observation and the properties of the Bruhat order to describe some of the structures of the Schubert cells, their duals and their intersections in the partial f\/lag spaces. The main references for this section are~\cite{Br,F, GH}.

We begin with the general def\/inition of the Schubert cells in an important particular case: $k=1$, i.e., the f\/lag space is equal to a Grassmannian ${\rm Gr}_{d,n}(\mathbb R)$, the space of $d$-planes in $\mathbb R^n$.

Let $\{e_{1},\dots ,e_{n}\}$ be a basis of $\mathbb R^n$. Any $d$-dimensional hyperplane inside $\mathbb R^n$ is determined by a collection of $d$ linearly-independent vectors in $\mathbb R^n$, which is the same (when the basis of~$\mathbb R^n$ is f\/ixed) as a $d \times n$, $d<n$ matrix with a maximal rank. It is clear that two matrices~$\Lambda$,~$\Lambda'$ determine the same $d$-plane if and only if there exists $g \in {\rm GL}(d, \mathbb{R})$ such that:
\begin{gather*}
\Lambda' = g \cdot \Lambda.
\end{gather*}
We shall use the same symbol $\Lambda$ to denote the $d$-plane corresponding to the matrix~$\Lambda$. Observe that the natural ${\rm GL}(n, \mathbb R)$-action on the Grassmannian translates into the right multiplication of $\Lambda$ by the matrices $B\in {\rm GL}(n, \mathbb R)$. Also observe that in our notation the left and right actions exchange their roles: in Section~\ref{ss:bru:p&pfl} we def\/ine ${\rm Gr}_{d,n}$ as the quotient of ${\rm SL}(n, \mathbb R)$ by the right action of a parabolic subgroup so that ${\rm GL}(n, \mathbb R)$ acts on it from the left, and when we draw rectangular matrices, the action and factorization change directions. This can be amended by the use of the $n\times d$ matrices instead of the $d\times n$ matrices.

Let for any multi-index $A=(a_1,\dots,a_d)$, $1\le a_1<a_2<\dots<a_d\le n$ symbol $V_A$ denote the subspace in $\mathbb R^n$ spanned by $\{e_{a_1},e_{a_2},\dots,e_{a_d}\}$. Then
\begin{defi}\label{def:Schugra}
The Schubert cell $X_A$ is the $B_n^+$-orbit of the point $V_A$ in Grassman\-nian ${\rm Gr}_{d,n}(\mathbb R)$.
\end{defi}

One can give a more geometric def\/inition of the Schubert cells. To this end consider the subspaces $V_i$, $i=1,\dots,n$, $V_i\subseteq\mathbb R^n$ spanned for each $i$ by the vectors $\{e_{1},\dots ,e_{i}\}$ and consider for an arbitrary $d$-plane $\Lambda\in {\rm Gr}_{d,n}(\mathbb R)$ the intersections:
\begin{gather*}
0 \subset \Lambda \cap V_{1} \subset \Lambda \cap V_{2} \subset \cdots \subset \Lambda \cap V_{n-1} \subset \Lambda \cap V_{n} = \Lambda.
\end{gather*}
Then for the same multi-index $A=(a_1,\dots,a_d)$ we have
\begin{defi}\label{def:Schugra2}
The Schubert cell $X_A$ is the set of such planes $\Lambda\in {\rm Gr}_{d,n}(\mathbb R)$ that
\begin{gather*}
\dim (\Lambda\cap V_j)=\#\{k\,|\, 1\le k\le d,\ a_k<j\},\qquad \text{for all}\ j=1,\dots,n.
\end{gather*}
\end{defi}

The rectangular matrices $\Lambda$ corresponding to the planes in $X_A$ can be chosen in a special form:
\begin{gather*}
\left(\begin{array}{@{}cccccccccccc@{}}
 0 & \cdots & 0 & 1 & {*} & \cdots & 0 & {*} & 0 & \cdots & \cdots & {*} \\
 0 & \cdots & 0 & 0 & \cdots & 0 & 1 & {*} & 0 & \cdots & \cdots & {*} \\
\cdots & \cdots & \cdots & 0 & \cdots & \cdots & 0 & \cdots & 0 & \cdots & \cdots & {*} \\
 0 & \cdots & \cdots & 0 & \cdots & \cdots & 0 & 0 & 1 & {*} & \cdots & {*}
\end{array}\right).
\end{gather*}
Here $1$'s stand in the intersections of the $i$-th rows and $(n-a_{d-i+1}+1)$-th columns and are the only nonzero elements of the column; asterisks are used to denote arbitrary real numbers. This f\/lip of indices is caused by the fact that we have substituted the left action of $B_n^+$ for the right.

As one can see the sets $X_A$ are indeed cells: it follows from the shape of the matrices we use here that
\begin{gather*}
X_A \cong R^{dn - \sum(n+i- a_{i})}.
\end{gather*}
Here $dn$ is the dimension of the space of the $d\times n$ matrices and we subtract the number of f\/ixed elements in a matrix. One can also introduce the ``length'' of the sequence~$A$ by putting
\begin{gather*}
l(A)=\sum_{k=1}^da_k-k.
\end{gather*}
Then it is easy to see that $l(A)=dn-\sum(n+i-a_i)$ and so
$X_A\cong \mathbb R^{l(A)}$.
It turns out that $\bigcup_AX_A$ is a cell decomposition of ${\rm Gr}_{d,n}(\mathbb R)$. Moreover, one can show (see \cite{Br}) that a cell $X_B$ is adjacent to $X_A$ if and only if $b_k\le a_k$ for all $k$. In this case we shall write $B\le A$; this is a~partial order on the set of all sequences.

Finally, we associate a sequence $A$ with the multiset permutation $\omega\in S_n^{(d,n-d)}$ given by
\begin{gather*}
\omega(a_1)=\dots=\omega (a_d)=1,\qquad \omega(j)=2,\qquad j\not\in\{a_1,\dots,a_d\}.
\end{gather*}
It is easy to see that the Bruhat order on $S_n^{(d,n-d)}$ corresponds to the partial order given by the inequalities $B\le A$.

\begin{ex}
Consider the Grassmannian ${\rm Gr}_{2,4}(\mathbb{R})$, the space of all real $2$-planes in $\mathbb{R}^4$. In this case there are $6$ cells corresponding to the sequences $(1,2)$, $(1,3)$, $(1,4)$, $(2,3)$, $(2,4)$ and~$(3,4)$. The corresponding Schubert cells are spanned by the matrices of the following shapes (we use the equality signs to denote this):
\begin{gather*}
C_{(1,2)}=
\left(
\begin{matrix}
 0 & 0 & 1 & 0 \\
 0 & 0 & 0 & 1
\end{matrix}
\right), \qquad
C_{(1,3)}=
\left(
\begin{matrix}
 0 & 1 & * & 0 \\
 0 & 0 & 0 & 1
\end{matrix}
\right), \qquad
C_{(1,4)}=
\left(
\begin{matrix}
 1 & * & * & 0 \\
 0 & 0 & 0 & 1
\end{matrix}
\right), \\
C_{(2,3)}=
\left(
\begin{matrix}
 0 & 1 & 0 & * \\
 0 & 0 & 1 & *
\end{matrix}
\right), \qquad
C_{(2,4)}=
\left(
\begin{matrix}
 1 & * & 0 & * \\
 0 & 0 & 1 & *
\end{matrix}
\right), \qquad
C_{(3,4)}=
\left(
\begin{matrix}
 1& 0 & * & * \\
 0 & 1 & * & *
\end{matrix}
\right).
\end{gather*}
\end{ex}

In the general case of an arbitrary partial f\/lag space ${\rm Fl}_{i_1,\dots,i_k}(\mathbb R)$ (which we shall often abbreviate to just ${\rm Fl}_I(\mathbb R)$) we can def\/ine the Schubert cells decomposition similarly by choosing a~basis $\{e_{1},\dots ,e_{n}\}$ of~$\mathbb R^n$; then for any partial f\/lag
\begin{gather*}
E_\bullet=(E_1\subset E_2\subset\dots\subset E_k)
\end{gather*}
of vector subspaces in $\mathbb R^n$ and any multiset permutation $\omega\in S_n^I$ (where $I$ is a $k$-partition of $n$) we say that $E_\bullet$ is in the cell $X_\omega$ if and only if
\begin{gather}\label{cellflag}
\dim (E_{p} \cap E_{q}) = \# \{i \leq p \,|\, \omega(i) \leq q \}
\end{gather}
for all $1\le p\le n$, $1\le q\le k$. It is easy to see that these sets are indeed cells, that the cells def\/ined for dif\/ferent $\omega\in S_n^I$ do not intersect and that ${\rm Fl}_I(\mathbb R)$ is equal to their union. The closures of these cells will be denoted by~$\overline X_\omega$; they are called \textit{the Schubert varieties} in ${\rm Fl}_I(\mathbb R)$. One can describe them by replacing the equalities in~\eqref{cellflag} by the non-strict inequality~$\ge$; they are indeed singular algebraic subvarieties in~${\rm Fl}_I(\mathbb R)$.

As before, the same sets can be described as orbits of the group of upper-triangular matri\-ces~$B_n^+$; if we view the partial f\/lag space as the quotient space of ${\rm SL}(n, \mathbb R)$ (or ${\rm SO}(n, \mathbb R)$) by some parabolic subgroup~$P$, we can def\/ine the Schubert cells as the orbits of certain elements within~${\rm Fl}_I(\mathbb R)$ with respect to the action of $B^+_n$ (or some block-diagonal matrices, if we speak about ${\rm SO}(n, \mathbb R)$): take the matrix~$A_w$ corresponding to the minimal representative~$w$ in~$S_n$ of an element $\omega\in S_n^I$ (see Section~\ref{ss:bru:p&pfl}), then
\begin{gather*}
X_\omega=B^+_n(A_w)P/P\subset {\rm SL}(n;\mathbb R)/P.
\end{gather*}
It clearly follows from this def\/inition that \textit{the Schubert cells in the partial flag spaces are equal to the projections of the Schubert cells in the full flag space under the projection $\pi$, see~\eqref{eq:projflag}}. Moreover, one can show (see~\cite{Br}) that they are in ef\/fect homeomorphic to the cell of the least element~$w$ in ${\rm Fl}_n(\mathbb R)$.

Now by comparing this description with the results of the previous section (see Propositions~\ref{prop:Brurep1} and~\ref{prop:Brurep2} and Lemma~\ref{Bruhatlem2}) we obtain the following statement:
\begin{predl}\label{prop:Bruordcel}
For any two permutations with repetitions $\psi, \omega\in S_n^I$ we have $\psi\prec\omega$ in the Bruhat order if and only if the corresponding cells~$X_\psi$ and $X_\omega$ are adjacent; on the level of Schubert varieties this implies the inclusion $\overline{X}_\psi\subset\overline{X}_\omega$.
\end{predl}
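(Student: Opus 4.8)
The plan is to transport the statement from the full flag space ${\rm Fl}_n(\mathbb{R})$ down to ${\rm Fl}_I(\mathbb{R})$ along the projection $\pi$ of~\eqref{eq:projflag}, using the minimal representatives of the clusters as a bridge. Write $w_\psi, w_\omega \in S_n$ for the least elements ($I$-deshuffles) of the clusters $(\tau_1^*)^{-1}(\psi)$ and $(\tau_1^*)^{-1}(\omega)$ supplied by Proposition~\ref{prop:Brurep2}. Two ingredients are needed: a combinatorial equivalence $\psi \prec \omega \Leftrightarrow w_\psi \prec w_\omega$, and a geometric dictionary under which $\pi$ carries the full-flag Schubert variety $\overline{X}_w$ onto the partial-flag Schubert variety $\overline{X}_{\tau_1^*(w)}$. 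The combinatorial equivalence is already available: Proposition~\ref{prop:Brurep2} gives $\psi \prec \omega \Rightarrow w_\psi \prec w_\omega$, and Proposition~\ref{prop:Brurep1}, applied to $w_\psi \prec w_\omega$ together with $\tau_1^*(w_\psi) = \psi \neq \omega = \tau_1^*(w_\omega)$, supplies the reverse implication.

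For the geometric dictionary I would first note that $\pi$ is $B_n^+$-equivariant and that forgetting the intermediate subspaces does not distinguish flags that differ only by a permutation inside one block of $I$; hence $\pi(X_v) = X_{\tau_1^*(v)}$ for every $v$ in a cluster, and not merely for its minimal element. Since all flag spaces are compact, $\pi$ is a closed map, so $\pi(\overline{X}_w) = \overline{\pi(X_w)} = \overline{X}_{\tau_1^*(w)}$. Combining this with the full-flag decomposition $\overline{X}_{w_\omega} = \bigcup_{v \prec w_\omega} X_v$ (which follows from the cell decomposition of ${\rm Fl}_n(\mathbb{R})$ and Proposition~\ref{prop:schubru}) yields the identity $\overline{X}_\omega = \bigcup_{v \prec w_\omega} X_{\tau_1^*(v)}$ that governs both directions.

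The forward implication is then immediate: $\psi \prec \omega$ gives $w_\psi \prec w_\omega$, hence $\overline{X}_{w_\psi} \subseteq \overline{X}_{w_\omega}$ by Proposition~\ref{prop:schubru}, and applying the closed map $\pi$ gives $\overline{X}_\psi \subseteq \overline{X}_\omega$; in particular $X_\psi \subseteq \overline{X}_\omega$, so the cells are adjacent. The converse carries the main difficulty, because $\pi$ is not injective and an inclusion of cells downstairs cannot be pulled back naively. Here the identity for $\overline{X}_\omega$ does the work: adjacency $X_\psi \subseteq \overline{X}_\omega$ forces $\psi = \tau_1^*(v)$ for some $v$ in the cluster of $\psi$ with $v \prec w_\omega$, and the minimality of $w_\psi$ inside its cluster (Proposition~\ref{prop:Brurep2}, where the least element is comparable to, and below, every other member) then gives $w_\psi \prec v \prec w_\omega$, whence $w_\psi \prec w_\omega$ and Proposition~\ref{prop:Brurep1} returns $\psi \prec \omega$. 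The step I expect to require the most care is the geometric dictionary of the previous paragraph --- verifying $\pi(X_v) = X_{\tau_1^*(v)}$ for every $v$ in a cluster and the closedness of $\pi$, so that Schubert varieties map onto Schubert varieties; once this is secured, the non-injectivity of $\pi$ in the converse is absorbed entirely by the cluster-minimality of the deshuffle.
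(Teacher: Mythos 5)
Your proof is correct and follows essentially the same route as the paper, which disposes of this proposition in one line by ``comparing'' the description of partial-flag Schubert cells as $\pi$-images of the cells of minimal representatives with Propositions~\ref{prop:Brurep1} and~\ref{prop:Brurep2}. Your write-up simply makes that comparison explicit --- $B_n^+$-equivariance and closedness of $\pi$ giving $\pi(\overline{X}_{w})=\overline{X}_{\tau_1^*(w)}$, plus the cluster-minimality of the deshuffle to handle the non-injectivity of $\pi$ in the converse --- and all of these steps check out.
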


Finally, similarly to the case of the full f\/lag space ${\rm Fl}_n(\mathbb R)$ in ${\rm Fl}_I(\mathbb R)$ one can also def\/ine the dual Schubert cells $X^\wedge_\omega$, $\omega\in S_n^I$ (and the corresponding Schubert varieties). This can be done either by a modif\/ication of the rank (in)equalities~\eqref{cellflag}, or as orbits of the corresponding elements in ${\rm Fl}_I(\mathbb R)$ with respect to the action of $B^-_n$, or f\/inally simply as projections of the dual cells in~${\rm Fl}_n(\mathbb R)$. As before a quick comparison of the def\/initions proves the following proposition:
\begin{predl}\label{prop:dualcells}
A Schubert cell $X_\psi\subset {\rm Fl}_I(\mathbb R)$ intersects with the dual Schubert cell $X^\wedge_\omega\subset {\rm Fl}_I(\mathbb R)$ if and only if $\psi\prec\omega$ in the Bruhat order in $S_n^I$ and in the latter case the intersection is transversal.
\end{predl}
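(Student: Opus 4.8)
The plan is to reduce the statement about dual cells in the partial flag space ${\rm Fl}_I(\mathbb R)$ to the corresponding statement in the full flag space ${\rm Fl}_n(\mathbb R)$, namely Proposition \ref{prop:dualschu}, by exploiting the projection $\pi$ of \eqref{eq:projflag} together with the combinatorial dictionary established in Section \ref{ss:bru:p&pfl}. The key structural fact, already recorded in the discussion preceding this proposition, is that both the Schubert cells $X_\psi$ and the dual Schubert cells $X^\wedge_\omega$ in ${\rm Fl}_I(\mathbb R)$ arise as images under $\pi$ of the corresponding cells $X_w$ and $X^\vee_u$ in ${\rm Fl}_n(\mathbb R)$, where $w$ and $u$ are suitable representatives in $S_n$ of the multiset permutations $\psi,\omega\in S_n^I$. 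The natural choice is to let $w=s_\psi$ and $u=s_\omega$ be the $I$-deshuffles, i.e.\ the minimal representatives in the cosets $(\tau_1^*)^{-1}(\psi)$ and $(\tau_1^*)^{-1}(\omega)$ supplied by Proposition \ref{prop:Brurep2}.

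First I would set up the two nonemptiness criteria side by side. By Proposition \ref{prop:dualschu}, the full-flag cells satisfy $X_w\cap X^\vee_u\neq\varnothing$ if and only if $w\prec u$ in $S_n$, and by Proposition \ref{prop:Brurep2} the inequality $s_\psi\prec s_\omega$ between the least representatives is equivalent to $\psi\prec\omega$ in $S_n^I$. It therefore suffices to prove the following two facts about the fibration $\pi$: that $\pi(X_w)\cap\pi(X^\vee_u)\neq\varnothing$ is equivalent to $X_w\cap X^\vee_u\neq\varnothing$ (for these particular representatives), and that transversality is preserved and reflected by $\pi$. The forward implication of the first fact is immediate, since a point in $X_w\cap X^\vee_u$ projects into both images. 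For the reverse implication I would use that $\pi$ is a locally trivial fibre bundle (established in Section \ref{ss:bru:fs&gr}) whose restriction to a Schubert cell is a homeomorphism onto its image—this is exactly the assertion, cited from \cite{Br}, that the partial-flag Schubert cell is homeomorphic to the full-flag cell of the minimal representative—so that the preimages under $\pi$ of $X_\psi$ and $X^\wedge_\omega$ are unions of full-flag cells among which $X_{s_\psi}$ and $X^\vee_{s_\omega}$ are distinguished, and a common point downstairs lifts to a common point of the distinguished lifts.

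The hard part will be the transversality claim, and more precisely making the lifting argument in the previous paragraph canonical rather than merely set-theoretic. The subtlety is that the $B_n^+$-orbit and the $B_n^-$-orbit that define $X_\psi$ and $X^\wedge_\omega$ downstairs are projections of unions of full-flag orbits, and one must check that the unique pair of minimal representatives $s_\psi\prec s_\omega$ selects the lift along which the intersection actually occurs, so that the transversality of $X_{s_\psi}\pitchfork X^\vee_{s_\omega}$ guaranteed by Proposition \ref{prop:dualschu} descends through $\pi$. I would handle this by working in the rank-inequality description \eqref{cellflag}: the dual cell $X^\wedge_\omega$ is cut out by complementary rank conditions, and intersecting these with the conditions defining $X_\psi$ yields, via the matrix picture of Proposition \ref{prop:Brurep1}, precisely the numerical constraints $r_\psi[p,q]\le r_\omega[p,q]$. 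Since $\pi$ is a submersion, transversality upstairs and the fact that the fibre directions are tangent to both $X^\wedge_\omega$ and its complement together force the downstairs intersection to be transversal as well; the dimension count $\dim X_\psi+\dim X^\wedge_\omega=\dim{\rm Fl}_I(\mathbb R)+\dim(X_\psi\cap X^\wedge_\omega)$ then follows from the corresponding count in ${\rm Fl}_n(\mathbb R)$ by subtracting the common fibre dimension. This completes the equivalence and the transversality assertion.
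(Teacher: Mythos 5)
Your overall strategy---reducing to Proposition~\ref{prop:dualschu} in ${\rm Fl}_n(\mathbb R)$ through the projection $\pi$ and the combinatorial dictionary of Propositions~\ref{prop:Brurep1} and~\ref{prop:Brurep2}---is the same as the paper's, but you make the wrong choice of full-flag representative for the dual cell, and this is where the argument genuinely breaks. The paper's key observation is an asymmetry: $X_\psi$ is the homeomorphic image under $\pi$ of the cell $X_{s_\psi}$ of the \emph{least} element of the cluster $(\tau_1^*)^{-1}(\psi)$, but $X^\wedge_\omega$ is the homeomorphic image of the dual cell $X^\vee_{w'}$ of the \emph{greatest} element $w'$ of $(\tau_1^*)^{-1}(\omega)$ (whose existence is noted at the end of Section~\ref{ss:bru:p&pfl}). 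This choice is forced by dimensions: $l(w')=l(s_\omega)+\dim X$, where $X$ is the fibre of $\pi$, so $\dim X^\vee_{w'}=\dim X^\wedge_\omega$, whereas $\dim X^\vee_{s_\omega}=\dim X^\wedge_\omega+\dim X$ and the restriction of $\pi$ to $X^\vee_{s_\omega}$ is a surjection onto $X^\wedge_\omega$ with positive-dimensional fibres, not a homeomorphism. Consequently your central canonicity claim---that a common point of $X_\psi$ and $X^\wedge_\omega$ lifts to a common point of the ``distinguished lifts'' $X_{s_\psi}$ and $X^\vee_{s_\omega}$---is false in general: since $\pi^{-1}(X^\wedge_\omega)$ is the disjoint union of the dual cells $X^\vee_u$ over \emph{all} $u\in(\tau_1^*)^{-1}(\omega)$, the unique lift through the homeomorphism $\pi|_{X_{s_\psi}}$ lands in $X^\vee_u$ for some uncontrolled representative $u$, not necessarily $s_\omega$. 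Your dimension count by ``subtracting the common fibre dimension'' and the transversality descent inherit the same defect, and the assertion that the fibre directions are ``tangent to both $X^\wedge_\omega$ and its complement'' does not parse, since the fibre directions live upstairs while $X^\wedge_\omega$ lives downstairs.

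The damage is repairable, and the repairs show where the real content lies. The direction $\psi\prec\omega\Rightarrow X_\psi\cap X^\wedge_\omega\neq\varnothing$ does work with your minimal pair, via Proposition~\ref{prop:Brurep2} and the inclusion $\pi(X^\vee_{s_\omega})\subseteq X^\wedge_\omega$. For the converse you must lift a common point through $\pi|_{X_{s_\psi}}$, obtain $s_\psi\prec u$ for the unknown representative $u$ from Proposition~\ref{prop:dualschu}, and then apply Proposition~\ref{prop:Brurep1} to conclude $\psi\prec\omega$---the step you only gesture at. For transversality, either apply Proposition~\ref{prop:dualschu} to the actual pair $(s_\psi,u)$ at the lifted point and push the spanning of tangent spaces down through the surjection $d\pi$, or, as the paper does, use the greatest element $w'$, for which $\pi|_{X^\vee_{w'}}$ is a homeomorphism onto $X^\wedge_\omega$ and both nonemptiness and transversality descend at once through the submersion. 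As written, however, your ``hard part'' rests on a false lifting claim, so there is a genuine gap.
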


The proof follows from the observation that $X^\wedge_\omega$ corresponds to the cell determined by the greatest element $w'$ in the cosets $(\tau_1^*)^{-1}(\omega)$ just as $X_\omega$ corresponds to the least one: i.e., $X_\omega^\wedge$ is a homeomorphic image of the dual cell corresponding to $w'$ in ${\rm Fl}_n(\mathbb R)$ under the projection $\pi$. Since the statement of the Proposition~\ref{prop:dualcells} holds in ${\rm Fl}_n(\mathbb R)$, it must hold in ${\rm Fl}_I(\mathbb R)$ as well (we must also use the fact that $\pi$ is submersion).

\section{The FS Toda lattice on partial f\/lag spaces}\label{s:tod}
\subsection{Fibre bundles and Bott--Morse functions}\label{ss:tod:mbott}
In our study of the Toda system on symmetric matrices with coinciding eigenvalues we shall need a few statements from the dif\/ferential geometry of homogeneous spaces of Lie groups. We prove them here for the sake of completeness.

Let $M$ be a smooth closed manifold with a smooth right transitive action of a compact Lie group $\tilde G$, $h\in\tilde G$, $x\in M$, $x\mapsto x^h$. Let $G\subset \tilde G$ be a compact subgroup of $\tilde G$. We assume that $\pi\colon M\to M/G$ is a locally trivial bundle (in which the f\/ibre containing $x$ is homeomorphic to the stabilizer of~$x$, $G^x\subset G$); in particular, we assume that $X=M/G$ is a smooth manifold. Let~$g$ be a $G$-invariant Riemannian metric on~$M$ and~$f$ be a $G$-invariant smooth function. In this case the gradient vector f\/ield $\xi=\operatorname{grad}_g f$ is $G$-invariant as well. Since~$g$ and $f$ are $G$-invariant they induce a Riemannian structure $g_X$ and a smooth function $f_X$ on $X$; similarly, the vector f\/ield $\xi$ induces a f\/ield $\xi_X$ on $X$; we just put $\xi_X(\pi(m))$ to be equal to the projection $d\pi_m(\xi(m))$ for an arbitrary point~$m$ in~$M$.

We shall need the following statement:
\begin{predl}\label{refgrad}
The field $\xi_X$ is equal to the gradient of the function $f_X$ with respect to the metric $g_X$:
\begin{gather*}
\xi_X=\operatorname{grad}_{g_X}f_X.
\end{gather*}
\end{predl}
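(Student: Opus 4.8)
The plan is to recognize $\pi\colon M\to X=M/G$ as a Riemannian submersion and then to reduce the claim to the very definition of the gradient. First I would use the $G$-invariance of $g$ to split, at each point $m\in M$, the tangent space into a $g$-orthogonal direct sum $T_mM=\mathcal V_m\oplus\mathcal H_m$, where the \emph{vertical} space $\mathcal V_m=\ker d\pi_m$ is the tangent space to the fibre of $\pi$ through $m$ (that is, to the $G$-orbit of $m$) and the \emph{horizontal} space $\mathcal H_m=\mathcal V_m^{\perp}$ is its orthogonal complement. Since $g$ is $G$-invariant, the differentials of the maps $x\mapsto x^h$ ($h\in G$) preserve this splitting, and the induced metric $g_X$ is precisely the one for which the restriction $d\pi_m|_{\mathcal H_m}\colon\mathcal H_m\to T_{\pi(m)}X$ is a linear isometry; equivalently $g_X\big(d\pi_m(u),d\pi_m(v)\big)=g(u,v)$ for all $u,v\in\mathcal H_m$. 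The one point requiring care --- and the main (though mild) obstacle --- is checking that this last formula does not depend on the representative $m$ chosen in a fibre; this is exactly where $G$-invariance of $g$ enters, and it is immediate once the splitting is seen to be carried into itself by the $G$-action.

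Next I would observe that the $G$-invariance of $f$ forces $\xi=\operatorname{grad}_gf$ to be horizontal. Indeed, $f$ is constant along the fibres of $\pi$, so $df$ annihilates every vertical vector; hence for $v\in\mathcal V_m$ we have $g(\xi,v)=df(v)=0$, which says precisely that $\xi(m)\in\mathcal V_m^{\perp}=\mathcal H_m$. The same invariance also guarantees that $\xi_X$ is well defined: writing $R_h$ for the diffeomorphism $x\mapsto x^h$, one has $\pi\circ R_h=\pi$ and $\xi(m^h)=dR_h\,\xi(m)$ for $h\in G$, whence $d\pi_{m^h}(\xi(m^h))=d\pi_m(\xi(m))$.

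The final step is a direct verification of the gradient identity on $X$. Fix $m$ and an arbitrary $Y\in T_{\pi(m)}X$, and write $Y=d\pi_m(w)$ with $w\in\mathcal H_m$. Since $\xi(m)$ is horizontal and $d\pi_m$ is an isometry on $\mathcal H_m$, we get $g_X(\xi_X,Y)=g_X\big(d\pi_m(\xi),d\pi_m(w)\big)=g(\xi,w)=df_m(w)$, the last equality being the definition of $\xi=\operatorname{grad}_gf$. On the other hand $f=f_X\circ\pi$ gives $df_m(w)=df_X\big(d\pi_m(w)\big)=df_X(Y)$. Hence $g_X(\xi_X,Y)=df_X(Y)$ for every $Y$, which is exactly $\xi_X=\operatorname{grad}_{g_X}f_X$. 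All the genuine content sits in the Riemannian-submersion picture of the first paragraph; granting that, the identity is forced.
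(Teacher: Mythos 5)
Your proof is correct and follows essentially the same route as the paper's: both identify $\xi=\operatorname{grad}_gf$ as horizontal because $f$ is constant along fibres, use the defining property of the induced metric $g_X$ (that $d\pi$ is an isometry on horizontal vectors), and verify the gradient identity $g_X(\xi_X,Y)=df_X(Y)$ directly. You merely spell out the Riemannian-submersion bookkeeping (well-definedness of $g_X$ and of $\xi_X$ under the $G$-action) that the paper leaves implicit.
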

\begin{proof}
The statement follows almost immediately from the def\/initions of a gradient vector f\/ield and induced metric $g_X$ on $M/G$: recall that the f\/ield $\operatorname{grad}_{g_X}f_X$ is characterized by the equality
\begin{gather*}
g_X(\eta_X, \operatorname{grad}_{g_X}f_X)=\eta_X(f_X);
\end{gather*}
and that $g_X(\eta_X,\,\zeta_X)$ for two vectors $\eta_X, \zeta_X\in T_xX$ (here $X=M/G$) is def\/ined as
\begin{gather*}
g_X(\eta_X, \zeta_X)=g(\eta,\zeta),
\end{gather*}
for any $\eta, \zeta\in T_mM$ ($m\in\pi^{-1}(x)$ is an arbitrary point), such that $d\pi_m(\eta)=\eta_X$, $d\pi_m(\zeta)=\zeta_X$ and they are both perpendicular to the ``vertical'' subspace $T_m(mG)$ of $T_mM$.

Now since the function $f$ is constant along the f\/ibres of $\pi$, the f\/ield $\operatorname{grad}_gf$ is orthogonal to the vertical subspaces; so
\begin{gather*}
g_X(\eta_X, d\pi(\operatorname{grad}_gf))=g(\eta, \operatorname{grad}_gf)=\eta(f)=\eta_X(f_X),
\end{gather*}
where the second equality follows from the def\/inition of $\operatorname{grad}_gf$, and the third one from the def\/inition of~$f_X$ (it is enough to consider the local structure of a cartesian product in~$M$).
\end{proof}

Below we shall need to know what conditions should be imposed on $f$ to provide that~$f_X$ is a Morse function. Observe that in the general case $f$ cannot be a Morse function itself unless the group~$G$ is discrete (and f\/inite): because of the $G$-invariance of $f$ every critical point of $f_X$ in $X=M/G$ will correspond to a $G$-orbit consisting of critical points of $f$ in $M$. So let us give a simple criterion for $f_X$ to be Morse:
\begin{predl}\label{morseref}
Let $f$ be a $G$-invariant function and $g$ be a $G$-invariant metric on~$M$. The function $f_X\in C^\infty(M/G)$, induced from~$f$, is a Morse function if and only if at every critical point~$m$ of~$f$ the Hessian~$H(f)$ is nondegenerate when restricted to the orthogonal complement to the space of ``vertical'' vectors $($i.e., vectors tangent to the orbit$)$.
\end{predl}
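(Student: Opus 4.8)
The plan is to pass between the Hessian of $f$ at a critical point $m\in M$ and the Hessian of $f_X$ at the image point $x=\pi(m)$, exploiting the relation $f=f_X\circ\pi$ together with the fact that $\pi$ is a submersion. First I would pin down the critical points: since $df_m=df_X(x)\circ d\pi_m$ and $d\pi_m$ is surjective, $m$ is a critical point of $f$ if and only if $x=\pi(m)$ is a critical point of $f_X$; because $f$ is $G$-invariant the critical set of $f$ is a union of whole fibres $\pi^{-1}(x)=mG$. Thus it suffices to compare nondegeneracy of the two Hessians fibre by fibre.

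The second step is to observe that the ``vertical'' subspace $V_m=T_m(mG)$ lies in the radical of the Hessian $H(f)_m$. Indeed $f$ is constant on the orbit $mG$, so if we extend $v\in V_m$ to a vector field $\tilde v$ everywhere tangent to the orbits, then $\tilde v(f)\equiv 0$; evaluating the Hessian --- which at a critical point is a well-defined symmetric bilinear form --- by symmetry as $H(f)_m(v,w)=w(\tilde v f)\big|_{m}=0$ shows $V_m\subseteq\ker H(f)_m$. Consequently $H(f)_m$ is completely determined by its restriction to any complement of $V_m$, in particular to the orthogonal complement $W_m=V_m^{\perp}$ with respect to the $G$-invariant metric $g$.

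Third, I would establish the chain-rule identity for Hessians at a critical point: for all $u,w\in T_mM$,
\begin{gather*}
H(f)_m(u,w)=H(f_X)_x\big(d\pi_m u,\,d\pi_m w\big).
\end{gather*}
This follows by differentiating $f=f_X\circ\pi$ twice in local coordinates: the term carrying the second derivative of $\pi$ is multiplied by $df_X(x)$, which vanishes since $x$ is critical, leaving exactly the pulled-back second-order term. Since $d\pi_m$ maps $W_m$ isomorphically onto $T_xX$ (its kernel being precisely $V_m$), the identity says that under this isomorphism the restriction $H(f)_m|_{W_m}$ corresponds to the full form $H(f_X)_x$. Hence $H(f_X)_x$ is nondegenerate if and only if $H(f)_m$ restricted to $W_m=V_m^{\perp}$ is nondegenerate.

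Finally, ranging over all critical points and using $G$-invariance --- which guarantees that the condition on $H(f)$ is independent of the choice of $m$ within a critical orbit --- I conclude that every critical point of $f_X$ is nondegenerate, i.e.\ $f_X$ is Morse, precisely when the Hessian of $f$ is nondegenerate on the orthogonal complement to the orbit direction at every critical point $m$ of $f$. I expect the only genuine subtlety to be the care needed in the third step: one must use that the Hessian at a critical point is a well-defined symmetric bilinear form (independent of coordinates or of an auxiliary connection) so that the coordinate computation is legitimate, and that the vanishing of $df_X$ at $x$ is exactly what eliminates the unwanted first-order term.
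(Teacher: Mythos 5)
Your proof is correct and takes essentially the same route as the paper: both arguments reduce the statement to the identity $H(f)_m(u,w)=H(f_X)_{\pi(m)}(d\pi_m u,\,d\pi_m w)$ at a critical point together with the fact that $d\pi_m$ maps the orthogonal complement of the vertical space isomorphically onto $T_{\pi(m)}(M/G)$, the only difference being that you derive the identity by differentiating $f=f_X\circ\pi$ twice (the first-order term vanishing because $df_X$ is zero at the critical point), whereas the paper proves the same identity in the opposite direction by locally lifting vector fields on $M/G$ to $G$-invariant horizontal fields via the local trivialization. Your additional observation that the vertical subspace lies in the radical of $H(f)_m$ is correct and harmless; the paper leaves it implicit.
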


\begin{proof}
First, recall that the gradient of a function is always perpendicular to its level sets. So in the case we consider it to be perpendicular to the $G$-orbits. It follows that it is not in the kernel of the projection $d\pi_m\colon T_mM\to T_{\pi(m)}M/G$ (as we observed earlier, the gradient of $f_X$ is equal to the projection of the gradient of~$f$). Hence the critical points of $f_X$ correspond to the ``critical orbits'' of~$f$ in~$M$, i.e., to the $G$-orbits consisting of the critical points of~$f$.

Second, Hessian of $f_X$ at a critical point $x_0\in X$ is the symmetric quadratic form on tangent space given by the formula
\begin{gather*}
H(f_X)_{x_0}(\xi(x_0), \eta(x_0))=\xi(\eta(f_X))(x_0),
\end{gather*}
where $\xi$, $\eta$ are two vector f\/ields in the neighborhood of~$x_0$: it is easy to show that the right-hand side of this formula is indeed symmetric $2$-form on vectors in~$T_{x_0}X$ if~$x_0$ is singular (in particular, it depends only on the values of~$\xi$ and $\eta$ at the point~$x_0$).

Now the proposition follows from the fact that every vector f\/ield~$\xi$ on $M/G$ can be locally lifted to a $G$-invariant vector f\/ield $\tilde\xi$ in the neighborhood of $\pi^{-1}(x_0)\subset M$ orthogonal to the f\/ibre $\pi^{-1}(x_0)$: to see this it is enough to use the local trivialization $\pi^{-1}(U)=U\times G^{x_0}$ for some open neighborhood~$U$ of~$x_0$. So we have the following formula:
\begin{gather*}
H(f_X)_{x_0}(\xi(x_0), \eta(x_0))=\tilde\xi(\tilde\eta(f))(\tilde{x}_0)=H(f)_{\tilde x_0}(\tilde\xi(\tilde x_0), \tilde\eta(\tilde x_0)),
\end{gather*}
where $\tilde x_0$ is a point in the f\/ibre $\pi^{-1}(x_0)$; the right-hand side of this formula is $G$-equivariant, so the statement of the proposition follows.
\end{proof}

As a matter of fact the condition of Proposition~\ref{morseref} is a variation of the Morse--Bott condition. Recall that a function $f\in C^\infty(X)$ ($X$ is a smooth manifold) is called \textit{the Morse--Bott function} if its critical set is a smooth (not connected) submanifold in $X$ and the Hessian is nondegenerate in the normal direction to this submanifold. In our case the set of critical points is equal to the $G^x$-orbits through critical points $x$ of $f$; hence, it is smooth (by assumption on the nature of the action) and normal direction is the direction orthogonal to the vertical vectors.

\subsection{The FS Toda lattice on partial f\/lags}\label{ss:tod:pfl}
As we have explained in the introduction, the FS Toda lattice in dimension $n$ induces a gradient f\/low on the orthogonal group ${\rm SO}(n, \mathbb R)$, which we here shall call \textit{the Toda flow} or just \textit{Toda system} on ${\rm SO}(n, \mathbb R)$. This f\/low is determined by the vector f\/ield
\begin{gather}\label{eq:todafield}
M(\Psi)=\big(\big(\Psi\Lambda\Psi^{-1}\big)_+-\big(\Psi\Lambda\Psi^{-1}\big)_-\big)\Psi,
\end{gather}
where $\Lambda$ is the diagonal matrix of eigenvalues of the symmetric Lax matrix $L$; in addition we identify the tangent space $T_\Psi {\rm SO}(n,\mathbb R)$ with the right translation of the Lie algebra $\mathfrak{so}_n(\mathbb R)$ by~$\Psi$. In fact, this vector f\/ield is equal to the gradient of a function with respect to an invariant metric on ${\rm SO}(n, \mathbb R)$: for the f\/ixed eigenvalues matrix $\Lambda$ one can take
\begin{gather}\label{eq:gradfun}
F(\Psi)=\operatorname{Tr}\big(\Psi\Lambda\Psi^TN\big),\qquad \text{where}\quad N=\operatorname{diag}(0,1, \dots , n -1).
\end{gather}
The ${\rm SO}(n, \mathbb R)$-invariant Riemannian structure is determined by its values on $\mathfrak{so}_n$, where it is given by the formula
\begin{gather}\label{eq:riemstru}
\langle A, B\rangle_J =-\operatorname{Tr}\big(AJ^{-1}(B)\big),
\end{gather}
for any antisymmetric matrices $A$ and $B$ and a linear isomorphism $J\colon\mathfrak{so}_n\to\mathfrak{so}_n$. Then
\begin{gather*}
M(\Psi)=\operatorname{grad}_{\langle\,,\,\rangle_J}\,F,
\end{gather*}
see \cite{CSS,dMP} for details. This property does not depend on the eigenvalues of~$L$, in particular it holds for both distinct and coinciding eigenvalues.

Since all the objects here are $T^+_n$-invariant (see Section~\ref{ss:bru:fs&gr}) the same formulas \eqref{eq:todafield}, \eqref{eq:gradfun} and \eqref{eq:riemstru} induce a vector f\/ield, a function and a Riemannian structure on the full f\/lag space
\begin{gather*}
{\rm Fl}_n(\mathbb R)={\rm SO}(n;\mathbb R)/T_n^+.
\end{gather*}
It follows from the discussion of Section~\ref{ss:tod:mbott} that the f\/ield $M$ on ${\rm Fl}_n(\mathbb R)$ is equal to the gradient of the corresponding function. It can be shown (see~\cite{CSS,dMP}) that the function $F$ is a~Morse function on both ${\rm SO}(n,\mathbb R)$ and the f\/lag space ${\rm Fl}_n(\mathbb R)$.

In a similar way one can use the propositions of Section~\ref{ss:tod:mbott} to construct gradient f\/lows on the partial f\/lag spaces. Suppose there are coinciding eigenvalues of $\Lambda$; for def\/initeness we can assume that they are
\begin{gather}
\lambda_1 =\lambda_2=\dots=\lambda_{i_1}<\lambda_{i_1+1}=\lambda_{i_1+2}=\dots=\lambda_{i_1+i_2}<\cdots\nonumber\\
\hphantom{\lambda_1}{} <\lambda_{i_1+\dots+i_{k-1}+1}=\lambda_{i_1+\dots+i_{k-1}+2}=\dots=\lambda_{i_1+\dots+i_{k-1}+i_k},\label{eq:eige}
\end{gather}
where $i_1+i_2+\dots+i_k=n$; that is we assume there are $k<n$ distinct eigenvalues of $\Lambda$ and the multiplicity of the $j$-th eigenvalue is~$i_j$. In this case the vector f\/ield $M(\Psi)$ on~${\rm SO}(n;\mathbb R)$ is invariant with respect to ${\rm O}(i_1, \mathbb R)\times\dots\times {\rm O}(i_k, \mathbb R)$, i.e., $M(\Psi g)=M(\Psi)g$ for all $g\in {\rm SO}(n, \mathbb R)\bigcap ({\rm O}(i_1, \mathbb R)\times\dots\times {\rm O}(i_k,\mathbb R) )$. So we obtain a vector f\/ield $\widetilde M$ on the partial f\/lag space. In ef\/fect, all the conditions of Proposition~\ref{refgrad} hold (i.e., the function $F$ and the Riemannian structure~$\langle\,,\,\rangle_J$ are also invariant with respect to the group action); hence, the f\/ield~$\widetilde M$ is equal to the gradient of a~function~$\widetilde F$ with respect to the induced Riemannian structure.

\looseness=-1 Now we are going to show that the function $\widetilde F$ on ${\rm Fl}_{i_1,\dots,i_k}(\mathbb R)$ is Morse. To this end (see Proposition~\ref{morseref}) we should show that the restriction of the Hessian of $F$ on the directions orthogonal to the ``vertical'' vectors is nondegenerate. This Hessian of $F$ is given by formula~(39) in~\cite{CSS}: let $s_w\in {\rm SO}(n;\mathbb R)$ be a positive permutation matrix corresponding to $w\in S_n$ (i.e., the matrix in ${\rm SO}(n, \mathbb R)$ which permutes the vectors $\pm e_i$, $i=1,\dots,n$ so that the permutation $w\in S_n$ emerges when the signs are dropped); then $s_w$ is a singular point in ${\rm SO}(n;\mathbb R)$ and the Hessian of~$F$ at~$s_w$~is
\begin{gather}\label{eqhessian}
d^2_{s_w}F= \sum_{i<j} \theta_{ij}^{2}(\lambda_{w(i)}-\lambda_{w(j)})(j-i).
\end{gather}
Here $\theta_{ij}$ denote the local coordinates on ${\rm SO}(n, \mathbb R)$ at the point $s_w\in {\rm SO}(n, \mathbb R)$ obtained from the standard coordinates on $\mathfrak{so}_n$ by right translation. Put
\begin{gather*}
G={\rm SO}(n, \mathbb R)\bigcap\big({\rm O}(i_1, \mathbb R)\times\dots\times {\rm O}(i_k,\mathbb R)\big).
\end{gather*}
Recall that we assume the eigenvalues of $\Lambda$ to be partitioned into $k$ ``blocks'', see \eqref{eq:eige}, so that everything is $G$-invariant. In this case the ``vertical'' directions at $s_w$ are equal to the tangent space of the corresponding orbit
\begin{gather*}
T^v_{s_w}{\rm SO}(n;\mathbb R)=T_{s_w} (s_w\cdot G ).
\end{gather*}
The Lie algebra $\mathfrak g$ of $G$ is equal to the space of all antisymmetric matrices, spanned by the following set of elementary antisymmetric matrices:
\begin{gather*}
e_{ij}-e_{ji},\qquad \text{such that}\quad \exists\, p,\quad 1\le p\le k,\quad i_0+\dots+i_{p-1}< i<j\le i_0+\dots +i_p,
\end{gather*}
where we have put $i_0=0$ and use symbols $e_{ij}$ to denote the matrix units. Then the vertical directions at~$s_w$ are equal to the linear span of
\begin{gather*}
(e_{w(i)w(j)}-e_{w(j)w(i)})s_w\in T_{s_w}{\rm SO}(n;\mathbb R)
\end{gather*}
for \looseness=-1 the same set of indices $i$, $j$, as above. This follows either from the direct computations, or from the fact that the conjugate action $\operatorname{Ad}_{s_w}$ of~$s_w$ on $\mathfrak{so}_n$ amounts to the permutation of indices. Now the non-degeneracy of $d^2_{s_w}$ on the linear complement of $T^v_{s_w}{\rm SO}(n, \mathbb R)\subset T_{s_w}{\rm SO}(n, \mathbb R)$ can be seen from the comparison of formula~\eqref{eqhessian} with this description of the vertical subspace. Due to the $G$-invariance of the constructions, the same is true for an arbitrary point in the orbit through~$s_w$.

We sum up our observations in the following proposition:
\begin{predl}\label{morsepartflag}
The full symmetric Toda system on the matrices with non-distinct eigenvalues induces a dynamical system on the partial flag manifold ${\rm Fl}_{i_1,\dots,i_k}(\mathbb R)$ $($where $i_1,\dots,i_k$ are the multiplicities of eigenvalues$)$. This dynamical system is equal to the gradient flow of a~Morse function on ${\rm Fl}_{i_1,\dots,i_k}(\mathbb R)$.
\end{predl}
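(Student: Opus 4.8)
The plan is to obtain $\widetilde F$ together with its gradient flow by descent from ${\rm SO}(n,\mathbb R)$ along the bundle $\pi\colon {\rm SO}(n,\mathbb R)\to {\rm Fl}_{i_1,\dots,i_k}(\mathbb R)={\rm SO}(n,\mathbb R)/G$, where $G={\rm SO}(n,\mathbb R)\cap({\rm O}(i_1,\mathbb R)\times\dots\times {\rm O}(i_k,\mathbb R))$, and then to verify the Morse property through the transverse-Hessian criterion of Proposition~\ref{morseref}. First I would record that, under the clustering \eqref{eq:eige}, the Toda field $M(\Psi)$, the function $F$ of \eqref{eq:gradfun}, and the metric $\langle\,,\,\rangle_J$ of \eqref{eq:riemstru} are all $G$-invariant; this is immediate because $\Lambda$ commutes with the block-diagonal matrices of $G$. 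Hence they descend to a field $\widetilde M$, a function $\widetilde F$ and an induced Riemannian structure on the quotient, and Proposition~\ref{refgrad} identifies $\widetilde M$ with $\operatorname{grad}\widetilde F$. This already yields the induced gradient system and reduces the proposition to the single claim that $\widetilde F$ is Morse.

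For the Morse property I would invoke Proposition~\ref{morseref}: it suffices to show that at each critical point of $F$ the Hessian of $F$ is nondegenerate after restriction to the orthogonal complement of the vertical subspace. The critical points are the permutation matrices $s_w$, $w\in S_n$, and there the Hessian is the diagonal quadratic form \eqref{eqhessian}, namely $\sum_{i<j}\theta_{ij}^2(\lambda_{w(i)}-\lambda_{w(j)})(j-i)$, in the right-translated coordinates $\theta_{ij}$. The coefficient of $\theta_{ij}^2$ is nonzero exactly when $\lambda_{w(i)}\neq\lambda_{w(j)}$, so the kernel of this form is the span of those coordinate directions for which $w(i)$ and $w(j)$ lie in one and the same eigenvalue block. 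The decisive step is to recognise this kernel as \emph{precisely} the vertical subspace $T_{s_w}(s_w\cdot G)$: by the description of the vertical directions recalled just above the statement, the span of the $(e_{w(i)w(j)}-e_{w(j)w(i)})s_w$ with $i,j$ inside a single block, the two subspaces coincide.

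Once the kernel of the Hessian is identified with the vertical space $W$, nondegeneracy on $W^\perp$ is formal and independent of the particular induced metric: if $v\in W^\perp$ pairs trivially under the Hessian with all of $W^\perp$, then, since $v$ also pairs trivially with $W=\ker$, it lies in $W\cap W^\perp=0$. Applying Proposition~\ref{morseref} at $s_w$, and then propagating the conclusion to every point of the critical orbit $s_w\cdot G$ by the $G$-equivariance of $F$, of the metric and of the action, shows that $\widetilde F$ is a Morse function, which is exactly the assertion. I expect the one genuinely substantive step to be the middle one, verifying that the kernel of \eqref{eqhessian} matches the vertical tangent space on the nose rather than merely containing or being contained in it, since it is precisely this coincidence that forces the transverse Hessian to be nondegenerate; the invariance bookkeeping and the final linear-algebra argument are routine.
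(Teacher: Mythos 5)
Your proposal follows the paper's own proof essentially step for step: descent of the field $M$, the function $F$ and the metric $\langle\,,\,\rangle_J$ to the quotient via Proposition~\ref{refgrad}, and then the Morse property via the transverse-Hessian criterion of Proposition~\ref{morseref}, established by comparing the diagonal Hessian~\eqref{eqhessian} at $s_w$ with the description of the vertical subspace spanned by the vectors $(e_{w(i)w(j)}-e_{w(j)w(i)})s_w$, followed by $G$-equivariant propagation along the critical orbit. Your only addition is to make explicit the formal linear-algebra step that the paper compresses into ``can be seen from the comparison'', namely that once $\ker H=W$ (the vertical space), the radical of $H$ restricted to $W^\perp$ is $\ker H\cap W^\perp=W\cap W^\perp=0$ for any choice of invariant metric.
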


We conclude this section by an important observation: consider the natural projection (see Section~\ref{ss:bru:fs&gr})
\begin{gather*}
\pi\colon \ {\rm Fl}_n(\mathbb R)\to {\rm Fl}_{i_1,\dots,i_k}(\mathbb R).
\end{gather*}
As we have shown above, when the eigenvalues of $\Lambda$ are given by~\eqref{eq:eige}, the spaces on both sides of this diagram can be equipped with a Morse--Bott function and a Riemannian metric, which give rise to the gradient vector f\/ields $\widetilde M$ on them. In both cases these structures are pulled to the f\/lag spaces (full or partial) from the group ${\rm SO}(n, \mathbb R)$. On the other hand we can use this projection~$\pi$ to pull the structures we need from ${\rm Fl}_n(\mathbb R)$ to the partial f\/lag space. It is clear that in either way we shall obtain the same result. In other words, when the eigenvalues of $\Lambda$ are not distinct, the function $F$ on ${\rm SO}(n, \mathbb R)$ is a Morse--Bott function, which induces a Morse function on the base ${\rm Fl}_{i_1,\dots,i_k}(\mathbb R)$.

\section{The asymptotic behavior}\label{s:fin}
In this section we prove the main theorem of the paper: the asymptotic behavior of the trajectories of the vector f\/ield induced on ${\rm Fl}_{i_1,\dots,i_k}(\mathbb R)$ by the FS Toda lattice on the set of symmetric matrices with multiple eigenvalues (with multiplicities $i_1,\dots,i_k$) is completely determined by the Bruhat order on $S_n^I$, see Theorem~\ref{prop:traject1}. We begin with two particular cases in which the structure of the trajectories is easy to perceive: the case of real projective spaces and the case of Grassmannian ${\rm Gr}_{2,4}(\mathbb R)$. The general case is treated at the end.

\subsection{Example 1: projective spaces}\label{ss:fin:ex}
Let $L$ be the Lax matrix of the FS Toda lattice; $\Lambda$ is the diagonal matrix of its eigenvalues so that
\begin{gather*}
L=\Psi\Lambda\Psi^{-1},\qquad \Psi\in {\rm SO}(n,\mathbb R).
\end{gather*}
As we have explained this relation allows one to introduce an analogue of the FS Toda lattice on the partial f\/lag space ${\rm Fl}_{i_1,\dots,i_k}(\mathbb R)$, where $i_1,i_2,\dots,i_k$ are the multiplicities of the eigenvalues of~$L$. The geometry of the f\/lag space (and hence the structure of Toda trajectories) depends to a great measure on the partition $I=(i_1,i_2,\dots,i_k)$. In this section we consider the simplest possible case; namely, we assume that $I=(1,n-1)$, i.e., we assume that there are only two dif\/ferent eigenvalues $\lambda<\mu$ with multiplicities $1$ and $n-1$, respectively (we can also assume that $\lambda+(n-1)\mu=0$). The f\/lag spaces here are equal to the projective spaces~$\mathbb RP^{n-1}$.

We begin with the smallest possible dimension $n=3$ and eigenvalues $\lambda_1=\lambda<\lambda_2=\lambda_3=\mu$. In this case we consider the system on the projective plane $\mathbb RP^2$. One can see that there are exactly $3$ singular points of the vector f\/ield induced by the Toda system on $\mathbb RP^2$ corresponding to the permutations $0=(\lambda,\mu,\mu)$, $1=(\mu,\lambda,\mu)$ and $2=(\mu,\mu,\lambda)$; direct calculations in the local coordinates at these points (it is enough to transfer the coordinates from the unit of ${\rm SO}(3,\mathbb R)$ and take the directions complementary to the vertical) show that the Morse function takes distinct values at these points so that their Morse indices are~$2$,~$1$ and~$0$, respectively (see formula~\eqref{eqhessian}). Recall that the index of a singular point of a Morse function is equal to the dimension of the submanifold spanned by the trajectories exiting this point. Taking this into consideration we see that there is a unique way of connecting~$0$,~$1$ and $2$ by trajectories:
\begin{gather*}
\xymatrix{
0\ar[r] \ar@/^1pc/[rr] &1\ar[r] & 2.
}
\end{gather*}
It is easy to describe the structure of the vector f\/ield here: its pullback from $\mathbb RP^2$ to $S^2$ has 6 singular points, which can be identif\/ied with the intersections of the sphere with the coordinate axes; the indices of these points are $0,1$ and $2$ so that the opposite points have the same index. Besides this the vector f\/ield is invariant with respect to the natural involution of the sphere (exchange of the opposite points). We shall call this vector f\/ield \textit{the Toda field} on $\mathbb RP^2$, and use the same name for similar f\/ields on arbitrary projective spaces.

In the general case $n>3$ we shall have an analogous picture: just remark that the diagonal embedding of ${\rm SO}(n,\mathbb R)$ into ${\rm SO}(n+1,\mathbb R)$, given by
\begin{gather*}
\Psi\mapsto\begin{pmatrix}\Psi & 0\\ 0 & 1\end{pmatrix},
\end{gather*}
corresponds on the one hand, to the hyperplane embedding of projective spaces $\mathbb RP^{n-1}\to\mathbb RP^n$, and on the other hand, it is induced by the evident embedding of symmetric $n\times n$ matrices into the space of the symmetric $(n+1)\times (n+1)$ matrices:
\begin{gather*}
L\to\begin{pmatrix}L &0\\ 0 & \mu\end{pmatrix}.
\end{gather*}
This embedding sends the symmetric matrices with the spectrum
\begin{gather*}
\lambda<\underbrace{\mu=\mu=\dots=\mu}_{n-1\ \text{times}}
\end{gather*}
to the symmetric matrices with the spectrum
\begin{gather*}
\lambda<\underbrace{\mu=\mu=\dots=\mu}_{n\ \text{times}}.
\end{gather*}
Also, \looseness=-1 one can see that this inclusion intertwines Toda systems on bigger matrices and smaller matrices. Thus, the Toda vector f\/ield on $\mathbb RP^{n-1}$ coincides with the restriction of the Toda f\/ield from~$\mathbb RP^n$, and the latter f\/ield has one more singular point in the complement of~$\mathbb RP^{n-1}$ in~$\mathbb RP^n$. This new point has the maximum possible index. So we see that in the case of a generic projective space~$\mathbb RP^n$ the phase diagram of singular points and trajectories between them is as follows

\vspace{3mm}

\begin{gather*}
\xymatrix{
0\ar[r]\ar@/^1pc/[rr]\ar@/^3pc/[rrrr] & 1\ar[r]\ar@/^2pc/[rrr] & 2\ar[r]\ar@/^1pc/[rr] & \dots\ar[r] & n\,.
}
\end{gather*}
It is also possible to rephrase the results of this section in terms of the asymptotic behavior of the symmetric Lax matrix $L$: as one can see, when $t\to\pm\infty$ the matrix~$L$ tends to a diagonal matrix with eigenvalues~$\lambda$ and $\mu$ of multiplicities~$1$ and~$n-1$, respectively, in which $\lambda$ stands on an arbitrary position (depending on the corresponding multiset permutation).

\subsection[Example 2: ${\rm Gr}_{2,4}(\mathbb R)$]{Example 2: $\boldsymbol{{\rm Gr}_{2,4}(\mathbb R)}$}\label{ss:fin:gr24}
The next simplest case is when there are only two distinct eigenvalues of $L$, but the dimensions of the corresponding eigenspaces are greater than $1$. In this case the phase space of the system can be identif\/ied (see Section~\ref{ss:bru:fs&gr}) with the Grassmann space of all dimension $d>1$ hyperplanes in an $n>d+1$ dimensional Euclidian space~$\mathbb R^n$. This case is already quite complicated, so we restrict our discussion to the least-dimensional case: $n=4$, $d=2$. We assume that the eigenvalues of the $4\times 4$ symmetric Lax matrix $L$ are $\lambda_1=\lambda_2=\lambda$, $\lambda_3=\lambda_4=\mu$ and consider the induced gradient system on~${\rm Gr}_{2,4}(\mathbb R)$.

We begin with a detailed description of the Grassmannian. This is the manifold paramete\-ri\-zing all $2$-dimensional subspaces in~$\mathbb R^4$. As we mentioned in Section~\ref{ss:bru:fs&gr} one has a homeomorphism of spaces
\begin{gather*}
{\rm Gr}_{2,4}(\mathbb R)={\rm SO}(4,\mathbb R)/{\rm SO}(4,\mathbb R)\cap({\rm O}(2,\mathbb R)\times {\rm O}(2,\mathbb R)).
\end{gather*}
The identif\/ication is given by choosing an orthogonal basis in $\mathbb{R}^{4}$:
\begin{gather*} 
 e_{1}=\left(
\begin{matrix}
 1\\
 0\\
 0\\
 0
\end{matrix}
\right), \qquad
 e_{2}=\left(
\begin{matrix}
 0\\
 1\\
 0\\
 0
\end{matrix}
\right), \qquad
 e_{3}=\left(
\begin{matrix}
 0\\
 0\\
 1\\
 0
\end{matrix}
\right)
, \qquad
 e_{4}=\left(
\begin{matrix}
 0\\
 0\\
 0\\
 1
\end{matrix}
\right).
\end{gather*}
We \looseness=-1 can regard the $2$-dimensional plane spanned by $e_1$, $e_2$ as the ``origin'' $x$ of ${\rm Gr}_{2,4}(\mathbb R)$ so that all the other planes in $\mathbb R^4$ are equal to translations of $x$ by appropriate elements of ${\rm SO}(4,\mathbb R)$. The stabilizer of $x$ in ${\rm SO}(4,\mathbb R)$ is the subgroup $B_x$ of ${\rm SO}(4,\mathbb R)$ comprised of the elements of the form
\begin{gather*}
\left(
\begin{matrix}
 \cos (t_{a}) & \sin (t_{a}) & 0 & 0\\
 -\sin (t_{a}) & \cos (t_{a}) & 0 & 0\\
 0 & 0 & \cos (t_{b}) & \sin (t_{b})\\
 0 & 0 & -\sin (t_{b}) & \cos (t_{b})\\
\end{matrix}
\right),\qquad \left(
\begin{matrix}
 -\cos (t_{a}) & \sin (t_{a}) & 0 & 0\\
 \sin (t_{a}) & \cos (t_{a}) & 0 & 0\\
 0 & 0 & -\cos (t_{b}) & \sin (t_{b})\\
 0 & 0 & \sin (t_{b}) & \cos (t_{b})
\end{matrix}
\right).
\end{gather*}
As one can easily see, the plane $x$ is preserved by the action of these elements:
\begin{gather*}
e_{1}B_{x}=e_{1}\cos (t_{a})+ e_{2}(-\sin (t_{a})),\qquad
e_{2}B_{x}=e_{1}\sin (t_{a})+ e_{2}\cos (t_{a}).
\end{gather*}
Now consider the generic point $y\in {\rm Gr}_{2,4}(\mathbb R)$: as we know $y=g^yx$ for some $g^y\in {\rm SO}(4,\mathbb R)$. Then the stabilizer of~$y$ is equal to the conjugation of $B_x$ by~$g^y$:
\begin{gather*}
B_y=g^yB_x(g^y)^{-1}.
\end{gather*}
This simple observation allows one to f\/ind a suitable description of the tangent space of~${\rm Gr}_{2,4}(\mathbb R)$ at~$y$:
\begin{gather*}
T_y{\rm Gr}_{2,4}(\mathbb R)\cong\mathfrak{so}(4)/\operatorname{Ad}_{g^y}(\mathfrak g),
\end{gather*}
where $\mathfrak g$ is the Lie algebra of the group $B_x$ and $\operatorname{Ad}_g$, $g\in {\rm SO}(4,\mathbb R)$ denotes the adjoint action of the group ${\rm SO}(4,\mathbb R)$ on its Lie algebra $\mathfrak{so}(4)$ (by conjugation of matrices).

We choose the standard basis in $\mathfrak{so}(4)$ (the space of all $4\times 4$ anti-symmetric matrices) so that the generic element~$X$ of $\mathfrak{so}(4)$ takes the form
\begin{gather*}
X=\begin{pmatrix}
0 & \theta_1 & \theta_2 & \theta_3\\
-\theta_1& 0 & \theta_4 & \theta_5\\
-\theta_2 & -\theta_4 & 0 & \theta_6\\
-\theta_3 & -\theta_5 & -\theta_6 & 0
\end{pmatrix}.
\end{gather*}
The functions $\theta_1,\dots,\theta_6$ are coordinates in $\mathfrak{so}(4)$. The exponential map allows one to pull these coordinates to an open neighbourhood of the unit matrix in~${\rm SO}(4,\mathbb R)$ and the right translations by the group elements then give coordinate systems in open neighbourhoods of the points in~${\rm SO}(4,\mathbb R)$ and on tangent spaces at these points. Below we shall use these coordinates extensively without explanation, preserving (by a slight abuse of the language) their original names.

This notation is well-suited for the description of the tangent spaces of ${\rm Gr}_{2,4}(\mathbb R)$ at the singular points of the FS Toda lattice: recall that these singular points are equal to the projections to~${\rm Gr}_{2,4}(\mathbb R)$ of the singular f\/ibres of the FS Toda lattice in~${\rm SO}(4,\mathbb R)$ (or in ${\rm Fl}_{4}(\mathbb{R})$). These singular f\/ibres are determined by the singular points of the FS Toda lattice on the matrices with distinct eigenvalues, which they contain. As one knows, the singular points in ${\rm SO}(4,\mathbb R)$ of the FS Toda lattice corresponding to the Lax matrices with distinct eigenvalues are given by the matrices, which permute the basis vectors and their opposites (see~\cite{CSS}). The conjugation by such matrices induces permutations of the coordinates $\theta_1,\dots,\theta_6$.

With the help of these observations one can draw the following table, describing the tangent spaces at the critical points, values of the Morse function $F_{2,4}$, positive and negative directions of Hessians at these critical points, i.e., of the singular points of the Toda vector f\/ield on ${\rm Gr}_{2,4}(\mathbb R)$. We index the singular points on ${\rm Gr}_{2,4}(\mathbb R)$ by the corresponding permutations of the eigenvalues' set $(\lambda,\lambda,\mu,\mu)$, $\lambda<\mu$.
$$
\begin{tabular}{|c|c|c|c|c|c|}
\hline\cline{1-0}
\text{Point} & \text{Morse index} & \text{Value of $F_{2,4}$} & $+$ & $-$ & \text{Minors}\tsep{2pt}\bsep{2pt}\\
\hline\cline{1-0}
$(\lambda, \lambda, \mu, \mu)$ & 0 & $\lambda + 5 \mu$ & $\theta_{2}$, $\theta_{3}$, $\theta_{4}$, $\theta_{5}$ & 0 & $\psi_{13}$, $\psi_{14}$, $\psi_{41}$, $\psi_{42}$\tsep{2pt}\bsep{2pt}\\
\hline\cline{1-0}
$(\lambda, \mu, \lambda, \mu)$ & 1 & $2\lambda + 4 \mu$ & $\theta_{1}$, $\theta_{4}$, $\theta_{6}$ & $\theta_{2}$ & $\psi_{13}$, $\psi_{14}$, $\psi_{41}$, $\psi_{42}$\tsep{2pt}\bsep{2pt}\\
\hline\cline{1-0}
$(\lambda, \mu, \mu, \lambda)$ & 2 & $3\lambda + 3 \mu$ & $\theta_{5}$, $\theta_{6}$ & $\theta_{1}$, $\theta_{3}$ & $\psi_{13}$, $\psi_{14}$, $\psi_{43}$, $\psi_{44}$\tsep{2pt}\bsep{2pt}\\
\hline\cline{1-0}
$(\mu, \lambda, \lambda, \mu)$ & 2 & $3\lambda + 3 \mu$ & $\theta_{1}$, $\theta_{3}$ & $\theta_{5}$, $\theta_{6}$ & $\psi_{11}$, $\psi_{12}$, $\psi_{41}$, $\psi_{42}$\tsep{2pt}\bsep{2pt}\\
\hline\cline{1-0}
$(\mu, \lambda, \mu, \lambda)$ & 3 & $4\lambda + 2 \mu$ & $\theta_{2}$ & $\theta_{1}$, $\theta_{4}$, $\theta_{6}$ & $\psi_{11}$, $\psi_{12}$, $\psi_{43}$, $\psi_{44}$\tsep{2pt}\bsep{2pt}\\
\hline\cline{1-0}
$(\mu, \mu, \lambda, \lambda)$ & 4 & $5\lambda + \mu$ & 0 & $\theta_{2}$, $\theta_{3}$, $\theta_{4}$, $\theta_{5}$ & $\psi_{11}$, $\psi_{12}$, $\psi_{43}$, $\psi_{44}$\tsep{2pt}\bsep{2pt}\\
\hline
\end{tabular}
$$
In the \looseness=-1 fourth and the f\/ifth columns we use the coordinates $\theta_1,\dots,\theta_6$ on ${\rm SO}(4,\mathbb R)$ introduced earlier and project them to ${\rm Gr}_{2,4}(\mathbb R)$: on the level of tangent spaces the dif\/ferential $d\pi$ of this projection is a linear epimorphism and we identify $T_{\pi(x)}{\rm Gr}_{2,4}(\mathbb R)$ with a linear subspace, which maps isomorphically onto it. The last column gives the list of certain invariant surfaces of the Toda f\/ield in ${\rm SO}(4,\mathbb R)$, which contain the corresponding ``singular f\/ibre'' of the Toda f\/ield (i.e., the f\/ibre of the projection $\pi\colon {\rm SO}(4,\mathbb R)\to {\rm Gr}_{2,4}(\mathbb R)$, on which the Toda f\/ield vanishes identically). As one knows (see~\cite{CSS}) such surfaces can be given by the equations $\psi_{ij}=0$; there are more general surfaces of this sort, in which the equations are given by the condition that a~certain minor of the matrix $\Psi=(\psi_{ij})$ vanishes (hence the name ``minor surfaces'' that we use in this and previous paper) but we restrict our attention to the simplest set of invariants listed in this table.

We illustrate these constructions by Fig.~\ref{bundle-1}, in which the case of the singular point in ${\rm Gr}_2(4,\mathbb R)$, corresponding to the permutation $(\lambda, \mu, \lambda, \mu)$, is considered (below we shall often identify such points with the corresponding permutations of eigenvalues).
\begin{figure}[t]\centering
\includegraphics[width=192pt]{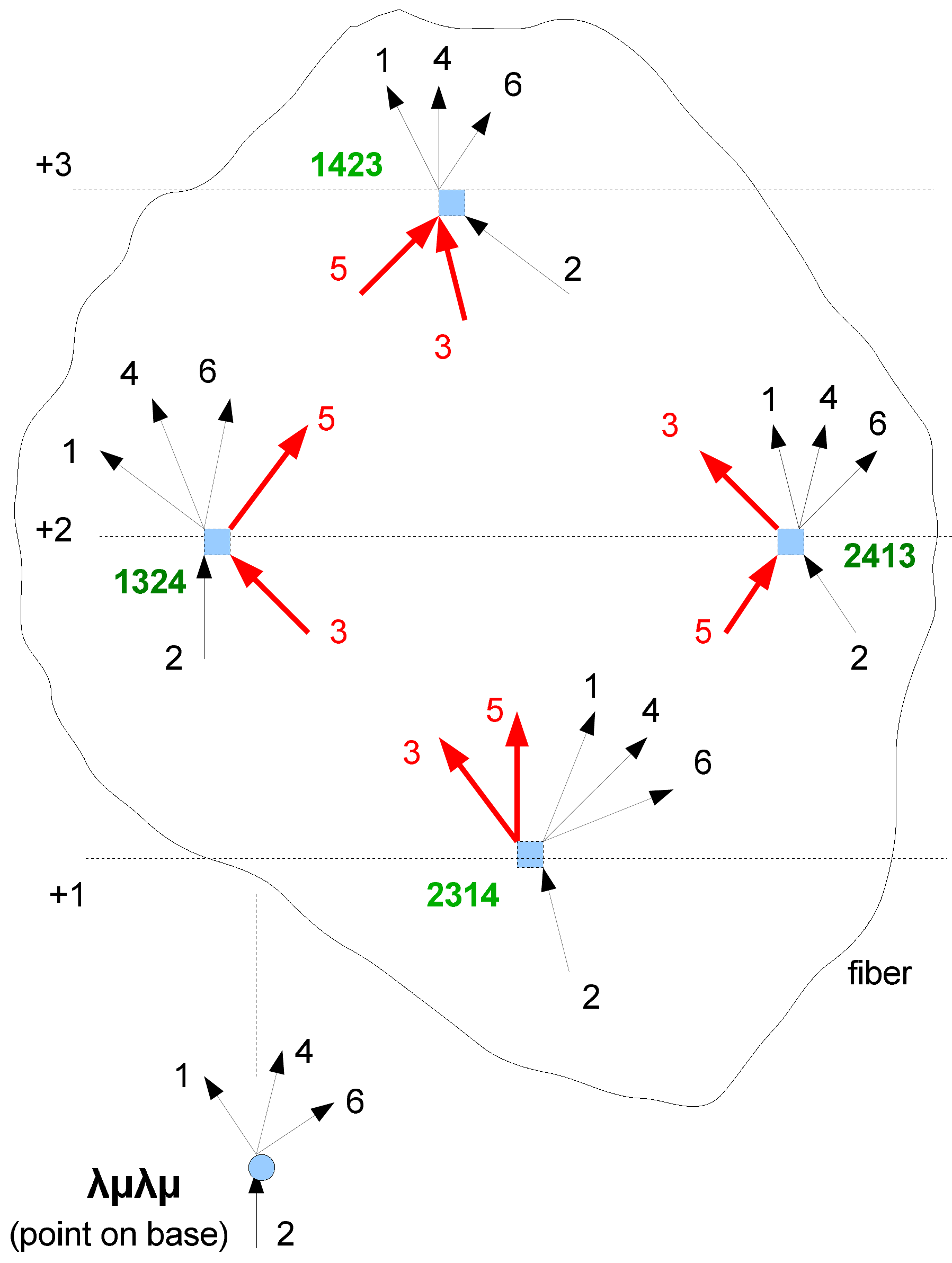}
\caption{The bundle over $(\lambda, \mu, \lambda, \mu)\in {\rm Gr}_{2,4}(\mathbb R)$.}\label{bundle-1}
\end{figure}

\looseness=-1 In this f\/igure the little circle at the bottom represents the point in ${\rm Gr}_{2,4}(\mathbb R)$ that corresponds to the permutation $(\lambda,\mu,\lambda,\mu)$ of the eigenvalues. The small squares above correspond to the permutation matrices in ${\rm SO}(4,\mathbb R)$ that project into the chosen point. The arrows represent the coordinate directions in the tangent space of the group at the chosen points. The red (bold) arrows correspond to the directions inside the f\/ibre, and the black (thin) arrows to all the rest. As one knows (see~\cite{CSS}) the coordinates~$\theta_i$ are (up to inf\/initesimal correction terms) the canonical coordinates, in which the Morse function takes the form of the sums of the squares. So the arrows are directed towards or from the points, depending on whether the corresponding tangent directions are in positive or negative subspaces of the Hessian (the red arrows in fact correspond to the directions which lie in the kernel of the Hessian but we preserve them for the sake of simplicity; their directions are determined under the assumption that all the eigenvalues are distinct and are ordered in a natural way, otherwise one can take arbitrary vectors in the f\/ibre direction).

Finally, we consider the minor surfaces that pass through the f\/ibre: it is clear that if a~surface is invariant under the Toda f\/low on ${\rm SO}(4,\mathbb R)$, its projection to the Grassmann space is an invariant set of the generalized f\/lows. Using this simple observation we come up with the following diagram, see Fig.~\ref{bundle-2}, representing the f\/lows in ${\rm Gr}_{2,4}(\mathbb R)$; the dotted lines represent the $1$-parameter families of trajectories that connect points whose Morse indices dif\/fer by $2$. There are also $2$-parameter families between the points with the Morse indices that dif\/fer by $3$ and one $3$-parametric family between the lowest and the highest points, which we have omitted to make the diagram more readable.

\begin{figure}[t]
$$
\xymatrix{
{}\ar@<-1.5ex>@{.}[rrrrr] & {\mathrm{ind}=4} & & {\mu\mu\lambda\lambda} & & \\
{}\ar@<-1.5ex>@{.}[rrrrr] & {\mathrm{ind}=3} & & {\mu\lambda\mu\lambda}\ar[u] & &\\
{}\ar@<-1.5ex>@{.}[rrrrr] & {\mathrm{ind}=2} & {\mu\lambda\lambda\mu}\ar[ur]\ar@{-->}[uur]& &{\lambda\mu\mu\lambda}\ar[ul]\ar@{-->}[uul] &\\
{}\ar@<-1.5ex>@{.}[rrrrr] & {\mathrm{ind}=1} & &{\lambda\mu\lambda\mu}\ar[ul]\ar[ur]\ar@{-->}[uu] & & \\
{}\ar@<-1.5ex>@{.}[rrrrr] & {\mathrm{ind}=0} & & {\lambda\lambda\mu\mu}\ar[u]\ar@{-->}[uul]\ar@{-->}[uur] & &
}
$$
\caption{The generalized Toda f\/low on ${\rm Gr}_2(4,\mathbb R)$.}\label{bundle-2}
\end{figure}
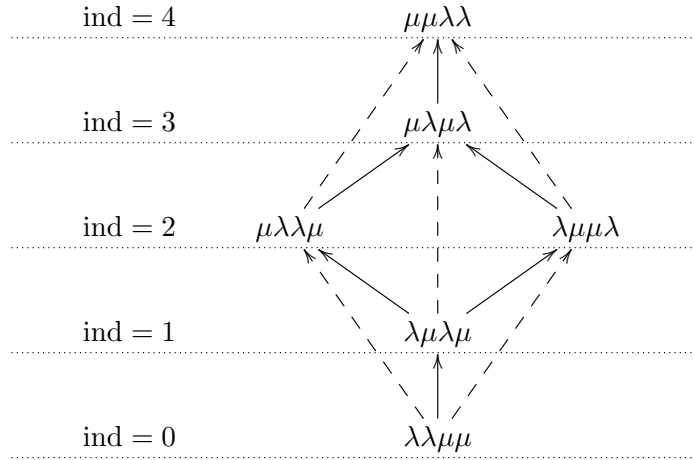

As one can see, this diagram coincides with the Hasse diagram of the Bruhat order on multiset permutations. Below (see Theorem~\ref{prop:traject1}), we shall show that it is always the case.

Let us give a brief explanation of how the diagram has been obtained: consider two singular points in ${\rm Gr}_{2,4}(\mathbb{R})$, which correspond to the permutations $(\lambda, \lambda, \mu, \mu)$ and $(\lambda, \mu, \lambda, \mu)$ of the eigenvalues. Let us show that there is a f\/inite number of trajectories of the Toda f\/low between these points.

Observe that the surface
\begin{gather*}
\Sigma = (\psi _{13}=0) \cap (\psi _{14}=0) \cap (\psi _{41}=0) \cap (\psi _{42}=0),\qquad \Sigma\subset {\rm SO}(4,\mathbb R)
\end{gather*}
is invariant with respect to the ${\rm O}(2,\mathbb R)\times {\rm O}(2,\mathbb R)$-action and also is preserved by the Toda f\/low. Thus, it projects to an invariant surface $\widetilde\Sigma$ in ${\rm Gr}_{2,4}(\mathbb R)$, which contains only the above mentioned singular points on the base. Of course, $\Sigma$ and $\widetilde\Sigma$ are singular varieties in ${\rm SO}(4,\mathbb R)$ and ${\rm Gr}_{2,4}(\mathbb R)$, so for our purposes it is enough to compute their dimensions at the generic point. The generic matrix $\Psi$ from $\Sigma$ has the following form
\begin{gather*} 
\Psi = \left(
\begin{matrix}
 \psi _{11} & \psi _{12} & 0 & 0 \\
 \psi _{21} & \psi _{22} & \psi _{23} & \psi _{24} \\
 \psi _{31} & \psi _{32} & \psi _{33} & \psi _{34} \\
 0 & 0 & \psi _{43} & \psi _{44}
\end{matrix}
\right).
\end{gather*}
Moreover, $\Psi$ being an orthogonal matrix the dimension of the surface~$\Sigma$ at the generic point is equal to~$3$: it is enough to compute the dimension of the tangent space to it at the unit matrix; the tangent space of~${\rm SO}(4,\mathbb R)$ at the unit consists of antisymmetric matrices, so summing the conditions on the matrix elements we come to the above mentioned conclusion about the dimension of~$\Sigma$.

Thus, the dimension $\widetilde\Sigma$ (at the generic point) is equal to $1$. On the other hand $\widetilde\Sigma$ is inva\-riant with respect to the Toda f\/ield on ${\rm Gr}_{2,4}(\mathbb R)$; since it is $1$-dimensional, it should consist of a~f\/inite number of trajectories connecting singular points in it. Thus, the points $(\lambda, \lambda, \mu, \mu)$ and $(\lambda, \mu, \lambda, \mu)$ must be connected by a discrete set of trajectories.

\subsection{The general case}\label{ss:fin:gc}
Finally, let us consider the most general distribution of the eigenvalues of $L$; let us f\/ix the multi-index $I=(i_1,i_2,\dots,i_k)$ such that $0<i_j$, $i_1+\dots+i_k=n$. As we have observed earlier, the set of all real symmetric matrices $L$ with the given set $\Lambda$ of eigenvalues $\lambda_1<\lambda_2<\dots<\lambda_k$, such that $\lambda_j$ has multiplicity~$i_j$, is homeomorphic to the f\/lag space ${\rm Fl}_I(\mathbb R)$ (see Section~\ref{ss:bru:fs&gr}). We use the same symbol $\Lambda$ for a diagonal matrix with the set of eigenvalues equal to $\Lambda$.

As we have shown earlier (see Section~\ref{ss:tod:pfl}) the FS Toda lattice on the space of symmetric matrices conjugate with~$\Lambda$ is induced by a~gradient vector f\/ield $\xi_I$ on the partial f\/lag space~${\rm Fl}_I(\mathbb R)$: the image of the usual Toda vector f\/ield~$\xi$ on the full f\/lag variety ${\rm Fl}_n(\mathbb R)$. Moreover, the potential function generating this f\/ield is a~Morse function; its singular points correspond to those f\/ibres of the natural projection ${\rm Fl}_n(\mathbb R)\to {\rm Fl}_I(\mathbb R)$ on which the Toda vector f\/ield vanishes identically (singular f\/ibres with respect to the Toda potential function on ${\rm Fl}_n(\mathbb R)$). As one knows, in this case the trajectories connect the singular points of the f\/ield when $t\to\pm\infty$. Our goal is to describe the order in which these points are connected: we shall show that this order is the same as the (strong) Bruhat order.

To this end recall (see Section~\ref{ss:tod:pfl} and~\cite{CSS}) that the singular points of the Toda vector f\/ield on the full f\/lag space~${\rm Fl}_n(\mathbb R)$ with the Lax matrix, whose eigenvalues are all distinct, correspond to the permutation matrices in~${\rm SO}(n,\mathbb R)$, i.e., matrices which permute the basis vectors in~$\mathbb R^n$ and if necessary, multiply them by~$-1$ (to make sure the determinant is positive). In order to understand the structure of the singularities in ${\rm Fl}_I(\mathbb R)$, it is convenient to look at the bundle
\begin{gather*}
{\rm SO}(n,\mathbb R)\to {\rm Fl}_I(\mathbb R)
\end{gather*}
with the f\/ibres isomorphic to $G={\rm SO}(n,\mathbb R)\bigcap({\rm O}(i_1,\mathbb R)\times\dots\times {\rm O}(i_k,\mathbb R))$. The vector f\/ield $\xi$ can be raised to a vector f\/ield on the orthogonal group with singular points at the matrices~$A_w$. The f\/ibre through a permutation matrix~$A_w$, which is equal to the conjugation of~$G$ by~$A_w$, will contain all permutations of the form~$wuw^{-1}$, where $u\in S_{i_1}\times\dots\times S_{i_k}$. Comparing this description with Section~\ref{ss:bru:p&pfl} we conclude that the following proposition holds:
\begin{predl}\label{prop:sing}
The singular points of the Morse field induced on ${\rm Fl}_I(\mathbb R)$ by the FS Toda lattice with eigenvalues $\Lambda$ $($of multiplicities~$I)$ are indexed by the multiset permutations~$S_n^I$.
\end{predl}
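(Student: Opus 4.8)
The plan is to deduce the statement by combining the Morse--Bott reduction of Section~\ref{ss:tod:mbott} with an explicit description of where the raised Toda field vanishes, and then to match that vanishing locus with the combinatorial data of Section~\ref{ss:bru:p&pfl}. Throughout I write $G={\rm SO}(n,\mathbb R)\cap({\rm O}(i_1,\mathbb R)\times\dots\times {\rm O}(i_k,\mathbb R))$ for the fibre of $\pi\colon {\rm SO}(n,\mathbb R)\to {\rm SO}(n,\mathbb R)/G={\rm Fl}_I(\mathbb R)$, and I work with the raised field $M$ on ${\rm SO}(n,\mathbb R)$ from~\eqref{eq:todafield}, whose $\pi$-projection is the induced field $\xi_I=\widetilde M$. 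The first step is to record, using Proposition~\ref{refgrad} and the $G$-equivariance of $M=\operatorname{grad}F$, that a point $x_0\in {\rm Fl}_I(\mathbb R)$ is singular for $\xi_I$ \emph{if and only if} the whole fibre $\pi^{-1}(x_0)$ is a critical $G$-orbit of $F$. Indeed $M$ is orthogonal to the $G$-orbits (as in the proof of Proposition~\ref{refgrad}), so $d\pi(M)=0$ is equivalent to $M=0$; and since $F$ and the metric are $G$-invariant, $M$ vanishes either on all of a fibre or on none of it. Thus it suffices to enumerate the singular fibres.

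Next I would identify these critical points concretely. From~\eqref{eq:todafield} we have $M(\Psi)=0$ exactly when $\big(\Psi\Lambda\Psi^{-1}\big)_+=\big(\Psi\Lambda\Psi^{-1}\big)_-$; because $L:=\Psi\Lambda\Psi^{-1}$ is symmetric its strictly lower part is the transpose of its strictly upper part, so this equality forces both to vanish, i.e.\ $M(\Psi)=0$ if and only if $D:=\Psi\Lambda\Psi^{-1}$ is diagonal. Such a $D$ is similar to $\Lambda$ and diagonal, hence its diagonal is a rearrangement of that of $\Lambda$, displaying each eigenvalue $\lambda_p$ with its multiplicity $i_p$. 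Recording for each position $j$ the index of the cluster containing $D_{jj}$ produces a surjection $\tau\colon\{1,\dots,n\}\to\{1,\dots,k\}$ with $|\tau^{-1}(p)|=i_p$, that is, an element $\tau\in S_n^I$ in the sense of Definition~\ref{def:repet}.

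It then remains to check that $\Psi\mapsto\tau$ descends to a bijection between singular fibres and $S_n^I$. For a fixed admissible diagonal $D_\tau$ I claim the locus $\{\Psi\in {\rm SO}(n,\mathbb R)\mid \Psi\Lambda\Psi^{-1}=D_\tau\}$ is exactly one right $G$-coset: if $\Psi\Lambda\Psi^{-1}=\Psi'\Lambda\Psi'^{-1}=D_\tau$ then $\Psi'^{-1}\Psi$ commutes with $\Lambda$ and so lies in $\operatorname{Stab}_{{\rm SO}(n,\mathbb R)}(\Lambda)=G$ (block-diagonal orthogonal matrices), while conversely $\Psi g$ produces the same $D_\tau$ for every $g\in G$ since $g\Lambda g^{-1}=\Lambda$. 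Hence this locus is one fibre of $\pi$, distinct $\tau$ give disjoint fibres, and $\{M=0\}$ is the disjoint union of these cosets over $\tau\in S_n^I$. Composing with the reduction of the first step proves the Proposition. To reconcile this with the framework of Section~\ref{ss:bru:p&pfl}, note that the fibre indexed by $\tau$ is the one through the permutation matrix $A_w$ for any $w$ with $\tau_1^*(w)=\tau$, because $A_w\Lambda A_w^{-1}=D_\tau$; the permutations $w$ realizing a given $\tau$ are precisely the coset $(\tau_1^*)^{-1}(\tau)$, recovering the identification used there.

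I expect the one genuinely content-bearing point to be the bijectivity in the third step: verifying that the solution locus of $\Psi\Lambda\Psi^{-1}=D_\tau$ is a \emph{full} $G$-coset rather than merely contained in one. This is exactly where the multiplicities $i_p$ and the block structure of $G=\operatorname{Stab}_{{\rm SO}(n,\mathbb R)}(\Lambda)$ enter, and it is what upgrades the naive ``one permutation matrix per arrangement of eigenvalues'' count to a genuine indexing of \emph{all} singular fibres by $S_n^I$; the companion non-degeneracy of the Hessian transverse to these fibres has already been arranged in the discussion preceding Proposition~\ref{morsepartflag} (compare~\eqref{eqhessian}), so no further analytic input is needed.
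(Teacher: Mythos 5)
Your proposal is correct, and its skeleton is the same as the paper's: lift the problem to ${\rm SO}(n,\mathbb R)$, identify the singular fibres of $\pi\colon {\rm SO}(n,\mathbb R)\to {\rm Fl}_I(\mathbb R)$, and match them with the clusters $(\tau_1^*)^{-1}(\tau)$ of Section~\ref{ss:bru:p&pfl}, which biject with $S_n^I$. The paper's own argument is terser: it quotes from~\cite{CSS} that the lifted field has singular points at the permutation matrices $A_w$, notes that the fibre through $A_w$ collects exactly the permutation matrices of one cluster $w\cdot(S_{i_1}\times\dots\times S_{i_k})$, and concludes by the cluster--$S_n^I$ correspondence. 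What your version adds, and where it is genuinely more complete, is the intrinsic characterization $\{M=0\}=\{\Psi\,:\,\Psi\Lambda\Psi^{-1}\ \text{is diagonal}\}$ read off from~\eqref{eq:todafield}, together with the check that the solution locus of $\Psi\Lambda\Psi^{-1}=D_\tau$ is a \emph{full} right $G$-coset via $\operatorname{Stab}_{{\rm SO}(n,\mathbb R)}(\Lambda)=G$; this establishes exhaustiveness of the indexing---that with repeated eigenvalues the singular set upstairs consists of \emph{whole} fibres and that no singular fibres other than those through permutation matrices occur---a point the paper leaves implicit by leaning on the distinct-eigenvalue classification. Your Step~1 (that $d\pi_\Psi(M)=0$ forces $M(\Psi)=0$ because $\operatorname{grad}F$ is orthogonal to the $G$-orbits, which are the fibres of $\pi$) is also the right way to make the reduction precise. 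One cosmetic remark: in Step~2 the symmetry of $L$ is not needed to conclude that $L_+-L_-=0$ forces $L_+=L_-=0$; the strictly upper and strictly lower parts have disjoint supports, so the equality is automatic for any matrix.
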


{\sloppy
Now it is our purpose to describe the trajectories connecting dif\/ferent singular points in~${\rm Fl}_I (\mathbb R)$. We shall prove the following statement:

}

\begin{theor}\label{prop:traject1}
Let $\psi, \omega\in S_n^I$ be two multiset permutations. Then the corresponding singular points of the FS Toda lattice in ${\rm Fl}_I(\mathbb R)$ will be connected by a trajectory, if and only if $\psi\prec\omega$ in the Bruhat order on $S_n^I$ $($see Section~{\rm \ref{ss:bru:p&pfl})}. Moreover, the dimension of the subset swept by the trajectories connecting these points is equal to the length of the path in the Hasse diagram of~$S_n^I$.
\end{theor}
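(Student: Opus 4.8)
The plan is to reduce the statement to the combinatorics of Schubert cells assembled in Section~\ref{ss:bru:schucells}, by realizing the connecting trajectories as transversal intersections of stable and unstable manifolds. The starting point is the full flag case of~\cite{CSS}: there the Toda flow on ${\rm Fl}_n(\mathbb R)$ is the gradient flow of the Morse function~$F$ of~\eqref{eq:gradfun}, its fixed points are the $[A_w]$, $w\in S_n$, and (up to the orientation convention) the unstable manifold of $[A_w]$ is the Schubert cell $X_w$ while its stable manifold is the dual Schubert cell $X^\vee_w$; hence two fixed points are joined by a trajectory precisely when the permutations are Bruhat-comparable, by Proposition~\ref{prop:dualschu}. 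I would transport this picture to ${\rm Fl}_I(\mathbb R)$ along the projection~$\pi$ of~\eqref{eq:projflag}, which by Section~\ref{ss:tod:pfl} intertwines the two gradient flows and carries $[A_w]$ to the fixed point $p_{\tau_1^*(w)}$ indexed by the multiset permutation $\tau_1^*(w)\in S_n^I$ (Proposition~\ref{prop:sing}).

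\textbf{Identifying the invariant manifolds downstairs.} By Proposition~\ref{morsepartflag} the induced potential $\widetilde F$ is genuinely Morse on ${\rm Fl}_I(\mathbb R)$, and the Hessian formula~\eqref{eqhessian}, read on the horizontal directions (the within-block pairs $i<j$ contributing the degenerate vertical directions), shows that the Morse index of $p_\omega$ equals the length $l(\omega)$ of $\omega\in S_n^I$. On the other hand, as recalled in Section~\ref{ss:bru:schucells}, the cell $X_\omega\subset {\rm Fl}_I(\mathbb R)$ is the homeomorphic image under~$\pi$ of the full flag cell $X_{s_\omega}$ of the least (deshuffle) representative $s_\omega$, so that $\dim X_\omega=l(s_\omega)=l(\omega)$, and dually $X^\wedge_\omega=\pi(X^\vee_{w'})$ for the greatest representative~$w'$, of complementary dimension $\dim {\rm Fl}_I(\mathbb R)-l(\omega)$. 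Since $\pi$ is a submersion intertwining the flows, the image of an unstable (resp.\ stable) manifold upstairs is an invariant subset downstairs of the matching dimension and index; comparing dimensions I would then conclude $W^u(p_\omega)=X_\omega$ and $W^s(p_\omega)=X^\wedge_\omega$.

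\textbf{Nonemptiness and dimension.} With these identifications the theorem becomes a statement about cells: two fixed points are joined by a trajectory if and only if the unstable manifold of one meets the stable manifold of the other, that is, if and only if a Schubert cell meets a dual Schubert cell. By Proposition~\ref{prop:dualcells} this happens exactly when the two multiset permutations are Bruhat-comparable, which gives the first assertion. The same proposition guarantees that the intersection is transversal---precisely the Morse--Smale condition in the present realization---so the swept subset $W^u(p_\omega)\cap W^s(p_\psi)$ has dimension $\dim X_\omega+\dim X^\wedge_\psi-\dim {\rm Fl}_I(\mathbb R)=l(\omega)-l(\psi)$. Finally, by Proposition~\ref{prop:Brurep2} and Lemma~\ref{Bruhatlem2} every covering relation in $S_n^I$ raises the length by exactly one, so $l(\omega)-l(\psi)$ is the number of edges of any saturated chain from $\psi$ to $\omega$, i.e.\ the length of the path in the Hasse diagram, which is the second assertion.

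\textbf{Main obstacle.} The hard part will be the identification of Paragraph~2: proving that the projected cells are genuinely the stable and unstable manifolds of $\widetilde F$, and not merely related to them, despite the fibrewise collapse of~$\pi$. A whole cluster $(\tau_1^*)^{-1}(\omega)$ of fixed points upstairs is sent to the single point $p_\omega$, and the vertical Toda flow inside each fibre $G={\rm SO}(n,\mathbb R)\cap({\rm O}(i_1,\mathbb R)\times\dots\times {\rm O}(i_k,\mathbb R))$ is itself a product of lower-dimensional Toda systems; one must check that this collapse neither creates spurious trajectories nor lowers the dimension of the connecting sets. Concretely, I expect the crux to be verifying that $\pi$ restricts to a homeomorphism on the deshuffle cell $X_{s_\omega}$ (and on $X^\vee_{w'}$ for the shuffle), equivalently that these cells meet each fibre of~$\pi$ in exactly one point---this is exactly where the extremal choice of representatives from Section~\ref{ss:bru:p&pfl} is needed---together with the Bott--Morse nondegeneracy of Proposition~\ref{morseref}, which ensures that the horizontal Hessian controls the flow transverse to the critical fibres so that no degeneracy is introduced on passing to the quotient. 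Once the invariant manifolds are correctly identified, the orientation convention (which extremal element is the source) is pinned down by the worked examples of Sections~\ref{ss:fin:ex} and~\ref{ss:fin:gr24}.
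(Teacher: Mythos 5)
Your proposal is correct and follows essentially the same route as the paper's proof: both push the invariant-manifold picture of the full flag Toda flow down along $\pi$, identify the stable and unstable sets of $\xi_I$ with the Schubert and dual Schubert cells via the extremal (deshuffle and greatest) representatives of Section~\ref{ss:bru:schucells}, and conclude from Proposition~\ref{prop:dualcells} with transversality giving the dimension count. The only differences are expository: you spell out the index/dimension matching and the fibrewise-collapse check that the paper compresses into citations of Section~\ref{ss:bru:schucells} and~\cite{CSS}, and your stable/unstable naming follows the standard convention where the paper's parenthetical definitions swap the terms.
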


\begin{proof}First of all observe that the projection ${\rm Fl}_n(\mathbb R)\to {\rm Fl}_I(\mathbb R)$ maps the Toda f\/ield $\xi$ to the f\/ield $\xi_I$; hence it sends invariant subvarieties in the full f\/lag space to invariant subsets in ${\rm Fl}_I(\mathbb R)$. In particular, it means that the Schubert cells in ${\rm Fl}_I(\mathbb R)$ are preserved by the generalized Toda f\/low since the Schubert cells in the full f\/lags are. Moreover, since the Schubert cells in ${\rm Fl}_n(\mathbb R)$ coincide with the \textit{unstable} subspace of~$\xi$ (i.e., the possibly singular subspace in $M$ swept by the trajectories tending to the given singular point of the gradient system) we conclude that their images in the partial f\/lags coincide with the stable subspaces of $\xi_I$: this follows for example from the fact that the cells in ${\rm Fl}_I(\mathbb R)$ are the homeomorphic images of \textit{the minimal} cells in~${\rm Fl}_n(\mathbb R)$, see the end of Section~\ref{ss:bru:schucells} (also compare formula~\eqref{eqhessian}).

Similarly, the \textit{stable} submanifolds of $\xi_I$ (i.e., the subsets spanned by the outgoing trajectories of~$\xi_I$) in~${\rm Fl}_I(\mathbb R)$ coincide with the dual Schubert cells in this space since this is true for the dual cells in~${\rm Fl}_n(\mathbb R)$ (this was proved in~\cite{CSS}). However, we know (see Section~\ref{ss:bru:schucells} again) that the Schubert cell and dual Schubert cell in the f\/lag space intersect if and only if the corresponding multiset permutations are comparable in the Bruhat order. The statement about the dimensions follows from the transversality of these intersections.
\end{proof}

We conclude by the simple observation concerning the picture that emerges on the level of the Lax matrices:
\begin{cor}\label{cor:fin}
The Toda flow $t\to L(t)$ on symmetric matrices converges to a diagonal matrix when $t\to\pm\infty$ $($the set of eigenvalues of these matrices is fixed$)$. Two such matrices are connected by a trajectory if and only if the corresponding permutations of the eigenvalues are comparable with respect to the Bruhat order on permutations $($or permutations with repetitions if there are multiple eigenvalues$)$.
\end{cor}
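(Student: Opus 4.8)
The plan is to reduce everything to the already-understood full flag case via the projection $\pi\colon {\rm Fl}_n(\mathbb R)\to {\rm Fl}_I(\mathbb R)$ of \eqref{eq:projflag}, and then to translate ``connected by a trajectory'' into an intersection statement about Schubert cells to which Proposition~\ref{prop:dualcells} applies. First I would record the basic dynamical fact underlying the whole argument: since $\pi$ is a submersion intertwining the Toda field $\xi$ on ${\rm Fl}_n(\mathbb R)$ with its induced field $\xi_I$ on ${\rm Fl}_I(\mathbb R)$ (by construction, Section~\ref{ss:tod:pfl}), it carries integral curves to integral curves and invariant sets to invariant sets. In particular, for the gradient flow $\xi_I$ two singular points are connected by a trajectory if and only if the unstable manifold $W^u$ of one meets the stable manifold $W^s$ of the other; the theorem will follow once these manifolds are identified with the cells of Proposition~\ref{prop:dualcells}.

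The key step is therefore to show that in ${\rm Fl}_I(\mathbb R)$ the Schubert cell $X_\omega$ is exactly the set of trajectories tending to the singular point indexed by $\omega$, while the dual Schubert cell $X^\wedge_\omega$ is exactly the set of trajectories emanating from it. Upstairs this is precisely the content of \cite{CSS}: the Schubert cells of ${\rm Fl}_n(\mathbb R)$ are the stable subspaces of $\xi$ and the dual cells are the unstable ones. I would transfer this downstairs using two facts established earlier: the Schubert (resp.\ dual Schubert) cells in ${\rm Fl}_I(\mathbb R)$ are the $\pi$-images of the cells of the \emph{minimal} (resp.\ \emph{maximal}) representatives in the cosets $(\tau_1^*)^{-1}(\omega)$ (end of Section~\ref{ss:bru:schucells}), and the singular points downstairs are exactly the $\pi$-images of the singular fibres upstairs (Proposition~\ref{prop:sing}). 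Because $F$ is a Morse--Bott function on ${\rm SO}(n,\mathbb R)$ inducing a genuine Morse function $\widetilde F$ on ${\rm Fl}_I(\mathbb R)$ (Proposition~\ref{morsepartflag} and the remark closing Section~\ref{ss:tod:pfl}), each singular fibre collapses to a nondegenerate critical point, and the image of its stable (resp.\ unstable) set is the stable (resp.\ unstable) manifold of $\xi_I$; matching dimensions with the cell of the minimal/maximal representative then yields the identifications $X_\omega=W^s(\omega)$ and $X^\wedge_\omega=W^u(\omega)$.

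With these identifications in hand, a trajectory connects the singular points $\psi$ and $\omega$ if and only if $X_\psi\cap X^\wedge_\omega\neq\varnothing$ (for one of the two orderings of the pair), which by Proposition~\ref{prop:dualcells} happens exactly when $\psi$ and $\omega$ are comparable in the Bruhat order on $S_n^I$; this gives the ``if and only if''. For the dimension count I would use the transversality clause of Proposition~\ref{prop:dualcells}: the subset swept by the connecting trajectories is the open intersection $X_\psi\cap X^\wedge_\omega$, whose dimension is $l_\omega-l_\psi$ by that transversality. Since the Bruhat order on $S_n^I$ is graded by the length $l$ (every covering relation raises $l$ by one, as follows from Lemma~\ref{Bruhatlem2} and Proposition~\ref{prop:Brurep2}), this difference equals the number of edges in any saturated chain from $\psi$ to $\omega$, i.e.\ the length of the path in the Hasse diagram.

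The main obstacle I expect is the middle step: rigorously justifying that the $\pi$-image of a stable/unstable set of a critical fibre upstairs is the stable/unstable manifold of the corresponding critical point downstairs. This is where the Morse--Bott degeneracy of $F$ on ${\rm SO}(n,\mathbb R)$ must be reconciled with the nondegeneracy of $\widetilde F$ on ${\rm Fl}_I(\mathbb R)$ (Proposition~\ref{morseref}), and where one must check that passing to the quotient neither creates nor destroys connecting trajectories --- equivalently, that the transversality of the cell intersections downstairs is inherited from that upstairs through the submersion $\pi$. Once this compatibility of the gradient dynamics with the fibration is secured, the remaining arguments are purely combinatorial and rest on the results of Sections~\ref{ss:bru:p&pfl} and~\ref{ss:bru:schucells}.
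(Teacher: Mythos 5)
Your proposal is correct and takes essentially the same route as the paper: Theorem~\ref{prop:traject1} is proved there exactly by projecting the Toda field along $\pi$, identifying the Schubert and dual Schubert cells of ${\rm Fl}_I(\mathbb R)$ with the sets of incoming and outgoing trajectories (via the results of \cite{CSS} upstairs and the minimal/maximal coset representatives at the end of Section~\ref{ss:bru:schucells}), and then invoking Proposition~\ref{prop:dualcells} together with its transversality clause for the dimension count. Corollary~\ref{cor:fin} itself is then, just as in your argument, only the reformulation of this on the level of Lax matrices, the convergence to diagonal matrices being the standard behaviour of trajectories of the gradient flow of a Morse function on a compact manifold.
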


\section{Further observations}
\looseness=-1 As we have shown above (see Section~\ref{s:tod}), the FS Toda lattice on symmetric matrices with non-distinct eigenvalues can be regarded as a dynamical system on partial f\/lag manifolds. It is interesting that although the system on these spaces can be described as the image of the FS Toda lattice, the usual invariants of the FS Toda system do not descend easily to the f\/lag spaces: direct computations show that they can become constants or functionally-dependent. So the question is whether one can still f\/ind new invariants to prove Liuoville integrability of such systems (of course, one should f\/irst make up explicit def\/initions of the Poisson structures used there).

One can begin with studying a few low-dimensional cases, f\/irst of all those which correspond to the projective spaces (see Section~\ref{ss:fin:ex}). Already in the simplest case $n=3$ (and when two eigenvalues coincide), that is for the system on $\mathbb RP^2$, we obtain a new integral of motion:
\begin{gather*} 
I_{(RP^{2}, \lambda_{1}=\lambda_{2})}= \frac{1}{(\mu -\lambda)}\frac{\psi_{23}^{2}}{\psi_{13}^{2}\psi_{33}^{2}}.
\end{gather*}
Here $\lambda_1=\lambda_2=\lambda$ and $\lambda_3=\mu$, $\lambda<\mu$ are the eigenvalues. In terms of the matrix entries~$a_{ij}$ of the Lax matrix $L$ this function can be rewritten in the following form:
\begin{gather*} 
I_{(RP^{2}, \lambda_{1}=\lambda_{2})}(a_{ij})= \frac{a_{12}a_{23}}{a_{13}^{3}}.
\end{gather*}
On the other hand, one can show that in this case the chopping procedure (see~\cite{DLNT} for example) does not give any integrals which are dif\/ferent from the traces of the powers of the Lax matrix. Thus, this invariant is a new phenomenon, which makes the whole picture quite intriguing. Similar integrals can be found in the case of projective spaces in higher dimensions. These questions will be the subject of our further papers.

\subsection*{Acknowledgments}
The authors would like to thank G.~Koshevoy for the fruitful discussion. We also would like to thank the referees, whose remarks helped in a great measure to improve the paper. The work of Yu.B.~Chernyakov was supported by grant RFBR-15-01-08462. The work of G.I.~Sharygin was supported by grant RFBR-15-01-05990. The work of A.S.~Sorin was partially supported by RFBR grants 15-52-05022-Arm-a and 16-52-12012-NNIO-a.

\pdfbookmark[1]{References}{ref}
\LastPageEnding


\begin{thebibliography}{99}
\footnotesize\itemsep=0pt

\bibitem{Ad}
Adler M., On a trace functional for formal pseudo dif\/ferential operators and
 the symplectic structure of the {K}orteweg--de {V}ries type equations,
 \href{http://dx.doi.org/10.1007/BF01410079}{\textit{Invent. Math.}} \textbf{50} (1978), 219--248.

\bibitem{A}
Arhangel'ski{\u\i} A.A., Completely integrable {H}amiltonian systems on the
 group of triangular matrices, \href{http://dx.doi.org/10.1070/SM1980v036n01ABEH001778}{\textit{Math. USSR Sb.}} \textbf{36} (1980),
 127--134.

\bibitem{BB}
Bj{\"o}rner A., Brenti F., Combinatorics of {C}oxeter groups, \href{http://dx.doi.org/10.1007/3-540-27596-7}{\textit{Graduate
 Texts in Mathematics}}, Vol.~231, Springer, New York, 2005.

\bibitem{Br}
Brion M., Lectures on the geometry of f\/lag varieties, \href{http://arxiv.org/abs/math.AG/0410240}{math.AG/0410240}.

\bibitem{CSS}
Chernyakov Yu.B., Sharygin G.I., Sorin A.S., Bruhat order in full symmetric
 {T}oda system, \href{http://dx.doi.org/10.1007/s00220-014-2035-8}{\textit{Comm. Math. Phys.}} \textbf{330} (2014), 367--399,
 \href{http://arxiv.org/abs/1212.4803}{arXiv:1212.4803}.

\bibitem{CS}
Chernyakov Yu.B., Sorin A.S., Explicit semi-invariants and integrals of the full
 symmetric {$\mathfrak{sl}_n$} {T}oda lattice, \href{http://dx.doi.org/10.1007/s11005-014-0698-x}{\textit{Lett. Math. Phys.}}
 \textbf{104} (2014), 1045--1052, \href{http://arxiv.org/abs/1306.1647}{arXiv:1306.1647}.

\bibitem{dMP}
De~Mari F., Pedroni M., Toda f\/lows and real {H}essenberg manifolds,
 \href{http://dx.doi.org/10.1007/BF02921975}{\textit{J.~Geom. Anal.}} \textbf{9} (1999), 607--625.

\bibitem{DLNT}
Deift P., Li L.C., Nanda T., Tomei C., The {T}oda f\/low on a generic orbit is
 integrable, \href{http://dx.doi.org/10.1002/cpa.3160390203}{\textit{Comm. Pure Appl. Math.}} \textbf{39} (1986), 183--232.

\bibitem{DNT}
Deift P., Nanda T., Tomei C., Ordinary dif\/ferential equations and the symmetric
 eigenvalue problem, \href{http://dx.doi.org/10.1137/0720001}{\textit{SIAM~J. Numer. Anal.}} \textbf{20} (1983), 1--22.

\bibitem{EFS}
Ercolani N.M., Flaschka H., Singer S., The geometry of the full
 {K}ostant--{T}oda lattice, in Integrable Systems ({L}uminy, 1991),
 \href{http://dx.doi.org/10.1007/978-1-4612-0315-5_9}{\textit{Progr. Math.}}, Vol.~115, Birkh\"auser Boston, Boston, MA, 1993,
 181--225.

\bibitem{F1}
Flaschka H., The {T}oda lattice. {I}.~{E}xistence of integrals, \href{http://dx.doi.org/10.1103/PhysRevB.9.1924}{\textit{Phys.
 Rev.~B}} \textbf{9} (1974), 1924--1925.

\bibitem{F2}
Flaschka H., On the {T}oda lattice. {II}.~{I}nverse-scattering solution,
 \href{http://dx.doi.org/10.1143/PTP.51.703}{\textit{Progr. Theoret. Phys.}} \textbf{51} (1974), 703--716.

\bibitem{FS}
Fr{\'e} P., Sorin A.S., The {W}eyl group and asymptotics: all supergravity
 billiards have a closed form general integral, \href{http://dx.doi.org/10.1016/j.nuclphysb.2009.01.011}{\textit{Nuclear Phys.~B}}
 \textbf{815} (2009), 430--494, \href{http://arxiv.org/abs/0710.1059}{arXiv:0710.1059}.

\bibitem{F}
Fulton W., Young tableaux. With applications to representation theory and
 geometry, \textit{London Mathematical Society Student Texts}, Vol.~35,
 Cambridge University Press, Cambridge, 1997.

\bibitem{GH}
Grif\/f\/iths P., Harris J., Principles of algebraic geometry, \textit{Pure and Applied
 Mathematics}, Wiley-Interscience, New York, 1978.

\bibitem{H}
H{\'e}non M., Integrals of the {T}oda lattice, \href{http://dx.doi.org/10.1103/PhysRevB.9.1921}{\textit{Phys. Rev.~B}} \textbf{9}
 (1974), 1921--1923.

\bibitem{KM}
Kodama Y., McLaughlin K.T.-R., Explicit integration of the full symmetric {T}oda
 hierarchy and the sorting property, \href{http://dx.doi.org/10.1007/BF00400137}{\textit{Lett. Math. Phys.}} \textbf{37}
 (1996), 37--47, \href{http://arxiv.org/abs/solv-int/9502006}{solv-int/9502006}.

\bibitem{KW}
Kodama Y., Williams L., The full {K}ostant--{T}oda hierarchy on the positive
 f\/lag variety, \href{http://dx.doi.org/10.1007/s00220-014-2203-x}{\textit{Comm. Math. Phys.}} \textbf{335} (2015), 247--283,
 \href{http://arxiv.org/abs/1308.5011}{arXiv:1308.5011}.

\bibitem{K1}
Kostant B., The solution to a generalized {T}oda lattice and representation
 theory, \href{http://dx.doi.org/10.1016/0001-8708(79)90057-4}{\textit{Adv. Math.}} \textbf{34} (1979), 195--338.

\bibitem{Mos}
Moser J., Finitely many mass points on the line under the inf\/luence of an
 exponential potential~-- an integrable system, in Dynamical Systems, Theory and
 Applications ({R}encontres, {B}attelle {R}es. {I}nst., {S}eattle, {W}ash.,
 1974), \href{http://dx.doi.org/10.1007/3-540-07171-7_12}{\textit{Lecture Notes in Phys.}}, Vol.~38, Springer, Berlin, 1975,
 467--497.

\bibitem{S}
Symes W.W., Systems of {T}oda type, inverse spectral problems, and
 representation theory, \href{http://dx.doi.org/10.1007/BF01390312}{\textit{Invent. Math.}} \textbf{59} (1980), 13--51.

\bibitem{T1}
Toda M., Vibration of a chain with nonlinear interaction, \href{http://dx.doi.org/10.1143/JPSJ.22.431}{\textit{J.~Phys. Soc.
 Japan}} \textbf{22} (1967), 431--436.

\bibitem{T2}
Toda M., Wave propagation in anharmonic lattices, \href{http://dx.doi.org/10.1143/JPSJ.23.501}{\textit{J.~Phys. Soc. Japan}}
 \textbf{23} (1967), 501--506.

\end{thebibliography}
\end{document}